\tikzstyle{block}=[draw opacity=0.7,line width=1.4cm]
\definecolor{CranJ}{cmyk}{0,0.69,0.54,0.04} %cranberry jello
\definecolor{PinkJ}{cmyk}{0,0.71,0.43,0.12} %pink jeep
\definecolor{Cran}{cmyk}{0,0.73,0.41,0.29} %cranberry 
\definecolor{VRed}{cmyk}{0,0.75,0.25,0.2} %violetred
\definecolor{ORed}{cmyk}{0,0.75,0.75,0} %orangered4
\definecolor{CBlue}{cmyk}{1,0.25,0,0} %curacao	
\title{\LARGE \bf Dynamic Average Consensus in the Presence of Communication Delay Over Directed Graph Topologies}
\author{Hossein Moradian and Solmaz S. Kia %
  \thanks{The authors are with the Department of Mechanical and Aerospace Engineering, University of California Irvine, Irvine, CA 92697,  
    {\tt\small \{hmoradia,solmaz\}@uci.edu}}%
}
\newcommand{\ee}{\operatorname{e}}
\newcommand{\conj}{\operatorname{conj}}
\newcommand{\ii}{\operatorname{i}}
\newcommand{\intor}{\operatorname{int}}
\newcommand{\extor}{\operatorname{ext}}
\newcommand{\real}{{\mathbb{R}}}
\newcommand{\integer}{{\mathbb{Z}}}
\newcommand{\complex}{{\mathbb{C}}} 
\newcommand{\realpositive}{{\mathbb{R}}_{>0}}
\newcommand{\rank}{\operatorname{rank}}
\newcommand{\re}[1]{\operatorname{Re}(#1)}
\newcommand{\im}[1]{\operatorname{Im}(#1)}
\newcommand{\until}[1]{\in\{1,\dots,#1\}}
\newcommand{\vect}[1]{\boldsymbol{\mathbf{#1}}}
\newcommand{\vectsf}[1]{\vect{\mathsf{#1}}}
\newcommand{\dvect}[1]{\dot{\vect{#1}}}
\newcommand{\solmaz}[1]{{\color{red}#1}}
\newtheorem{thm}{Theorem}[section]
\newtheorem{rem}{Remark}[section]
\newtheorem{cor}{Corollary}[section]
\newtheorem{lem}{Lemma}[section]
\newcommand{\oprocendsymbol}{\hbox{$\bullet$}}
\newcommand{\oprocend}{\relax\ifmmode\else\unskip\hfill\fi\oprocendsymbol}
\renewcommand*{\@opargbegintheorem}[3]{\trivlist
      \item[\hskip \labelsep{ #1\ #2}] (#3):\ \itshape}
\begin{document}
\title{{\huge On the Positive Effect of Delay on the Rate of Convergence of a Class of Linear Time-Delayed Systems}}

\author{Hossein Moradian and Solmaz S. Kia%\\(under review at TAC for journal publication, submitted Sept. 2018) %
  \thanks{The authors are with the Department of Mechanical and Aerospace Engineering, University of California Irvine, Irvine, CA 92697,  
    {\tt\small \{hmoradia,solmaz\}@uci.edu}. This work is supported by NSF CAREER grant ECCS 1653838.  }%
}
        % <-this % stops a space

% note the % following the last \IEEEmembership and also \thanks - 
% these prevent an unwanted space from occurring between the last author name
% and the end of the author line. i.e., if you had this:
% 
% \author{....lastname \thanks{...} \thanks{...} }
%                     ^------------^------------^----Do not want these spaces!
%
% a space would be appended to the last name and could cause every name on that
% line to be shifted left slightly. This is one of those "LaTeX things". For
% instance, "\textbf{A} \textbf{B}" will typeset as "A B" not "AB". To get
% "AB" then you have to do: "\textbf{A}\textbf{B}"
% \thanks is no different in this regard, so shield the last } of each \thanks
% that ends a line with a % and do not let a space in before the next \thanks.
% Spaces after \IEEEmembership other than the last one are OK (and needed) as
% you are supposed to have spaces between the names. For what it is worth,
% this is a minor point as most people would not even notice if the said evil
% space somehow managed to creep in.

% The paper headers
\markboth{}%
{}

\maketitle

\begin{abstract}
 This paper is a comprehensive study of a long observed phenomenon of increase in the stability margin and so the rate of convergence of a class of linear systems due to time delay. 
 We use Lambert W function to determine (a) in what systems the delay can lead to increase in the rate of convergence, (b) the exact range of time delay for which the rate of convergence is greater than that of the delay free system, and (c) an estimate on the value of the delay that leads to the maximum rate of convergence. %We also show that the ultimate bound on the maximum achievable rate of convergence via time delay is $\ee$ (Euler's number) times the delay free rate.  
 For the special case when the system matrix eigenvalues are all negative real numbers, we expand our results to show that the rate of convergence in the presence of delay depends only on the eigenvalues with minimum and maximum real parts. Moreover, we determine the exact value of the maximum rate of convergence and the corresponding maximizing time delay. %The final contribution of this paper is to show  analyzing  increase the rate of convergence of 
We demonstrate our results through a numerical example on the practical application in accelerating an agreement algorithm for networked~systems by use of a delayed feedback.

\end{abstract}
\begin{IEEEkeywords}
Linear Time-delayed Systems, %Relative Stability,
Rate of Convergence, Lambert Function, Accelerated Static Average Consensus
\end{IEEEkeywords}

\IEEEpeerreviewmaketitle

%\vspace{-0.35in}
\section{Introduction}
\vspace{-0.05in}
In this paper, we study the effect of a fixed \emph{time delay} $\tau\in\realpositive$ on the rate of convergence of the retarded time-delayed system
\vspace{-0.2in}
\begin{subequations}\label{eq::DDE-sys}
\begin{align}
\dvect{x}(t)&=\vect{A}\vect{x}(t-\tau),\\
\vect{x}(t) &=\vect{\phi}(t),\,\,\,  ~ t\in [-\tau,0],
%\vect{x}(t) &=\vect{x}_{0},  \quad  t=0,\nonumber
\end{align}
\end{subequations}
where $\vect{x}(t)\in\real^n$ is the state variable at time $t$, $\vect{\phi}(t)$ is~a specified pre-shape function and $\vect{A}$ is a Hurwitz matrix. %Literature contains a large body of results on %There is a large body of literature on analyzing the stability of linear time-delay systems and specifying their delay margin~\cite{MB:87,DB-SM-RV:05,SY-PWN-AGU:07a,XGL-SIN-AC:15,XGL-JXC-SIN-AC:18,XGL-JXC-YZ:18,NDH:50,SN:01,RS-WQ:11}. 
% For system~\eqref{eq::DDE-sys}, when the system matrix $\vect{A}$ is Hurwitz, 
For this system, \emph{the continuity stability property} theorem for linear time-delayed systems~\cite[Proposition 3.1]{SN:01}  guarantees the existence of the connected \emph{admissible} range of delay, $[0,\bar{\tau})\subset\real{>0}$, for which the exponential stability is preserved.   
 Moreover, the \emph{critical} value of delay $\bar{\tau}>0$, beyond which the system is unstable, is the smallest value of the time delay for which the rightmost root %, denoted here by $s_\tau^r$, 
 (RMR) of the characteristic equation (CE) of~\eqref{eq::DDE-sys} is on the imaginary axis for the first time. However, as shown in Fig.~\ref{fig::cont_ex}, for some systems the RMR of the CE is not necessarily traversing monotonically towards the right half complex plane as $\tau$ increases. For those systems, contrary to intuition, for certain ranges of delay the rate of convergence is greater than the delay free case (recall that the exact value of the worst convergence rate of system~\eqref{eq::DDE-sys} is determined by the magnitude of the real part of the RMR of its 
CE~\cite{TH-ZL-YS:03,SD-JN-AGU:11}). 

For system~\eqref{eq::DDE-sys}, when $\tau\!=\!0$, the RMR of the CE is the rightmost eigenvalue of $\vect{A}$, and when $\tau\!>\!0$, it can be specified by use of the Lambert W function~\cite{SY-PWN-AGU:07a}. There are also other methods to estimate the rate of convergence of system~\eqref{eq::DDE-sys}~\cite{BL-KS:94,VNP-PN:06,SY-PWN-AGU:07b,WK-MC-DM-MK:09,DB-SM-RV:05}. Despite abundance of literature on determining the convergence rate of linear systems for a given amount of time delay~\cite{SY-PWN-AGU:07b,DB-SM-RV:05,NDH:50,SN:01,BL-KS:94,VNP-PN:06,WK-MC-DM-MK:09,TI-TE-GO:17}, there are very few results that address how the rate of convergence varies with time delay. In this paper, we aim to use analytical analysis of the variation of rate of convergence vs. time delay to investigate (a) in what type of system~\eqref{eq::DDE-sys} the delay can lead to increase in the rate of convergence, (b) the exact range of time delay for which the rate of convergence is greater than that of the delay free system, and (c) an estimate on the value of the delay that leads to the maximum rate of convergence. This study extends our fundamental understanding of the internal dynamics of linear time-delay systems, and is useful in identifying rules that facilitate design of systems with fast response and improved stability margin in the presence of non-zero time delay.  A practical application of our results is in design of accelerated form of the average consensus (agreement) algorithms~\cite{ros-jaf-rmm:07} in network systems, which we demonstrate in Section V. Agreement algorithms in network systems play a crucial role in facilitating many cooperative tasks (see~\cite{ros-jaf-rmm:07,SSK-BVS-JC-RAF-KML-SM:19} for examples), and their fast convergence is always desired. 

\begin{figure}[t]%%[htbp]
  %\unitlength=0.5in
  \begin{center}
\includegraphics[trim={12pt 10pt 0 0},scale=0.55]{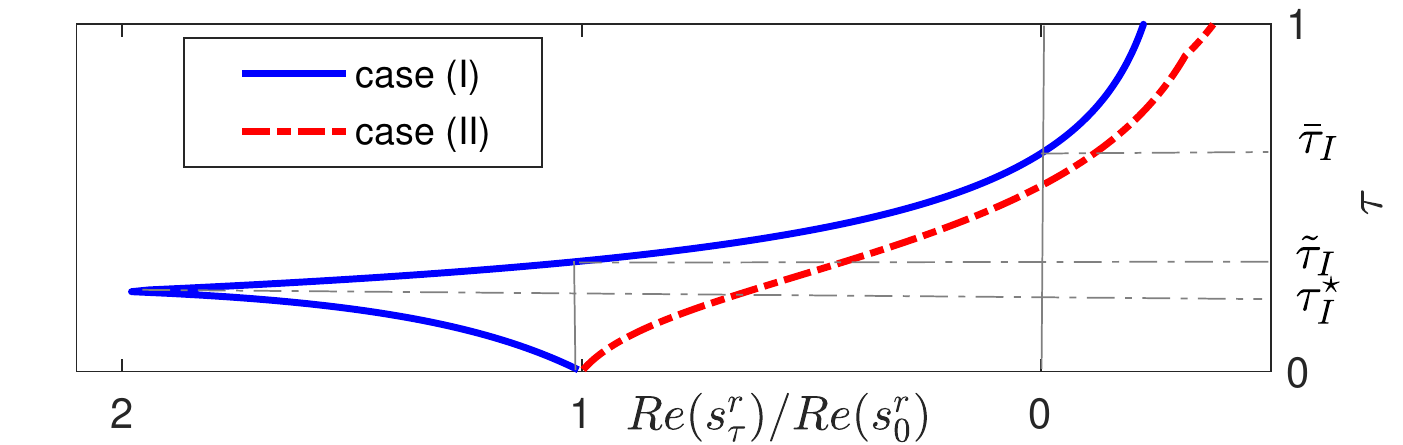}
\end{center}
  %%%%%%%%%%%%%%%%%%%%%% 4 %%%%%%%%%%%%%%%%%%%%%%%%%%%%%%%%%%%
  \caption{\small{
  The normalized real part of the RMR $s_\tau^r$ of the CE for the Laplacian dynamics corresponding to Case (I) and Case (II) in Section~\ref{sec::num_ex} vs. $\tau\!\in\![0,1]$. Since $\re{s_\tau^r}/\re{s_0^r}\!=\!\rho_\tau/\rho_0$, we can see that for the system in Case (I) the rate of convergence can increase with time delay but this is not the case for the system in Case (II).}
  }
 \label{fig::cont_ex}\vspace{-0.2in}
\end{figure}

Increase of stability margin and the rate of convergence of linear systems with delay has been observed in the literature~\cite{BG-SM-MHS:98,MC-DAS-EMY:06,YC-WR:10,ZM-YC-WR:10,WQ-RS:13}. However, the mathematics behind this phenomenon is not fully understood. This is due to the technical challenges that emanates from the fact that the CE of linear time delayed systems is transcendental and have an infinite number of roots in the complex plane. For system~\eqref{eq::DDE-sys},~\cite{WQ-RS:13} offers a set of interesting insights into the problem. The main result of~\cite{WQ-RS:13} states that if \emph{all} the eigenvalues of $\vect{A}$ are stable and have in magnitude larger real part than imaginary part (argument of all the eigenvalues of $\vect{A}$ are in $(\frac{3\pi}{4},\frac{5\pi}{4})$), the magnitude of the real part of the RMR of the CE and consequently the convergence rate of system~\eqref{eq::DDE-sys} increases with delay. Also,~\cite{WQ-RS:13} shows that when all the eigenvalues of Hurwitz matrix $\vect{A}$ are real, the ultimate bound on the maximum achievable rate of convergence via time delay is $\ee\approx2.71828$ times the delay free rate. The analysis method of~\cite{WQ-RS:13} relies on factoring the CE of system~\eqref{eq::DDE-sys} into $n$ scalar equations that each depends on one of the eigenvalues of $\vect{A}$, and studying how the roots of these $n$ scalar equations are affected by $\tau>0$. However, since the relative size of rate of change of the real part of the RMR of each scalar equation with delay is not known, the variation of the real part of the RMR of the CE with delay can not be determined fully.

\begin{comment}
The analysis method of~\cite{WQ-RS:13} relies on factoring the CE of system~\eqref{eq::DDE-sys} into $n$ scalar equations that each depends on one of the eigenvalues of $\vect{A}$, and studying how the roots of these $n$ scalar equations are affected by $\tau>0$. When the eigenvalue corresponding to the scalar equation $i\in\{1,\cdots,n\}$ has a larger real part than imaginary part,~\cite{WQ-RS:13} shows that there exists a $\tau_i^\star$ in the admissible delay range $[0,\bar{\tau}_i)$ such that the magnitude of the real part of the RMR of the scalar equation $i$ is ascending function of delay for any $\tau\in[0,\tau_i^\star)$, and descending function of delay for $\tau\in[\tau^\star_i,\bar{\tau}_i)$. 
However, since the relative size of rate of change of the real part of the RMR of each scalar equation with delay is not known, the variation of the real part of the RMR of the CE with delay can not be determined fully. What can be known is that in the special case, when all the eigenvalues of Hurwitz matrix $\vect{A}$ have a larger real part than imaginary part,  
the convergence rate increases with time delay over  $[0,\min\{\tau_i^\star\}_{i=1}^n)\subset[0,\bar{\tau})$, but its variation beyond this range cannot be determined. 
\end{comment}
%In special case, when all the eigenvalues of Hurwitz matrix $\vect{A}$ have a larger real part than imaginary part, \cite{WQ-RS:13} can only state that the convergence rate increases with time delay over  $[0,\min\{\tau_i^\star\}_{i=1}^n)\subset[0,\bar{\tau})$. 

In this paper, we take a different approach to analyze how the real part of the RMR of the CE of~\eqref{eq::DDE-sys} changes with delay. In our work, we take advantage of the fact that the exact location of the RMR of the CE of~\eqref{eq::DDE-sys} is given in a closed-form by an explicit function defined by the Lambert W function. By analyzing the rate of change of this function, we are able to recover the results in~\cite{WQ-RS:13} and extend the facts known about the variation of convergence rate of system~\eqref{eq::DDE-sys} with delay in the following directions. First, we show that the
 rate of convergence of~\eqref{eq::DDE-sys} can increase with time delay if and only if the argument of \emph{only} the rightmost eigenvalue(s) of the system matrix is strictly between $\frac{3\pi}{4}$ and $\frac{5\pi}{4}$; if a system does not satisfy this condition, its rate of convergence is in fact decreases strictly with delay in its admissible delay range. We then proceed to determine the exact range of time delay for which a system has a rate of convergence greater than the rate of a delay free system. For delays beyond this range we show that the rate of convergence decreases strictly. Our next result is to obtain an estimate on the value of the time delay corresponding to maximum achievable rate. For the special case when the system matrix eigenvalues are all negative real numbers, the relative ease in mathematical manipulations allows us also to expand our results to show that the rate of convergence in the presence of delay depends only on the eigenvalues with minimum and maximum real parts. Moreover, we determine the exact value of the maximum rate of convergence and the corresponding maximizing time delay.
A preliminary version of this paper, which discusses  the case of $\vect{A}$ having real eigenvalues appeared in~\cite{HM-SSK:18b}.

\emph{Notation}: %we let $\reals$, $\realpositive$, $\realnonnegative$, $\mathbb{Z}$, and $\mathbb{C}$ denote the set of real, positive real, non-negative real, integer, and complex numbers, respectively. For  $z\in\mathbb{C}$, $\re{z}$, $\im{z}$ represents its real and imaginary parts, respectively. Moreover, $|z|$ and $\arg(z)$ represent, respectively, its magnitude, i.e., $|z|=\sqrt{\re{z}^2+\im{z}^2}$ and $\arg(z)=\text{atan2}(\im{z},\re{z})$. Moreover, $\conj(z)$ us the conjugate value of $z$.
We let $\complex^l=\{x\in\complex|\re{x}\leq0\}$ and $\complex_{-}^l=\{x\in\complex|\re{x}<0\}$.
%The transpose of a matrix $\vect{A}=[a_{ij}]\in\real^{n\times m}$ is~$\vect{A}^\top$ and its element in $i^{\text{th}}$ row and $j^{\text{th}}$ column is $a_{ij}$. 
The set of eigenvalues of matrix $\vect{A}\in\real^{n\times n}$ is $\text{eig}(\vect{A})$. For sets $\mathcal{A}$ and $\mathcal{B}$, $\mathcal{A}\subsetneq \mathcal{B}$ means that $\mathcal{A}$ is the strict subset of $\mathcal{B}$, and $\mathcal{A}\backslash\mathcal{B}$ is the set of all elements of $\mathcal{A}$ that are not elements of $\mathcal{B}$. For $\vect{s}\in\real^n$, $\text{diag}(\vect{s})$ is the diagonal matrix whose $n$ diagonal entries are the $n$ elements of the vector $\vect{s}$.

%%%%%%%%%%%%%%%%%%%%
%%%%%%%%%%%%%%%%%%%%
%%%%%%%%%%%%%%%%%%%%
\section{A Review of Properties of Lambert W function}\label{sec::prelim}
\vspace{-0.05in}
The Lambert W function has been used to in stability analysis, eigenvalue assignment and obtaining the rate of convergence of linear time-delayed systems~\cite{SY-PWN-AGU:07a,HS-TM:06,SY-PWN-AGU:10,HM-SSK:19}.
For a $z\in\complex$, the Lambert $W$ function is defined as the solution of $s\,\ee^{s}=z$, i.e., $s=W(z)$. Except for $z=0$ for which $W(0)=0$, $W$ is a multivalued function with the infinite number of solutions denoted by $W_k(z)$, $k\in\integer$, where $W_k$ is called the $k^{\text{th}}$ branch of $W$ function. $W_k(z)$ can readily be computed in Matlab or Mathematica. Zero branch of the Lambert function, $W_0$ is of special interest in this paper,
which has the following properties (see~\cite{HS-TM:06,RMC-GHG-DEGH-DJJ-DEK:96,CH-YCC:05})
\begin{subequations}\label{eq::W_0_facts}
\begin{alignat}{2}
&W_0(-\frac{1}{\ee})\!=\!-1,~ W_0(0)\!=\!0,&~\\
&\re{W_0(z)}>-1, & z\in\real\backslash\{-\frac{1}{\ee}\},
\label{eq::W_0_facts_c}\\
&W_0(z)\in\real,&z\in[-\frac{1}{\ee},\infty),  \label{eq::W_0_facts_a} \\
&\im{W_0(z)}\in(-\pi,\pi)\backslash\{0\}, \,&z\in\complex\backslash(-\infty,-\frac{1}{\ee}),\label{eq::W_0_facts_b}\\
&W_0(\conj(z))=\conj(W_0(z)),  & z\in\complex.\label{eq::W_0_conjugate_prop}
\end{alignat}
\end{subequations}
%\hossein{The maximum real part of the Lambert $W$ function at a given point $z\in\complex$ has the following property. %(see Fig.~\ref{fig::real_lam}).
%\begin{lem}[Maximum real part of Lambert W function~\cite{HS-TM:06}]\label{lem::LambertReal}
%For any $z\in\complex$ we have%in $s\,\ee^{s}=z$, the solutions $s=W(z)$ satisfy
%\begin{align}\label{eq::max-W0}
%\re{W_0(z)}\geq \max\big\{\re{W_k(z)}|~k\in\integer\backslash\{0\}\big\}.
%\end{align}
% The equality holds between branch $0$ and $-1$ over  $z\in\reals_{\leq0}$ where we have $\re{W_0(z)}=\re{W_{-1}(z)}$.
%\end{lem}}\margin{why do we need this Lemma? do we use it in our proofs? I thought we needed it for saying how we find the rightmost root of CE. If we need it you can just add (3) as an equation rather than a lemma.}

Other properties of the Lambert $W$ function that we use are  
\begin{subequations}
\begin{align}
&\frac{\text{d}\,W(z)}{\text{d}\,z}=\frac{1}{z+\ee^{W(z)}},\quad z\in\complex\backslash\{\frac{1}{\ee}\},\label{eq::lambert_derivative}
\\
 &       \lim_{z\to 0}\frac{W(z)}{z}=1, \label{eq::limWz-z}
\end{align}
\end{subequations}
%and $W_0(x)$ around $x\!=\!0$ is given by (convergence radius~of~$\frac{1}{\ee}$) 
% \begin{align}\label{eq::W0-near-0}
%W_0(x)=\!\sum\nolimits_{n=1}^{\infty}\!\!\!\!\frac{(-n)^{n-1}}{n!}x^n.%=x-x^2+\frac{3}{2}x^3-\frac{8}{3}x^4+\cdots.
%\end{align}

%\begin{figure}
%    \centering
%    \includegraphics[trim={20pt 0 0 0},,scale=0.7]{Fig/TAC2}
%\caption{$\re{W_k(z)}$ vs. $z\!\in\![-5,2]$ for $k=\{-5,-4,\cdots,4,5\}$: 
%solid blue line shows $0$ branch  while the black solid lines show the other branches. Some of the branches overlap. The relation described in Lemma~\ref{lem::LambertReal} is evident in this figure. 
%}   \label{fig::real_lam}
%\end{figure}

\section{Preliminaries and objective statement}\label{sec::Prob_formu}\vspace{-0.05in}
Let the eigenvalues of $\vect{A}$ in~\eqref{eq::DDE-sys} be $\{\alpha_1,\cdots,\alpha_n\}\subset \complex^{l}_{-}$, which are ordered according to  $|\re{\alpha_1}|\leq |\re{\alpha_2}|. \leq\dots\leq  |\re{\alpha_n}|$. The critical value of delay $\bar{\tau}\in\real_{>0}$ beyond which the system~\eqref{eq::DDE-sys} is unstable is specified as follows.
\begin{lem}[Admissible range for delay $\tau$ for the system~\eqref{eq::DDE-sys}~\cite{MB:87}] \label{thm::admis_tau}
The time-delayed system~\eqref{eq::DDE-sys} 
is exponentially stable if and only if 
 $\tau\in[0,\bar{\tau})\subset\real_{\geq0}$ where 
\begin{align}\label{eq::delay_bound}
\bar{\tau}=\min\{\,\bar{\tau}_i\}_{i=1}^n,\quad \bar{\tau}_i&=\big|\text{\rm{atan}}({\re{\alpha_i}}\big/{\im{\alpha_i}})\big|\Big/|\alpha_i|.
\end{align}
%If $\{\alpha_1,\cdots,\alpha_n\}\subset\real_{<0}$, then \begin{align}\label{eq::delay_bound_real}
%\bar{\tau}=\min\{\,\bar{\tau}_i\}_{i=1}^n=\frac{\pi}{2\,|\alpha_n|},\quad \bar{\tau}_i&=\frac{\pi}{2\,|\alpha_i|}.
%\end{align}
\end{lem}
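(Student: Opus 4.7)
The plan is to derive the characteristic equation (CE) of~\eqref{eq::DDE-sys} and then identify the smallest delay at which a characteristic root first crosses into the closed right half plane. The CE of~\eqref{eq::DDE-sys} is $\det(s\vect{I}-\vect{A}\,\ee^{-s\tau})=0$, and since the eigenvalues of $\vect{A}$ are $\{\alpha_1,\dots,\alpha_n\}$, a similarity transformation to the Jordan form reduces the CE to the product
\begin{equation*}
\prod_{i=1}^{n}\bigl(s-\alpha_i\,\ee^{-s\tau}\bigr)=0,
\end{equation*}
so that the spectrum of the delay system is precisely the union of root sets of the $n$ scalar transcendental equations $s\,\ee^{s\tau}=\alpha_i$.

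Next, I would invoke the standard continuity-of-roots result for retarded time-delay systems (the property cited from~\cite{SN:01} in the introduction) to conclude that the set of $\tau\ge 0$ for which every characteristic root lies in $\complex_-^l$ is an interval of the form $[0,\bar\tau)$. At $\tau=0$ each scalar equation reduces to $s=\alpha_i$, which lies in $\complex_-^l$ by the Hurwitz assumption, so exponential stability holds on a nontrivial interval. The critical delay $\bar\tau$ is therefore the smallest $\tau>0$ at which at least one scalar equation admits a purely imaginary root. Since the $n$ scalar equations are decoupled, this critical delay is $\min_i\bar\tau_i$, where $\bar\tau_i$ is the smallest positive delay for which $s\,\ee^{s\tau}=\alpha_i$ has a root on the imaginary axis.

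To obtain the closed-form for $\bar\tau_i$, I would set $s=\ii\omega$ with $\omega\in\real$ in $\ii\omega=\alpha_i\,\ee^{-\ii\omega\tau}$. Taking moduli yields $|\omega|=|\alpha_i|$, and matching arguments gives $\arg(\ii\omega)=\arg(\alpha_i)-\omega\tau\pmod{2\pi}$. Writing $\alpha_i=|\alpha_i|\ee^{\ii\theta_i}$ with $\theta_i\in(\pi/2,3\pi/2)$ (because $\re{\alpha_i}<0$) and choosing the smallest positive $\tau$, one obtains $\omega\tau=\theta_i-\pi/2$ from the $\omega>0$ branch (the $\omega<0$ branch yields the same value by conjugate symmetry). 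Using the identity $\theta_i-\pi/2=\bigl|\operatorname{atan}(\re{\alpha_i}/\im{\alpha_i})\bigr|$ valid for $\re{\alpha_i}<0$ (and interpreted as $\pi/2$ in the limit $\im{\alpha_i}\to 0$ recovering the classical scalar result $\pi/(2|\alpha_i|)$), we arrive at $\bar\tau_i=\bigl|\operatorname{atan}(\re{\alpha_i}/\im{\alpha_i})\bigr|/|\alpha_i|$, which is~\eqref{eq::delay_bound}.

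The main obstacle is the argument bookkeeping in the last step: one must (i) select the correct principal branch of $\arg(\alpha_i)$ so that $\theta_i-\pi/2$ lies in $(0,\pi)$, (ii) verify that the $k=0$ branch indeed furnishes the smallest positive $\tau$ (larger $k$ produces $\omega\tau=\theta_i-\pi/2+2k\pi$, which is strictly larger), and (iii) handle the real-eigenvalue limit carefully so that the formula reduces continuously to $\pi/(2|\alpha_i|)$. The rest is standard: continuity of roots delivers both the necessity (a root actually touches the imaginary axis at $\tau=\bar\tau$) and the sufficiency (all roots remain in $\complex_-^l$ for $\tau<\bar\tau$) of the stated delay range.
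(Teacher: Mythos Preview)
The paper does not supply its own proof of this lemma; it is quoted as a known result from~\cite{MB:87}. Your derivation is the standard one and is essentially correct. One point deserves tightening: for a single complex eigenvalue $\alpha_i$ with $\im{\alpha_i}\neq 0$, the scalar equation $s=\alpha_i\,\ee^{-s\tau}$ does \emph{not} have conjugate-symmetric roots, so the $\omega>0$ and $\omega<0$ crossings generally occur at different delays. Concretely, with $\alpha_i=|\alpha_i|\ee^{\ii\theta_i}$ and $\theta_i\in(\pi/2,3\pi/2)$, the $\omega>0$ branch gives $\tau=(\theta_i-\pi/2)/|\alpha_i|$ while the $\omega<0$ branch gives $\tau=(3\pi/2-\theta_i)/|\alpha_i|$; the smaller of the two equals $\bigl|\operatorname{atan}(\re{\alpha_i}/\im{\alpha_i})\bigr|/|\alpha_i|$, and which branch attains it depends on the sign of $\im{\alpha_i}$. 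Conjugate symmetry only enters at the level of the full system (because $\vect{A}$ is real, $\bar\alpha_i$ is also an eigenvalue), not for a fixed $\alpha_i$. You already flag this bookkeeping as the main obstacle, so the proposal is sound once that sentence is corrected.
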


%For LTI system~\eqref{eq::DDE-sys},  with or without delay, the rate of convergence is no worse that the absolute value of the real part of the rightmost solution of the characteristic equation~\cite{TH-ZL-YS:03,SD-JN-AGU:11}.
%In the absence of delay, the rightmost solution of the characteristic equation is the right most eigenvalue of the system matrix $\vect{A}$, i.e., ${\alpha_1}$. 
%In what follows, we denote the eigenvalues of $\vect{A}\in\real^{n\times n}$ as $\{\alpha_1,\cdots,\alpha_n\}\subset \complex^{l}_{-}$, which are ordered according to  $|\re{\alpha_1}|\leq |\re{\alpha_2}|. \leq\dots\leq  |\re{\alpha_n}|$. 
As mentioned earlier, the rate of convergence of system~\eqref{eq::DDE-sys}  is determined by the magnitude of the real part of the RMR of its CE. Therefore, in the absence of the delay, the rate of convergence of system~\eqref{eq::DDE-sys} is  $\rho_0=|\re{\alpha_1}|$. As shown in~\cite{SD-JN-AGU:11}, when $\tau>0$ the magnitude of the real part of the RMR and as a result the rate of convergence of~\eqref{eq::DDE-sys} is given by
%Given~\eqref{eq::char_f_roots} and invoking  Lemma~\ref{lem::LambertReal}, %
%The rate of convergence of the system~\eqref{eq::DDE-sys} in its admissible delay range is given as follows. 
%\begin{lem}[Rate of convergence of \eqref{eq::DDE-sys} for a delay in admissible range~\cite{SD-JN-AGU:11}]\label{lem::pho-tau}
%The exponential rate of convergence of the time-delayed system~\eqref{eq::DDE-sys} for any $\tau\in[0,\bar{\tau})$, where $\bar{\tau}$ is given in~\eqref{eq::delay_bound}, is given by
\begin{align}\label{eq::rho}
        \rho_{\tau}=\min\big\{\,\,-\frac{1}{\tau}\re{W_0(\alpha_i\tau)}\,\,\big\}_{i=1}^n.
\end{align}
Using properties of the Lambert W function, we can show $\rho_\tau$ is a continuous function of $\tau$. (see Lemma~\ref{lem::p_tau_continuous} in~the appendix).
%\end{lem}
%As we show below
%We can show that the rate of convergence $\rho_\tau$ in~\eqref{eq::rho} is a continuous function of~$\tau$. 
Our objective  is to show that for  system~\eqref{eq::DDE-sys}, 
 it is possible to have $\rho_\tau>\rho_0= |\re{\alpha_1}|$  for certain values of delay $\tau\in(0,\bar{\tau})$. 
In particular, we carefully examine the variation of $\rho_\tau$ with~$\tau\in(0,\bar{\tau})$ to address the following questions: (a) for what systems delay can lead to a higher rate of convergence, (b) for what values of delay  $\rho_\tau>\rho_0=|\re{\alpha_1}|$ (c) what is the maximum value of $\rho_\tau$ and the corresponding maximizer $\tau$.

To compare the rate of convergence~\eqref{eq::rho} to the delay free rate, we define the \emph{delay rate gain function} as follows
\begin{align}\label{eq::delay_rate_gain}
       %x=\alpha\,\tau,\quad \text{and}\quad 
       &g(x)=\begin{cases}\frac{\re{W_0(x)}}{\re{x}},&x\in\complex_{-}^l,\\
       1,& x=0.
       \end{cases}
\end{align}
For any $\alpha\in\complex_{-}^l$ using delay rate gain we can write
\begin{align}\label{eq::gain_def}
        -\frac{1}{\tau}\re{W_0(\alpha\tau)}=g(\alpha\tau)\,|\re{\alpha}|,\quad \tau\in\real_{>0}.
\end{align}
Therefore, the rate of convergence~\eqref{eq::rho} of the system~\eqref{eq::DDE-sys} can be expressed also as  \begin{align}\label{eq::rho-tau-g}
    \rho_{\tau}=\min\big\{\,g(\alpha_i\tau)\,|\re{\alpha_i}|\,\big\}_{i=1}^n,\quad \tau\in\real_{>0}.
\end{align} 
%In the proceeding section, we study the variation of $g(\alpha\tau)$ with  $\tau\in\real_{>0}$ for any given $\alpha\in\complex_{-}^l$ with the objective of identifying the ranges of the time delay for which $g(\alpha\tau)>1$. 

Next, we study the properties of the delay rate gain function~\eqref{eq::delay_rate_gain} to identify the ranges of $\tau$ that $g(\alpha\tau)>1$ for a given $\alpha\in\complex_{-}^l$. The proof of these auxiliary results are given in the appendix.
%\margin{please create a bib for the current arxiv link so we can use the citation. You can change the link when we finalized the v2 on arXiv}
%--------------
\begin{lem}[characterizing the solutions of $g(\alpha\tau)=0$ and $g(\alpha\tau)=1$]\label{lem::g0-g1}
For any $\alpha\in\complex_{-}^l$, the delay rate gain~\eqref{eq::delay_rate_gain} satisfies
\vspace{-0.15in}
\begin{subequations}\label{eq::g0_bar_tau}
\begin{align}
&\lim_{\tau\to0}g(\alpha\,\tau)=1,\label{eq::lim-g-0}\\
& \{\tau\in\real_{\geq0}\,|\,g(\alpha\tau)\!=\!0\}\!=\!\{\bar{\tau}\},~\bar{\tau}\!=\!   \big|\text{\rm{atan}}(\frac{\re{\alpha}}{\im{\alpha}})\big|\!\Big/\!|\alpha|,\label{eq::tau_bar}
        \end{align}  
       \end{subequations}
If $\alpha\in\real_{<0}$, then $\bar{\tau}=\pi/2|\alpha|$. For $\alpha\in\real_{<0}$ we also have 
        \begin{align}
        &\{\tau\in\real_{>0}\,|\,g(\alpha\tau)=1\}=\left\{\tilde{\tau}\right\},~ ~ \tilde{\tau}=\tilde{\theta}\cot(\tilde{\theta})/|\alpha|,\label{eq::tau_tilde_real}
        \end{align}
 where $\tilde{\theta}$ is the unique solution of $\ee^{-\theta\cot(\theta)}\!=\!\cos(\theta)$ in $(0,\pi)$, which approximately is $1.01125$.
\end{lem}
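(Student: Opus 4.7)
The plan is to convert each condition on $g(\alpha\tau)$ into a condition on $W_0(\alpha\tau)$ via the definition of $g$ in \eqref{eq::delay_rate_gain}, and then to exploit the defining identity $W_0(z)\,\ee^{W_0(z)}=z$ to separate real and imaginary parts into two real scalar equations that can be solved explicitly for $\tau$. For \eqref{eq::lim-g-0}, this is immediate from \eqref{eq::limWz-z}: writing $W_0(\alpha\tau)=\alpha\tau(1+\varepsilon(\tau))$ with $\varepsilon(\tau)\to 0$, taking real parts, and dividing by $\re{\alpha\tau}=\re{\alpha}\tau$ gives $g(\alpha\tau)=1+O(|\varepsilon(\tau)|)\to 1$, where the remainder is bounded by $(|\alpha|/|\re{\alpha}|)|\varepsilon(\tau)|$.

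For \eqref{eq::tau_bar}, $g(\alpha\tau)=0$ with $\tau>0$ is equivalent to $W_0(\alpha\tau)\in\ii\,\real$. Writing $W_0(\alpha\tau)=\ii y$ and expanding the defining identity as $\alpha\tau=\ii y\,\ee^{\ii y}=-y\sin y+\ii y\cos y$ yields
\begin{align*}
\re{\alpha}\,\tau = -y\sin y, \qquad \im{\alpha}\,\tau = y\cos y.
\end{align*}
When $\im{\alpha}\neq 0$, the principal branch $W_0$ preserves the sign of the imaginary part (a consequence of \eqref{eq::W_0_conjugate_prop}, continuity, and injectivity), so $y$ has the same sign as $\im{\alpha}$; combined with the sign constraints of the two equations and the principal-branch bound \eqref{eq::W_0_facts_b}, this pins $|y|$ to $(0,\pi/2)$. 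Dividing the two equations gives $\tan y=-\re{\alpha}/\im{\alpha}$, so $|y|=|\text{\rm{atan}}(\re{\alpha}/\im{\alpha})|$; the trigonometric identity $\sin|y|=|\re{\alpha}|/|\alpha|$ then reduces $\tau=y\sin y/|\re{\alpha}|$ to $\tau=|y|/|\alpha|=\bar\tau$. For $\alpha\in\real_{<0}$ the same framework gives $y\cos y=0$, hence $y=\pm\pi/2$, and thus $\tau=\pi/(2|\alpha|)$, matching the limit of the general formula as $\im{\alpha}\to 0$.

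For \eqref{eq::tau_tilde_real} I would write $W_0(\alpha\tau)=u+\ii v$. The condition $g(\alpha\tau)=1$ reads $u=\re{\alpha\tau}=-|\alpha|\tau$. Since $\alpha\tau$ is real, the imaginary part of the defining identity $(u+\ii v)\ee^{u+\ii v}=\alpha\tau$ yields $u\sin v+v\cos v=0$, so $u=-v\cot v$ (for $v\neq 0$), and the requirement $u<0$ restricts $v$ to $(0,\pi/2)$. Substituting into the real part $\ee^u(u\cos v-v\sin v)=u$ and simplifying via $\cos^2 v+\sin^2 v=1$ collapses everything to $\ee^{-v\cot v}=\cos v$; setting $\tilde\theta=v$ recovers the claimed equation and $\tilde\tau=-u/|\alpha|=\tilde\theta\cot\tilde\theta/|\alpha|$.

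The main technical obstacle is the uniqueness of $\tilde\theta$ in $(0,\pi)$. Existence in $(0,\pi/2)$ follows from the intermediate value theorem applied to $f(\theta)=\ee^{-\theta\cot\theta}-\cos\theta$, using $f(0^+)=\ee^{-1}-1<0$ (since $\theta\cot\theta\to 1$) and $f(\pi/2^-)=1>0$; no solution exists in $[\pi/2,\pi)$ because $\ee^{-\theta\cot\theta}\ge 1$ while $\cos\theta\le 0$ there. For uniqueness in $(0,\pi/2)$, I would show $h(\theta)=\theta\cot\theta+\log\cos\theta$ is strictly decreasing, which reduces to the elementary trigonometric inequality $\sin\theta\cos 2\theta<\theta\cos\theta$ on $(0,\pi/2)$; this can be handled by differentiating or by a Taylor comparison near the origin.
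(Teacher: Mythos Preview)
Your approach is essentially identical to the paper's: for each part you write $W_0(\alpha\tau)$ with the real part dictated by the condition on $g$, expand the identity $W_0(z)\ee^{W_0(z)}=z$ into real and imaginary parts, and solve the resulting trigonometric system, which is exactly what the paper does. The one notable difference is that you go further than the paper on~\eqref{eq::tau_tilde_real}: the paper simply asserts that $\ee^{-\theta\cot\theta}=\cos\theta$ has the two solutions $\theta\approx\pm1.01125$ in $(-\pi,\pi)$, whereas you supply an actual existence/uniqueness argument via the intermediate value theorem and the monotonicity of $h(\theta)=\theta\cot\theta+\log\cos\theta$ (reducing to $\sin\theta\cos 2\theta<\theta\cos\theta$ on $(0,\pi/2)$), which is a genuine improvement in rigor.
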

 %The proof of Lemma~\ref{lem::g0-g1} is given in the appendix. %Next, given~\eqref{eq::lim-g-0}, by using a similar argument to that of the first part of the proof of Lemma~\ref{lem::p_tau_continuous}, we state the following result. 
\begin{lem}[$g(\alpha\tau)$ is a continuous function of $\tau$]\label{lem::g_continuous} 
For a given $\alpha\in\complex_{-}^l$, $g(\alpha\tau)$ is a continuous function of $\tau\in\real_\geq0$. %$\triangleleft$
\end{lem}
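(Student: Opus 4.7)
The plan is to split the analysis into the two cases $\tau = 0$ and $\tau > 0$. Continuity at $\tau = 0$ is immediate from Lemma~\ref{lem::g0-g1}: equation~\eqref{eq::lim-g-0} gives $\lim_{\tau \to 0^+} g(\alpha\tau) = 1$, which coincides with the defined value $g(0) = 1$ in~\eqref{eq::delay_rate_gain}.

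For $\tau > 0$, the denominator $\re{\alpha\tau} = \tau\,\re{\alpha}$ is continuous and nonvanishing (since $\re{\alpha} < 0$ for $\alpha \in \complex_{-}^l$), so continuity of $g(\alpha\tau)$ reduces to continuity of $\tau \mapsto \re{W_0(\alpha\tau)}$, which in turn reduces to continuity of the principal branch $W_0$ along the ray $\{\alpha\tau : \tau > 0\}$ in $\complex$. Since $W_0$ is analytic on $\complex \setminus (-\infty,-1/\ee\,]$ with derivative given by~\eqref{eq::lambert_derivative}, the argument splits into two subcases. If $\im{\alpha} \neq 0$, the ray lies strictly in the open upper or lower half plane, misses the branch cut entirely, and continuity follows from analyticity. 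If $\im{\alpha} = 0$ (so $\alpha \in \real_{<0}$), the ray is the negative real axis; $W_0$ is analytic on $(-1/\ee, 0)$ and, via the upper-half-plane boundary extension consistent with property~\eqref{eq::W_0_facts_c}, also on $(-\infty,-1/\ee)$. Hence the only point requiring separate attention is the branch point $\tau^\ast = 1/(\ee|\alpha|)$, where $\alpha\tau^\ast = -1/\ee$.

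Handling this branch point is the main obstacle, since the derivative formula~\eqref{eq::lambert_derivative} blows up there (the denominator $z + \ee^{W_0(z)}$ vanishes when $z = -1/\ee$ and $W_0 = -1$). Continuity is nonetheless recovered by passing to one-sided limits in the defining identity $W_0(z)\ee^{W_0(z)} = z$: from the right, the real branch of $W_0$ on $(-1/\ee,0)$ extends continuously to $W_0(-1/\ee) = -1$ by~\eqref{eq::W_0_facts_a}; from the left, the complex extension of $W_0$ on $(-\infty,-1/\ee)$ also converges to $-1$, since $W = -1$ is the unique solution of $W\ee^W = -1/\ee$ in the closed half-plane $\re{W} \geq -1$, to which $W_0$ is confined by property~\eqref{eq::W_0_facts_c}. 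Matching these one-sided limits yields continuity of $W_0(\alpha\tau)$, and therefore of $g(\alpha\tau)$, throughout $\real_{\geq 0}$.
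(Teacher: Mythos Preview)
Your argument is correct. In fact the paper does not give a standalone proof of Lemma~\ref{lem::g_continuous}; the closest it comes is the proof of Lemma~\ref{lem::p_tau_continuous} in the appendix, which simply asserts that $\re{W_0(\alpha\tau)}$ is a continuous function of $\tau\in\real_{\geq 0}$ (treating this as a known property of $W_0$ from the cited references) and then invokes~\eqref{eq::limWz-z} to handle $\tau=0$. Your proof follows the same skeleton---continuity at $\tau=0$ via~\eqref{eq::lim-g-0}, continuity for $\tau>0$ via continuity of $W_0$---but is considerably more self-contained: you actually justify the continuity of $W_0$ along the ray $\{\alpha\tau:\tau>0\}$ by separating the cases $\im{\alpha}\neq 0$ (ray avoids the branch cut entirely) and $\alpha\in\real_{<0}$ (ray hits the branch point $-1/\ee$), and in the latter case you verify that both one-sided limits of $W_0$ at $-1/\ee$ equal $-1$. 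The paper simply takes all of this for granted. What your approach buys is a genuinely complete argument that does not defer to external references for the behavior of $W_0$ at its branch point; what the paper's approach buys is brevity, at the cost of leaving the branch-point continuity unexamined.
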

Our next result specifies how the sign of $g(\alpha\tau)$ for a given $\alpha\in\complex_{-}^l$ changes with respect to $\tau\in\real_{>0}$. 
\begin{lem}[values of $\tau$ for which $g(\alpha\tau)>0$]\label{lem::g_0}
For a given $\alpha\in\complex_{-}^l$,
$ g(\alpha\tau)>0$ for $\tau\in(0,\bar{\tau})$, $ g(\alpha\tau)<0$ for $\tau\in(\bar{\tau},\infty)$ and $g(\alpha\bar{\tau})=0$, where $\bar{\tau}$ is given in~\eqref{eq::tau_bar}.
\end{lem}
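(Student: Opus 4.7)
The plan is to combine continuity of $\tau\mapsto g(\alpha\tau)$ with uniqueness of its zero at $\bar\tau$ and a sign check at one endpoint, reducing the lemma to two applications of the intermediate value theorem plus a single transversality or asymptotic calculation.

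First I would observe that for any $\alpha\in\complex_{-}^l$ and $\tau>0$ one has $\re{\alpha\tau}=\tau\,\re{\alpha}<0$, so from the definition~\eqref{eq::delay_rate_gain} the sign of $g(\alpha\tau)$ is opposite to that of $\re{W_0(\alpha\tau)}$; moreover $g(\alpha\bar\tau)=0$ is already the content of Lemma~\ref{lem::g0-g1}. Continuity of $\tau\mapsto g(\alpha\tau)$ from Lemma~\ref{lem::g_continuous}, combined with the uniqueness statement~\eqref{eq::tau_bar} that $\bar\tau$ is the \emph{only} zero of $\tau\mapsto g(\alpha\tau)$ on $[0,\infty)$, then forces (by the intermediate value theorem) $g(\alpha\tau)$ to have constant sign on each of the open intervals $(0,\bar\tau)$ and $(\bar\tau,\infty)$. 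On $(0,\bar\tau)$ this sign is positive because~\eqref{eq::lim-g-0} yields $g(\alpha\tau)\to 1>0$ as $\tau\to 0^+$, which finishes that part of the claim.

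It remains to pin down the sign on $(\bar\tau,\infty)$. The cleanest route is asymptotic: using the standard large-argument expansion $W_0(z)=\log z-\log\log z+o(1)$ (principal branch of $\log$), for fixed $\alpha$ and $\tau\to\infty$ one has $\re{W_0(\alpha\tau)}\sim\log|\alpha\tau|\to+\infty$ while $\re{\alpha\tau}\to-\infty$, so $g(\alpha\tau)$ is eventually negative and hence negative throughout $(\bar\tau,\infty)$ by the constancy of sign established above.

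The main obstacle I anticipate is that this $W_0$-asymptotic is not among the properties catalogued in Section~\ref{sec::prelim}, so it would need either a citation or an in-paper justification. A self-contained substitute within the paper's stated toolkit is a transversality computation at $\tau=\bar\tau$: since $\re{W_0(\alpha\bar\tau)}=0$ forces $W_0(\alpha\bar\tau)=\ii\omega$ with $\omega\in\real\setminus\{0\}$ and $\alpha\bar\tau=\ii\omega\,\ee^{\ii\omega}$, a short computation using~\eqref{eq::lambert_derivative} gives $\frac{d}{d\tau}\re{W_0(\alpha\tau)}\big|_{\tau=\bar\tau}=\omega^2/\bigl(\bar\tau(1+\omega^2)\bigr)>0$, so $\re{W_0(\alpha\tau)}$ crosses zero upward at $\bar\tau$; combined with continuity and the absence of further zeros, this delivers $g(\alpha\tau)<0$ on all of $(\bar\tau,\infty)$ without resorting to asymptotics.
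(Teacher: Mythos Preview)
Your argument is correct: continuity (Lemma~\ref{lem::g_continuous}), the limit~\eqref{eq::lim-g-0}, and uniqueness of the zero~\eqref{eq::tau_bar} force a constant sign on each of $(0,\bar\tau)$ and $(\bar\tau,\infty)$; the limit at $0$ pins down the first interval, and either your asymptotic or your transversality computation pins down the second. The transversality calculation is sound (and indeed $\alpha\bar\tau\neq -\tfrac{1}{\ee}$ since $W_0(-\tfrac{1}{\ee})=-1$ has nonzero real part, so~\eqref{eq::lambert_derivative} applies).

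Your route differs from the paper's. The paper proves this lemma by first establishing a geometric auxiliary result (Lemma~\ref{lem::C0S0_g(x)}): the zero level set $\bar{\mathcal C}_0$ is a simple closed convex curve, so the ray $\tau\mapsto(\re{\alpha\tau},\im{\alpha\tau})$ lies in $\intor(\bar{\mathcal C}_0)$ for $\tau\in(0,\bar\tau)$ and in $\extor(\bar{\mathcal C}_0)$ for $\tau>\bar\tau$, which together with the superlevel-set description gives the signs. Interestingly, inside the proof of that auxiliary lemma the paper performs exactly your transversality computation, obtaining $\frac{\text{d}g(\alpha\bar\tau)}{\text{d}\tau}=\frac{1}{\re{\alpha}\bar\tau^2}\frac{\bar{\mathsf u}^2}{1+\bar{\mathsf u}^2}<0$ (equation~\eqref{eq::dg_bar_tau}), so at the analytic core the two proofs coincide. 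Your version is more economical for this lemma alone and stays entirely within the tools of Section~\ref{sec::Prob_formu}; the paper's geometric packaging costs more upfront but is reused in the proof of Lemma~\ref{lem::desired_region_complex}, where the level-set picture does real work. The paper also notes a third route via the continuity-of-stability theorem for delay systems, interpreting $g(\alpha\tau)\re{\alpha}$ as the spectral abscissa of an auxiliary two-dimensional delayed system.
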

  The proof of Lemma~\ref{lem::g_0} is given in the appendix.
 %%%%%%%%%%%%%%%%%%%%%%%%%%%%%%%%%%%%%%%%%%%%
 %\begin{figure} 
 %   \centering
 %       \includegraphics[trim=20pt 0 0 0,clip,scale=0.4%,angle=90
 %       ]{Fig/TAC4}
 %  \caption{{\small Delay rate gain versus $x\in\real_{<0}$. Points $x^\star=-\frac{1}{\ee}$, $\tilde{x}\approx -0.63$ and $\bar{x}=-\frac{\pi}{2}$, respectively, correspond to maximum delay, unit delay gain and maximum admissible delay.}
%}\vspace{-0.15in}
%    \label{fig:g(x)}
%\end{figure}
 \begin{figure}
    \centering
    \includegraphics[scale=0.33]{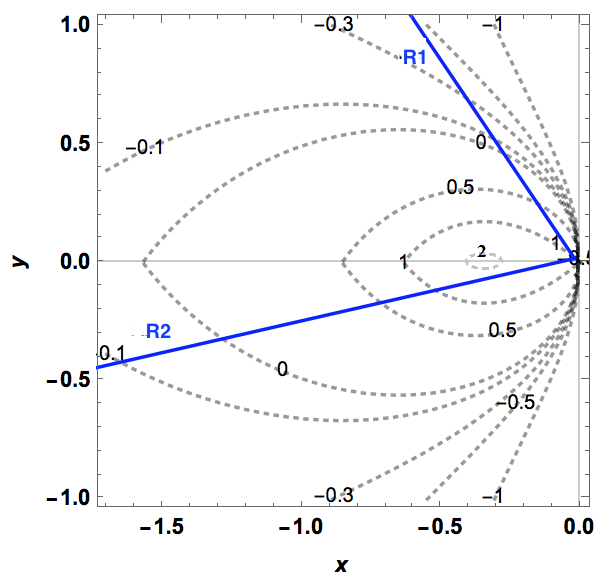}\qquad \includegraphics[trim=5pt 0 10pt 0,clip,scale=0.42,angle=0
        ]{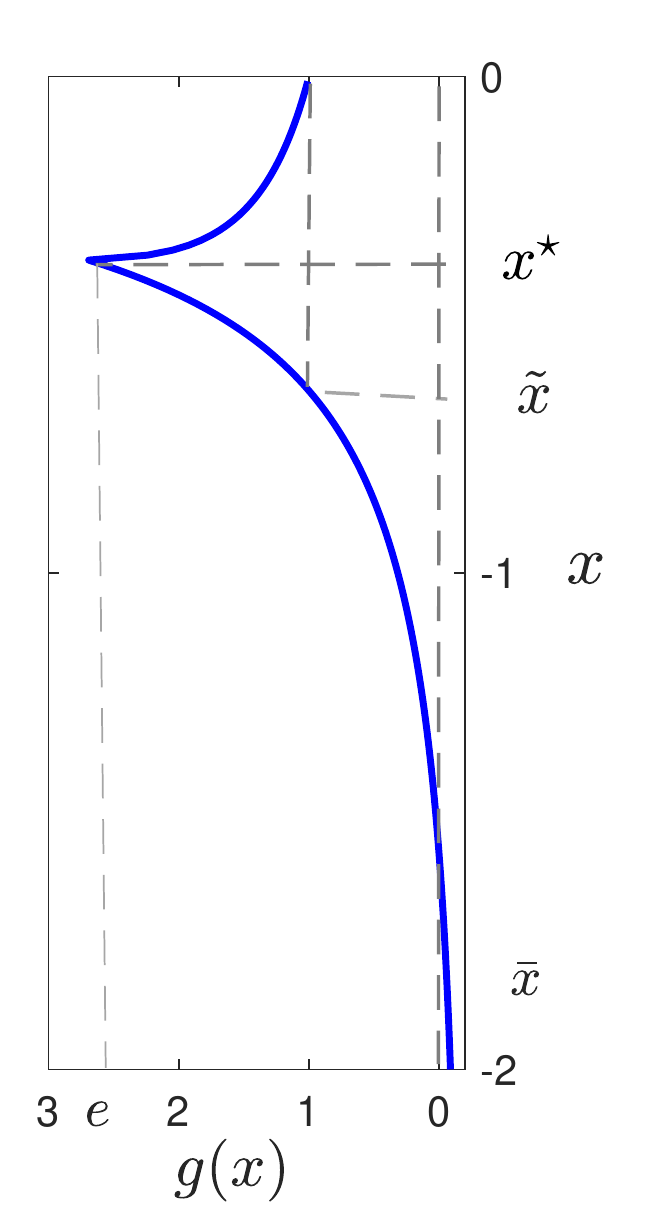}\vspace{-0.05in}  
   \caption{\small
   The plot on the right: $g(x)$ for $x\in\real_{<0}$. The plot on the left: counter plot of $g(\mathsf{x}+\mathsf{y}{\ii})$ for different values of $(\mathsf{x},\mathsf{y})\in\real_{\leq0}\times\real$. The lines R1 and R2, atop of the contour plots, show $\re{\alpha\,\tau}$ versus $\im{\alpha\,\tau}$, for different values of $\tau\in\real_{\geq0}$ for, respectively,  $\alpha=-3+5{\ii}$    ($\arg(\alpha)=\frac{2\pi}{3}$) and $\alpha=-5-1.34{\ii}$ ($\arg(\alpha)=\frac{13\pi}{12}$).
     }\vspace{-0.15in}
    \label{Fig::g_countor}
\end{figure}%\margin{can you generate figure on the right of fig 2 in a vertical for with a bit larger font for ticks and labels}

%%%%%%%%%%%%%%%%%%%%%%%%%%%%%%%%
 
In the remainder of this section our objective is to characterize the variation of $g(\alpha\tau)$ from $1$ at $\tau=0$ to $0$ at $\tau=\bar{\tau}$ (recall~\eqref{eq::g0_bar_tau}), with the specific objective of identifying the values of $\tau\in(0,\bar{\tau})$ for which $g(\alpha\tau)>1$. In this regard, first we consider the plot on the right hand side of Fig.~\ref{Fig::g_countor}, which shows variation of $g(x)$ versus $x\in\real_{\leq 0}$ over $x\in[0,-2]$. This plot reveals a set of interesting facts as follows.
\begin{itemize}
    \item At $x\!=\!\bar{x}\!=-\!\frac{\pi}{2}$ we have  $g(\bar{x})=0$, while for any $x\in(\bar{x},0)$, $g(x)>0$ and for any $x<\bar{x}$, $g(x)<0$ (as expected according to Lemma~\ref{lem::g_0}).
    \item At  $x=x^\star=-\frac{1}{\ee}$, the maximum delay rate gain $g(x^\star)=\ee$ is attained. For any $x\in(x^\star,0)$, $g(x)$ strictly decreases from $\ee$ to $1$, while for any $x\in(\bar{x},x^\star)$, $g(x)$ increases strictly from $0$ to $\ee$. 
    \item  Let $\tilde{x}$ be the non-zero solution of $g(\tilde{x})=1$, $x\in\real_{<0}$ (an approximate value of $\tilde{x}$ is $-0.63336$, see Lemma~\ref{lem::g0-g1} for analytic characterization of $\tilde{x}$ using $\tilde{x}=\alpha\tilde{\tau}$). Then, for any $x\in(\tilde{x},0)$ we have $g(x)>
    1$. Also, for any $x\in[\bar{x},\tilde{x}]$, $g(x)$ strictly increases from $0$ to $1$. \end{itemize}

For a given  $\alpha\in\real_{<0}$, using the aforementioned observations, one can describe the variation of  $g(\alpha\tau)$ over $\tau\in[0,\bar{\tau}]$. We formalize this characterization in the following lemma, whose rigorous proof is given in the appendix. 
%For any $\alpha\in\real_{<0}$, one can describe the variation of  $g(\alpha\tau)$ over $\tau\in[0,\bar{\tau}]$ as follows (see Fig.~\ref{Fig::g_countor}). %We formalize this characterization in the following lemma.%,% whose rigorous proof is given in the appendix. 
\begin{lem}[variation of $g(\alpha\tau)$ for an $\alpha\in\real_{<0}$ and $\tau\in\real_{>0}$]\label{lem::alpha-real-positiverate}
Consider the delay rate gain function~\eqref{eq::delay_rate_gain}. Let $\alpha\in\real_{<0}$ be given. Recall $\bar{\tau}$ and $\tilde{\tau}$ from Lemma~\ref{lem::g0-g1}. Let {\rm$\tau^\star=\frac{1}{\ee|\alpha|}$}. Then, the followings hold.
\begin{itemize}
   \item[(a)]  For any $\tau\in(0,{\tau}^\star)\subset(0,\bar{\tau})$, $g(\alpha\tau)>1$, and $g(\alpha\tau)$ strictly increases from $1$ to {\rm$\ee$};  {\rm$g(\alpha{\tau}^\star)=\ee$}; and for any $\tau\in({\tau}^\star,\bar{\tau})\subset(0,\bar{\tau})$, $g(\alpha\tau)>0$, and $g(\alpha\tau)$ strictly decreases from {\rm$\ee$} to $0$.
   \item[(b)] For any $\tau\in(0,\tilde{\tau})\subset(0,\bar{\tau})$, $g(\alpha\tau)>1$; $g(\alpha\tilde{\tau})=1$; $g(\alpha\bar{\tau})=0$;  for any  $\tau\in(\tilde{\tau},\bar{\tau})$, $0<g(\alpha\tau)<1$; and for $\tau\in(\bar{\tau},\infty)$, $g(\alpha\tau)<0$. 
        \item[(c)] The maximum value of $g(\alpha\tau)$ is {\rm$\ee$}, which is attained at $\tau=\tau^\star\in(0,\tilde{\tau})$. 
\end{itemize}
\end{lem}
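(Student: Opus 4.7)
The plan is to exploit the fact that for $\alpha\in\real_{<0}$ the argument $\alpha\tau$ remains real and negative, and to split the admissible range at the branch point $\tau^\star=1/(\ee|\alpha|)$, at which $\alpha\tau^\star=-1/\ee$. The endpoint values $g(\alpha\cdot 0)=1$ and $g(\alpha\bar\tau)=0$ come from Lemma~\ref{lem::g0-g1} (plus Lemma~\ref{lem::g_continuous} for continuity), while $g(\alpha\tau^\star)=\ee$ is obtained by plugging $W_0(-1/\ee)=-1$ into the definition~\eqref{eq::delay_rate_gain}. What remains is strict monotonicity of $g(\alpha\tau)$ on each of $(0,\tau^\star)$ and $(\tau^\star,\bar\tau)$, after which (a)--(c) follow by bookkeeping.

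On the first sub-interval, $\alpha\tau\in(-1/\ee,0)$ places us on the real branch of $W_0$, so by~\eqref{eq::W_0_facts_a} the value $W_0(\alpha\tau)\in(-1,0)$ is real and $g(\alpha\tau)=W_0(\alpha\tau)/(\alpha\tau)$. Combining~\eqref{eq::lambert_derivative} with the identity $\ee^{W(z)}=z/W(z)$ yields $W'(z)=W(z)/[z(1+W(z))]$, whence a direct computation gives
\begin{equation*}
\frac{d}{d\tau}g(\alpha\tau)\;=\;-\,\frac{\alpha\,W_0(\alpha\tau)^2}{(\alpha\tau)^2\,\bigl(1+W_0(\alpha\tau)\bigr)}\,,
\end{equation*}
which is strictly positive since $\alpha<0$, $W_0(\alpha\tau)^2>0$, and $1+W_0(\alpha\tau)>0$ by~\eqref{eq::W_0_facts_c}. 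Hence $g(\alpha\tau)$ strictly increases from $1$ to $\ee$ on $(0,\tau^\star)$.

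On the second sub-interval, $\alpha\tau\in(-\pi/2,-1/\ee)$ lies past the branch point, so $W_0(\alpha\tau)=a+\ii b$ with $b>0$. Setting the imaginary part of the defining relation $W_0(z)\ee^{W_0(z)}=z$ to zero gives $a=-b\cot b$, and the real part then yields the clean parameterization
\begin{equation*}
\alpha\tau=-\frac{b\,\ee^{-b\cot b}}{\sin b},\qquad g(\alpha\tau)=\cos b\,\ee^{b\cot b},\qquad b\in(0,\pi/2),
\end{equation*}
with $b\to 0^+$ returning $\alpha\tau\to -1/\ee$ and $b\to\pi/2$ giving $\alpha\tau\to -\pi/2$. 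I would then verify, by elementary calculus in $b$, (i) that $\tau$ is strictly increasing in $b$---the logarithmic derivative of $-\alpha\tau$ simplifies to $\bigl[(\sin b-b\cos b)^2+b^2\sin^2 b\bigr]/(b\sin^2 b)>0$---and (ii) that $g$ is strictly decreasing in $b$. For (ii), $dg/db$ has the same sign as $\sin b\cos 2b - b\cos b$, which is trivially negative on $[\pi/4,\pi/2)$ (where $\cos 2b\le 0$), while on $(0,\pi/4)$ it reduces to the inequality $b>\tan b\cos 2b$, proved by noting that $f(b):=b-\tan b\cos 2b$ satisfies $f(0)=0$ and $f'(b)=\tan^2 b+4\sin^2 b>0$.

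The two monotonicity statements combined with the three boundary values yield part (a). Part (b) then follows: by (a) and the intermediate value theorem, there is a unique $\tilde\tau\in(\tau^\star,\bar\tau)$ with $g(\alpha\tilde\tau)=1$, and Lemma~\ref{lem::g_0} supplies $g(\alpha\tau)<0$ on $(\bar\tau,\infty)$. Part (c) is immediate: the maximum of $g(\alpha\cdot)$ on $\real_{\geq 0}$ equals $\ee$, attained only at $\tau^\star$, which lies in $(0,\tilde\tau)$ because $g(\alpha\tau^\star)=\ee>1=g(\alpha\tilde\tau)$. The principal technical hurdle is the trigonometric inequality in (ii); everything else is a straightforward application of the Lambert derivative identity and a case split at the branch point.
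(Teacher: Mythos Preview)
Your proof is correct and follows essentially the same route as the paper: both split at the branch point $\tau^\star=1/(\ee|\alpha|)$, use the real $W_0$-branch on $(0,\tau^\star)$ to obtain $\frac{d}{d\tau}g(\alpha\tau)=\frac{1}{|\alpha|\tau^2}\frac{\mathsf w^2}{1+\mathsf w}>0$, and on $(\tau^\star,\bar\tau)$ use the parametrization $\mathsf w=-\mathsf u\cot\mathsf u$ (your $a=-b\cot b$) coming from the vanishing imaginary part of $W_0\,\ee^{W_0}$. The decisive sign condition in both arguments is exactly the same inequality, $\sin b\,\cos 2b< b\cos b$ on $(0,\pi/2)$ (equivalently $-\mathsf u\cot\mathsf u+\cos 2\mathsf u<0$); the paper simply asserts this, while you actually prove it via $f(b)=b-\tan b\,\cos 2b$ with $f'(b)=\tan^2 b+4\sin^2 b>0$. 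The only cosmetic difference is that the paper differentiates $g$ directly in $\tau$ (its equation for $dg/d\tau$ in terms of $\mathsf w,\mathsf u$), whereas you parameterize $\tau$ and $g$ separately in $b$ and combine $d\tau/db>0$ with $dg/db<0$; the two computations are equivalent since the paper's denominator $(-\mathsf u\cos\mathsf u+\sin\mathsf u)^2+\mathsf u^2\sin^2\mathsf u$ is precisely your numerator for $d\ln(-\alpha\tau)/db$ times $\sin^2 b$.
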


%When $\alpha$ is a complex number, the variation of $g(\alpha\tau)$ with $\tau\in\real_{>0}$ is more involved. 

Let the level set and the superlevel set of delay rate gain function for $c\in\real$ be respectively
\begin{subequations}
 \begin{align}
\mathcal{C}_c&=\{(\mathsf{x},\mathsf{y})\in\real_{<0}\times\real|~g(\mathsf{x}+\mathsf{y}{\ii})= c\},\label{eq::C_c} \\
\mathcal{S}_{c}&=\{(\mathsf{x},\mathsf{y})\in\real_{<0}\times\real|~g(\mathsf{x}+\mathsf{y}{\ii})\,\geq\, c\}.\label{eq::S_c}
\end{align}
\end{subequations}
As seen in the contour plots in Fig.~\ref{Fig::g_countor}, $g(x)$ attains a value greater than one for some $x=\mathsf{x}+{\ii}\,\mathsf{y}$. For example, consider the points on the lines R1 and R2 on Fig.~\ref{Fig::g_countor}, which depict, respectively,  $(\re{\alpha_{\text{R1}}\tau},\im{\alpha_{\text{R1}}\,\tau})$ and $(\re{\alpha_{\text{R2}}\tau},\im{\alpha_{\text{R2}}\,\tau})$ for $\tau\in\real_{\geq0}$, with $\alpha_{\text{R1}}\!=\!-3+5{\ii}$ and $\alpha_{\text{R2}}\!=\!-5-1.34{\ii}$. As seen, $g(\alpha_{\text{R1}}\tau)$, $\tau\in\real_{>0}$ is always less than one, while  $g(\alpha_{\text{R2}}\tau)$  is greater than one for $\tau\!\in\!(0,\tilde{\tau})$, where $\tilde{\tau}$ satisfies $g(\alpha\tilde{\tau})\!=\!1$, see also Fig.~\ref{Fig::stab_complex}. 
Therefore, we expect that a delay rate gain of greater than one is only possible for certain values of $\alpha\in\complex_{-}^l$.
In what follows, we set off to address (a) for what values of $\alpha\in\complex_{-}^l$, $g(\alpha\tau)$ can have a value greater than 1 for a $\tau\in\real_{\geq0}$, (b) what values of $\tau\in(0,\bar{\tau})$ correspond to $g(\alpha\tau)\!>\!1$, and (c) what the maximum gain $g(\alpha \tau^\star)$ and the corresponding $\tau^\star$ are.  We start our study by the following result that for a given $\alpha\in\complex_-^l$, characterizes the sign of ${\text{d} g(\alpha\tau)}/{\text{d}\tau}$ for any $\tau\in(0,\bar{\tau})$. %(proof is given in the~appendix).

\begin{lem}[variation of ${\text{d} g(\alpha\tau)}/{\text{d}\tau}$ with $\tau\in(0,\bar{\tau}{]}$ for a $\alpha\in\complex_{-}^l$]
\label{lem::desired_region_complex}
Consider the delay rate gain function~\eqref{eq::delay_rate_gain}. Let $\alpha\!\in\!\complex_{-}^l$ be given. Recall $\bar{\tau}$ from Lemma~\ref{lem::g0-g1} and $\mathcal{S}_0$ in~\eqref{eq::S_c}.~Let
\begin{align}\label{thm::desired_region_complex}
\Lambda\!=&\Big\{\!(\mathsf{x},\mathsf{y})\!\in\!\real_{\leq0}\!\times\!\real\Big|\mathsf{x}\!+\!\mathsf{y}\ii=r\ee^{\phi\ii},\,r\!=\!-\frac{\cos(2\theta)}{\cos(\theta)}\ee^{-\cos(2\theta)}\!,\nonumber\\ & ~~\phi\!=\theta\!-\!\cos(2\theta)\tan(\theta)\ee^{-\cos(2\theta)}, ~\frac{3\pi}{4}\leq\!\theta\leq\!\frac{5\pi}{4}\!\Big\}.
\end{align}
Then, the following assertions hold.
\begin{itemize}
\item[(a)] For any $\tau\in(0,\bar{\tau}]$, the delay rate gain  satisfies 
 {\rm$\frac{\text{d} g(\alpha\tau)}{\text{d}\tau}>0$} if
{\rm$(\re{\alpha\tau},\im{\alpha\tau})\in\intor(\Lambda)=\big\{(\mathsf{x},\mathsf{y})\!\in\real_{\leq0}\!\times\!\real\,\big|\mathsf{x}+\mathsf{y}\ii=r\,\ee^{\phi\ii},\, 0<r<-\frac{\cos(2\theta)}{\cos(\theta)}\,\ee^{-\,\cos(2\theta)}\,\,,\,\,\phi=\theta-\cos(2\theta)\tan(\theta)\ee^{-\cos(2\theta)},\,\,,\frac{3\pi}{4}\leq\!\theta\leq\!\frac{5\pi}{4}\big\}$}, {\rm$\frac{\text{d} g(\alpha\tau)}{\text{d}\tau}=0$} if {\rm$(\re{\alpha\tau},\im{\alpha\tau})\in\Lambda\backslash\{(-\frac{1}{\ee},0)\}$}, and  {\rm$\frac{\text{d} g(\alpha\tau)}{\text{d}\tau}<0$} if
$(\re{\alpha\tau},\im{\alpha\tau})\in(\mathcal{S}_0\backslash(\intor(\Lambda)\cup\Lambda))$. 
\item[(b)] If $\arg(\alpha)\not\in(\frac{3\pi}{4},\frac{5\pi}{4})$, then 
$\frac{\text{d} g(\alpha\tau)}{\text{d}\tau}<0$ for $\tau\in(0,\bar{\tau}]$.

\item[(c)] If $\arg(\alpha)\in(\frac{3\pi}{4},\frac{5\pi}{4})$,  then 
$\frac{\text{d} g(\alpha\tau)}{\text{d}\tau}>0$ for $\tau\in(0,\tau^\star)$, and 
$\frac{\text{d} g(\alpha\tau)}{\text{d}\tau}<0$ $\tau\in(\tau^\star,\bar{\tau}]$, where $\tau^\star$ is  
\begin{align}\label{eq::tau-star-complex}
     \tau^{\star}=-\frac{\cos(2\,\theta^\star)}{|\alpha|\,\cos(\theta^\star)}\,\ee^{-\cos(2\,\theta^\star)},
     \end{align}
     in which $\theta^\star$ is the unique solution of 
     \begin{align}\label{eq::theta}
    \!\!\!\! \!\!\!\!\arg(\alpha)\!=\!\theta\!-\!\cos(2\,\theta)\tan(\theta)\ee^{-\cos(2\,\theta)},~\theta\!\in\!(\frac{3\pi}{4},\!\frac{5\pi}{4}).
     \end{align}
    At $\tau=\tau^\star$, we have $(\re{\alpha\tau^\star},\im{\alpha\tau^\star})\in\Lambda$.  If $\alpha\in\real_{<0}$, then $\tau^\star=\frac{1}{\ee|\alpha|}$ and 
\begin{align}\label{eq::drev_g_at_tau_star}
&\!\!\lim_{\tau\to\tau^{\star}\,\!^-}\!\!\!\!\frac{\text{d}g(\alpha\,\tau)}{\text{d}\tau}\!=\!+\infty,\,\,\lim_{\tau\to\tau^{\star}\,\!^+}\!\!\!\!\frac{\text{d}g(\alpha\,\tau)}{\text{d}\tau}\!=\!-\frac{5\ee^2|\alpha|}{3}.
\end{align} Lastly, if $\alpha\in\complex_{-}^l\backslash\real_{<0}$, then
     $\frac{\text{d} g(\alpha\tau)}{\text{d}\tau}=0$ at $\tau=\tau^\star$.
\end{itemize}
\end{lem}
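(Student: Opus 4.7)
The plan is to reduce the sign of $\frac{dg(\alpha\tau)}{d\tau}$ to a tractable condition on $w=W_0(\alpha\tau)$. Combining the chain rule with~\eqref{eq::lambert_derivative} and the Lambert identity $\ee^{W_0(z)}=z/W_0(z)$ gives $\frac{dW_0(\alpha\tau)}{d\tau}=\frac{W_0(\alpha\tau)}{\tau(1+W_0(\alpha\tau))}$. Substituting this into $g(\alpha\tau)=\re{W_0(\alpha\tau)}/(\tau\re{\alpha})$ and collecting terms, I obtain
\begin{equation*}
\frac{dg(\alpha\tau)}{d\tau}=-\frac{1}{\tau^{2}\re{\alpha}}\,\re{\frac{w^{2}}{1+w}}.
\end{equation*}
Since $\re{\alpha}<0$, the sign of $\frac{dg(\alpha\tau)}{d\tau}$ equals the sign of $\re{w^{2}/(1+w)}$. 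Writing $w=\rho\,\ee^{\ii\theta}$ with $\rho>0$ and expanding,
\begin{equation*}
\re{\frac{w^{2}}{1+w}}=\frac{\rho^{2}\bigl(\cos(2\theta)+\rho\cos\theta\bigr)}{|1+w|^{2}},
\end{equation*}
so the derivative vanishes exactly on the locus $\rho=-\cos(2\theta)/\cos\theta$. For $\rho>0$ this forces $\cos\theta$ and $\cos(2\theta)$ to have opposite signs, and within the range $\theta\in(-\pi,\pi)\setminus\{0\}$ of $\im W_0$ that places $\alpha\tau$ in $\complex_{-}^{l}$ only $\theta\in[3\pi/4,5\pi/4]$ survives (using the obvious $2\pi$-shift on the lower half plane).

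Next I translate the critical locus from the $w$-plane back to the $\alpha\tau$-plane via $\alpha\tau=w\ee^{w}=\rho\,\ee^{\rho\cos\theta}\,\ee^{\ii(\theta+\rho\sin\theta)}$; substituting $\rho=-\cos(2\theta)/\cos\theta$ recovers exactly the parametrization of $\Lambda$ in the statement. A single interior test pins down the sign on each side: at $w=-1/2$ (so $\rho=1/2$, $\theta=\pi$) one has $\cos(2\theta)+\rho\cos\theta=1-1/2=1/2>0$, hence $\re{w^{2}/(1+w)}>0$ there. Continuity of $w\mapsto\re{w^{2}/(1+w)}$ on $\complex\setminus\{-1\}$, together with the fact that $\Lambda$ separates $\mathcal{S}_{0}$ into two open connected regions, then determines the sign everywhere: positive on $\intor(\Lambda)$, zero on $\Lambda\setminus\{(-1/\ee,0)\}$ (the point $(-1/\ee,0)$ being excluded because $1+w=0$ there), and negative on $\mathcal{S}_{0}\setminus(\intor(\Lambda)\cup\Lambda)$. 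This proves~(a).

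For~(b) and~(c), fix $\alpha$ and observe that $\tau\mapsto\alpha\tau$ traces the ray from the origin in direction $\arg\alpha$ with length $|\alpha|\tau$. I verify geometrically that $\Lambda$ is a simple closed loop contained in the closed sector $\{z:\arg z\in[3\pi/4,5\pi/4]\}$, touching the origin precisely at the endpoints $\theta\in\{3\pi/4,5\pi/4\}$ (where $r=0$) and passing through $-1/\ee$ at $\theta=\pi$, with the two sector-boundary rays as tangent directions at the origin. Hence if $\arg\alpha\notin(3\pi/4,5\pi/4)$ the ray avoids $\intor(\Lambda)\cup\Lambda$ for every $\tau>0$, and part~(a) yields $\frac{dg(\alpha\tau)}{d\tau}<0$ throughout $(0,\bar\tau]$, which is~(b). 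If $\arg\alpha\in(3\pi/4,5\pi/4)$, a monotonicity check shows $\theta\mapsto\theta-\cos(2\theta)\tan\theta$ is a continuous bijection of $(3\pi/4,5\pi/4)$ onto itself, so~\eqref{eq::theta} admits a unique solution $\theta^{\star}$ and $\tau^{\star}$ of~\eqref{eq::tau-star-complex} is the single crossing of the ray with $\Lambda$. The ray lies in $\intor(\Lambda)$ for $\tau\in(0,\tau^{\star})$ and outside it for $\tau\in(\tau^{\star},\bar\tau]$, giving together with~(a) the sign pattern in~(c).

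The remaining and most delicate piece, and the main technical obstacle of the proof, is the real case $\alpha\in\real_{<0}$: here $\theta^{\star}=\pi$ and $\alpha\tau^{\star}=-1/\ee$ is precisely the branch point of $W_0$, at which $1+w=0$ and the closed-form derivative becomes indeterminate. The plan is to invoke the local Puiseux expansion $W_0(-1/\ee+s)=-1+\sqrt{2\ee\,s}-\tfrac{2}{3}\ee\,s+O(s^{3/2})$, valid on both sides of the branch point (real for $s>0$, complex with positive imaginary part for $s<0$), with $s=\alpha(\tau-\tau^{\star})$. Using the identity $w^{2}/(1+w)=1/u-2+u$ with $u=1+w$, one gets $\re{1/u}\to+\infty$ as $\tau\to\tau^{\star\,-}$ (since $u$ is real and vanishes positively), producing the $+\infty$ left-limit of $\frac{dg}{d\tau}$. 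As $\tau\to\tau^{\star\,+}$, $u\approx \tfrac{2\ee}{3}|s|+\ii\sqrt{2\ee|s|}$, so $|u|^{2}\sim 2\ee|s|$ and $\re{1/u}=\re{u}/|u|^{2}\to 1/3$; consequently $\re{w^{2}/(1+w)}\to 1/3-2=-5/3$, and multiplying by the prefactor $-1/(\tau^{\star\,2}\re{\alpha})=\ee^{2}|\alpha|$ yields the finite right-limit $-5\ee^{2}|\alpha|/3$. For $\alpha\in\complex_{-}^{l}\setminus\real_{<0}$, $\alpha\tau^{\star}\neq-1/\ee$ so $1+w\neq 0$ on the admissible range, the derivative formula remains analytic, and $\frac{dg(\alpha\tau)}{d\tau}$ simply vanishes at $\tau^{\star}$ by part~(a).
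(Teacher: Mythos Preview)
Your proof is correct and follows essentially the same route as the paper's: both derive the closed-form derivative $\frac{dg(\alpha\tau)}{d\tau}=-\frac{1}{\tau^{2}\re{\alpha}}\re{w^{2}/(1+w)}$, pass to polar coordinates in the $w$-plane to read off the critical locus $\rho=-\cos(2\theta)/\cos\theta$, push it forward through $w\mapsto w\ee^{w}$ to obtain $\Lambda$, and then argue via the ray $\tau\mapsto\alpha\tau$ together with the monotonicity of $\phi(\theta)$ to get (b) and (c). One slip: in your monotonicity sentence you wrote $\theta\mapsto\theta-\cos(2\theta)\tan\theta$, omitting the factor $\ee^{-\cos(2\theta)}$; the correct map is the one in~\eqref{eq::theta}, and this is what the paper proves is increasing by computing $\frac{d\phi}{d\theta}>0$ explicitly. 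The only genuine methodological difference is your treatment of the real branch-point limits: the paper parametrizes $W_0(\alpha\tau)$ for $\alpha\tau<-1/\ee$ by $\mathsf{w}=-\mathsf{u}\cot\mathsf{u}$ and applies L'H\^opital, whereas you use the Puiseux expansion of $W_0$ at $-1/\ee$; your approach is a bit more direct and yields the $-5\ee^{2}|\alpha|/3$ right-limit with less algebra.
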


\begin{figure}
    \centering
  \includegraphics[trim=30pt 0 40pt 0,clip,scale=0.176]{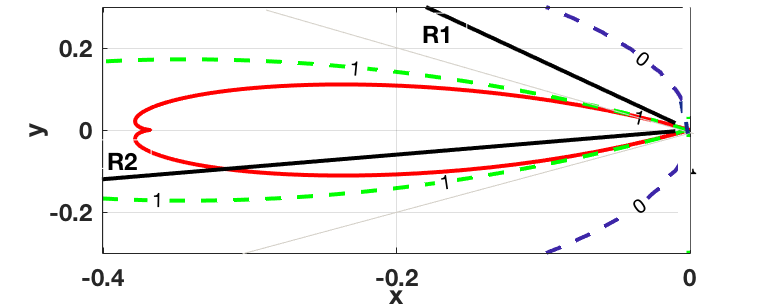}
    \includegraphics[trim=40pt 0 35pt 0,clip,scale=0.176]{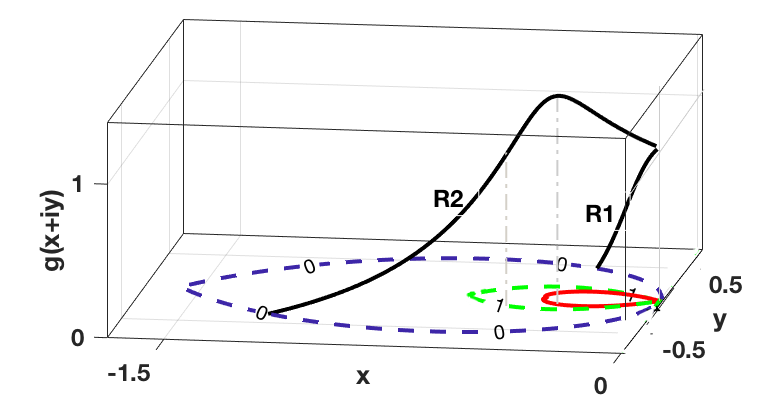}
   \caption{\small The left plot shows   $\Lambda$ in~\eqref{thm::desired_region_complex} (red curve) in a $(\mathsf{x},\mathsf{y})$ plane along with the level sets $\mathcal{C}_1$ (green curve) and $\mathcal{C}_0$ (blue curve). The right plot shows $g(\mathsf{x}+\ii\mathsf{y})$ vs. $(\mathsf{x},\mathsf{y})$ for points along the lines R1 and R2 shown in the top. 
}\vspace{-0.2in}
    \label{Fig::stab_complex}
\end{figure}

Figure~\ref{Fig::stab_complex} depicts $\Lambda$ in~\eqref{thm::desired_region_complex} (red curve) in a $(\mathsf{x},\mathsf{y})$ plane along with the level sets $\mathcal{C}_1$ (green curve) and $\mathcal{C}_0$ (blue curve). Note that $\Lambda$ is a simple closed curve that divides the space into a bounded interior and an unbounded exterior area. Moreover, $\Lambda$ is located inside $\mathcal{C}_1$ and between the lines $\mathsf{y}=\pm\mathsf{x}$. We note that $\mathsf{y}=\pm\mathsf{x}$ are also tangent to $\mathcal{C}_1$ at the origin 
%(a rigorous study of these geometric observations is available in the appendix of~\cite{HM-SSK-arxiv2})
(a rigorous study of these geometric observations is available in the appendix). 
The bottom plot in Fig.~\ref{Fig::stab_complex} shows also how $g(\mathsf{x}+\mathsf{y}\,\ii)$ varies along the lines R1 and R2 for points inside $\mathcal{S}_0$. As seen, the delay rate gain is strictly decreasing along R1. However, along R2 it is strictly increasing until R2 intersects $\Lambda$, and  it is strictly decreasing afterward until R2 intersects $\mathcal{C}_0$.  With Lemma~\ref{lem::desired_region_complex} at hand, we are now ready to present the main result of this section.
\begin{thm}[$g(\alpha\tau)$ vs. $\tau\in\real_{>0}$ for $\alpha\in\complex_{-}^l$]\label{lem::g_complexAlpha}
Consider the delay rate gain function~\eqref{eq::delay_rate_gain}. Let $\alpha\in\complex_{-}^l$ be given. Recall $\bar{\tau}$ from Lemma~\ref{lem::g0-g1}. Then, the following assertions hold. \begin{itemize}
 \item[(a)] If $\arg(\alpha)\notin(\frac{3\pi}{4},\frac{5\pi}{4})$, then $g(\alpha\tau)$ decreases strictly from $1$ to $0$ for $\tau\in[0,\bar{\tau}]$. %<1$ for all $\tau\in\real_{>0}$.
 
\item[(b)] If $\arg(\alpha)\in(\frac{3\pi}{4},\frac{5\pi}{4})$, then: (i)
$g(\alpha\tau)$ increases strictly from $1$ to $g(\alpha\tau^\star)$ for $\tau\in[0,\tau^\star]$ and  decreases strictly from $g(\alpha\tau^\star)$ to $0$ for $\tau\in[\tau^\star,\bar{\tau}]$, where $\tau^\star$ is specified in the statement (c) of Lemma~\ref{lem::desired_region_complex}; 
%and satisfies $\tau^\star\in(0,\tilde{\tau})$; 
(ii) 
$\tilde{\tau}\in(0,\bar{\tau})$ such that $g(\alpha\tilde{\tau})=1$ exists and is unique and satisfies $0<\tau^\star<\tilde{\tau}<\bar{\tau}$; (iii) $g(\alpha\tau)>1$ for $\tau\in(0,\tilde{\tau})$, and $ g(\alpha\tau)<1$ for $\tau\in(\tilde{\tau},\bar{\tau}]$.
     \end{itemize}
\end{thm}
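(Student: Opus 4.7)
The plan is to assemble this theorem essentially as a corollary of the earlier results, using Lemma~\ref{lem::g0-g1} for the boundary values of $g(\alpha\tau)$ at $\tau=0$ and $\tau=\bar{\tau}$, Lemma~\ref{lem::g_continuous} for the continuity needed to apply the intermediate value theorem, and Lemma~\ref{lem::desired_region_complex} for the sign of ${\text{d}g(\alpha\tau)}/{\text{d}\tau}$ that yields monotonicity via integration. The geometric picture that makes everything go through is already encoded in the curve $\Lambda$ of Lemma~\ref{lem::desired_region_complex}: the ray $\{\alpha\tau:\tau\in(0,\bar{\tau})\}$ either stays entirely outside $\intor(\Lambda)$ (when $\arg(\alpha)\notin(\tfrac{3\pi}{4},\tfrac{5\pi}{4})$) or enters $\intor(\Lambda)$, exits through $\Lambda$ at exactly one point $\tau^\star$, and then proceeds to the zero locus $\mathcal{C}_0$ (when $\arg(\alpha)\in(\tfrac{3\pi}{4},\tfrac{5\pi}{4})$).

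For part (a), I would invoke Lemma~\ref{lem::desired_region_complex}(b) directly: under $\arg(\alpha)\notin(\tfrac{3\pi}{4},\tfrac{5\pi}{4})$, the derivative ${\text{d}g(\alpha\tau)}/{\text{d}\tau}$ is strictly negative on $(0,\bar{\tau}]$, so $g(\alpha\tau)$ is strictly decreasing on $[0,\bar{\tau}]$. The boundary values $g(\alpha\cdot 0)=1$ (from~\eqref{eq::lim-g-0}, extended continuously) and $g(\alpha\bar{\tau})=0$ (from~\eqref{eq::tau_bar}) then pin down the range, completing~(a).

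For part (b)(i), I would apply Lemma~\ref{lem::desired_region_complex}(c) to split $[0,\bar{\tau}]$ at $\tau^\star$: on $(0,\tau^\star)$ the derivative is strictly positive, so $g$ strictly increases from $g(\alpha\cdot 0)=1$ to the value $g(\alpha\tau^\star)$; on $(\tau^\star,\bar{\tau}]$ the derivative is strictly negative, so $g$ strictly decreases from $g(\alpha\tau^\star)$ down to $g(\alpha\bar{\tau})=0$. For (ii), note that strict monotone increase on $[0,\tau^\star]$ from value $1$ gives $g(\alpha\tau^\star)>1$, while $g(\alpha\bar{\tau})=0<1$. Since $g(\alpha\tau)$ is continuous on $[\tau^\star,\bar{\tau}]$ and strictly decreasing there, the intermediate value theorem yields a unique $\tilde{\tau}\in(\tau^\star,\bar{\tau})$ with $g(\alpha\tilde{\tau})=1$, and uniqueness on the full interval $(0,\bar{\tau})$ follows because on $(0,\tau^\star]$ strict increase from $1$ keeps $g(\alpha\tau)>1$ for $\tau>0$, ruling out another root. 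Part~(iii) is then immediate: strict increase on $(0,\tau^\star]$ combined with strict decrease on $[\tau^\star,\tilde{\tau})$ gives $g(\alpha\tau)>1$ on $(0,\tilde{\tau})$, and strict decrease past $\tilde{\tau}$ gives $g(\alpha\tau)<1$ on $(\tilde{\tau},\bar{\tau}]$.

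I do not foresee a genuine obstacle here since Lemma~\ref{lem::desired_region_complex} already encapsulates all the delicate analysis involving the transcendental structure of $W_0$ and the curve $\Lambda$. The only care points are (i)~remembering to treat $\tau=0$ as a limit (since $g$ is defined piecewise at the origin via~\eqref{eq::delay_rate_gain}) so that the statement ``$g$ strictly decreases from $1$'' makes sense, and (ii)~making sure that the uniqueness of $\tilde{\tau}$ is argued on the \emph{entire} interval $(0,\bar{\tau})$ rather than only on $(\tau^\star,\bar{\tau})$, which requires the observation that $g(\alpha\tau)>1$ strictly throughout $(0,\tau^\star]$. These are straightforward bookkeeping items rather than substantial difficulties.
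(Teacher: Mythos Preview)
Your proposal is correct and follows essentially the same approach as the paper: the paper's own proof simply recalls $\lim_{\tau\to0}g(\alpha\tau)=1$, $g(\alpha\bar{\tau})=0$, and continuity of $g(\alpha\tau)$, and then defers to statements~(b) and~(c) of Lemma~\ref{lem::desired_region_complex} for parts~(a) and~(b) respectively. You have spelled out the same argument in more detail, including the intermediate value theorem step for~$\tilde{\tau}$ and the bookkeeping for uniqueness, which the paper leaves implicit.
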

%%%%
%\margin{lets keep this proof. Please see why the proof environment does not work here.}
\begin{proof}%[Proof of Theorem~\ref{lem::g_complexAlpha}]
We recall that $\lim_{\tau\to0}g(\alpha\tau)=1$, $g(\alpha\bar{\tau})=0$, and $g(\alpha\tau)$ is a continuous function of $\tau\in\real_{\geq0}$. Then, the proof of the statements~(a) and (b) follows respectively from  the statements (b) and (c) of Lemma~\ref{lem::desired_region_complex}. 
\end{proof}

%%%%%%%%

\section{Delay effects on the rate of convergence}\label{sec::main}\vspace{-0.05in}
In this section, using the properties of the delay rate gain function, we inspect closely how the convergence rate of  system~\eqref{eq::DDE-sys} changes with time delay. %Our study relies on the properties of the delay rate gain function that are established in the previous section. 
%Earlier we showed that using the delay rate gain function we can write the rate of convergence~\eqref{eq::rho} in the equivalent form~\eqref{eq::rho-tau-g}. Variation of each component $-\frac{1}{\tau}\re{W_0(\alpha_i\tau)}=g(\alpha_i\tau)\,|\re{\alpha_i}|$ of~\eqref{eq::rho-tau-g} with respect to $\tau\in\real_{>0}$ can be characterized via the results in  Lemma~\ref{lem::alpha-real-positiverate} and Theorem~\ref{lem::g_complexAlpha}. Now in this section, we look at these variations collectively to characterize the variation of the rate of convergence of the time-delayed system~\eqref{eq::DDE-sys} versus time delay. 
We start by defining some notations. First, let
\begin{subequations}\label{eq::def_I}
\begin{align}
   & \mathcal{I}_1=\{k\until{n}|\re{\alpha_k}=\re{\alpha_1}\} ,\\&{\mathcal{I}_{\text{in}}}=\big\{k\until{n}|\arg(\alpha_k)\in(\frac{3\pi}{4},\frac{5\pi}{4})\big\},\\&{\mathcal{I}_{\text{out}}}=\{1,\cdots,n\}\backslash \mathcal{I}_{\text{in}}.
\end{align}
\end{subequations}
Next, recall $\bar{\tau}^i$ from~\eqref{eq::delay_bound}, and also let
\begin{subequations}
\begin{align}\label{eq::tau_ii}
    \tilde{\tau}_i&=\max\{\tau\in[0,\bar{\tau}_i]\,|\,g(\alpha_i\tau)=1\},\\
\tau_i^\star&=
\arg\max_{\tau\in[0,\bar{\tau}_i]}g(\alpha_i\tau), \end{align}
\end{subequations}
 $i\until{n}$. Recall from Lemma~\ref{lem::g0-g1} that $g(\alpha_i\bar{\tau}_i)=0$. Moreover, by virtue of Lemma~\ref{thm::admis_tau} the critical value of delay for the time-delayed system~\eqref{eq::DDE-sys} is $\bar{\tau}=\min\{\bar{\tau}_i\}_{i=1}^n$. Also from Theorem~\ref{lem::g_complexAlpha} we can obtain the following result.
 \begin{cor}[relative size of $(\bar{\tau}_i,\tau_i^\star,\tilde{\tau}_i)$]
 Consider $\textup{eig}(\vect{A})=\{\alpha_i\}_{i=1}^n\!\subset\!\complex_{-}^l$ of the system matrix of~\eqref{eq::DDE-sys}.~Then, 
  \begin{subequations}\label{eq::component_tau_i}
 \begin{alignat}{2}
    & 0<\tau^{\star}_i<\tilde{\tau}_i<\bar{\tau}_i,\quad &i\in{\mathcal{I}_{\text{in}}},\,\\
    & ~~~~\tau_i^\star=\tilde{\tau}_i=0,\quad &\,\,i\in{\mathcal{I}_{\text{out}}},
 \end{alignat}
 \end{subequations}
 where for $i\!\in\!{\mathcal{I}_{\text{in}}}$, $\tau_i^\star$ is a unique point obtained from~\eqref{eq::tau-star-complex} for $\alpha\!=\!\alpha_i$, and $\tilde{\tau}_i$ is the unique solution of $g(\alpha_i\tau)\!=\!1$ in $(0,\bar{\tau}_i)$.
 \end{cor}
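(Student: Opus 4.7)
The proof is essentially an unpacking of Theorem~\ref{lem::g_complexAlpha} applied eigenvalue-by-eigenvalue, so my plan is to split on whether $i\in\mathcal{I}_{\text{in}}$ or $i\in\mathcal{I}_{\text{out}}$ and reconcile the ``argmax'' / ``max of the unit-crossing set'' definitions of $\tau_i^\star$ and $\tilde{\tau}_i$ with the monotonicity picture that the theorem supplies.

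First I would dispose of the case $i\in\mathcal{I}_{\text{out}}$. Here $\arg(\alpha_i)\notin(\tfrac{3\pi}{4},\tfrac{5\pi}{4})$, so part~(a) of Theorem~\ref{lem::g_complexAlpha} tells me that $\tau\mapsto g(\alpha_i\tau)$ is a strictly decreasing continuous function on $[0,\bar{\tau}_i]$ whose value drops from $1$ to $0$. Strict decrease from $1$ immediately forces $g(\alpha_i\tau)<1$ for every $\tau\in(0,\bar{\tau}_i]$, so the equation $g(\alpha_i\tau)=1$ has $\tau=0$ as its only solution in $[0,\bar{\tau}_i]$, giving $\tilde{\tau}_i=0$. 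Likewise, strict decrease makes $\tau=0$ the unique maximizer on the interval, so $\tau_i^\star=0$. This handles the second displayed line of the corollary.

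Next I would treat $i\in\mathcal{I}_{\text{in}}$, where $\arg(\alpha_i)\in(\tfrac{3\pi}{4},\tfrac{5\pi}{4})$. Part~(b) of Theorem~\ref{lem::g_complexAlpha} gives exactly the shape I need: there exists $\tau_i^\star\in(0,\bar{\tau}_i)$, obtained from~\eqref{eq::tau-star-complex} with $\alpha=\alpha_i$, such that $g(\alpha_i\tau)$ strictly increases on $[0,\tau_i^\star]$ from $1$ to $g(\alpha_i\tau_i^\star)>1$ and strictly decreases on $[\tau_i^\star,\bar{\tau}_i]$ from $g(\alpha_i\tau_i^\star)$ down to $0$. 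By strict monotonicity on each piece, $\tau_i^\star$ is the unique maximizer of $g(\alpha_i\tau)$ on $[0,\bar{\tau}_i]$, matching the definition of $\tau_i^\star$. The theorem also furnishes a unique $\tilde{\tau}_i\in(\tau_i^\star,\bar{\tau}_i)$ at which $g(\alpha_i\tilde{\tau}_i)=1$; since $g(\alpha_i\cdot 0)=1$ is achieved only at the endpoint while the unit value is re-attained exactly once on the descending branch, this $\tilde{\tau}_i$ coincides with the ``$\max$'' in definition~\eqref{eq::tau_ii}. Stringing these inclusions together yields $0<\tau_i^\star<\tilde{\tau}_i<\bar{\tau}_i$.

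I do not anticipate a real obstacle here; the only point that deserves care is to verify that the two different-looking defining formulas for $\tilde{\tau}_i$ and $\tau_i^\star$ in~\eqref{eq::tau_ii} genuinely pick out the same points that Theorem~\ref{lem::g_complexAlpha} produces. This boils down to invoking the strict monotonicity on each of the two subintervals delivered by the theorem together with the endpoint values $\lim_{\tau\to 0}g(\alpha_i\tau)=1$ and $g(\alpha_i\bar{\tau}_i)=0$ from Lemma~\ref{lem::g0-g1}, which together rule out any additional unit-level crossing or interior maximizer. A short closing sentence would record that the explicit formula for $\tau_i^\star$ in the $\mathcal{I}_{\text{in}}$ case is inherited verbatim from~\eqref{eq::tau-star-complex}.
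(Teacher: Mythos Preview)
Your proposal is correct and follows exactly the approach the paper intends: the corollary is stated immediately after Theorem~\ref{lem::g_complexAlpha} with the remark that it ``can be obtained'' from that theorem, and your case split on $\mathcal{I}_{\text{in}}$ versus $\mathcal{I}_{\text{out}}$ together with the monotonicity profiles from parts~(a) and~(b) is precisely the intended unpacking. The only care point you flag---reconciling the argmax/max definitions in~\eqref{eq::tau_ii} with the $\tau^\star$ and $\tilde{\tau}$ produced by Theorem~\ref{lem::g_complexAlpha}---is handled correctly via strict monotonicity on each subinterval.
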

 Lastly, we let
\begin{subequations}
\begin{align}
     \tilde{\tau}&=\max\{\tau\in[0,\bar{\tau}]\,|\,\rho_\tau=|\re{\alpha_1}|\},\label{eq::tilde_tau_sys}\\
     \tau^\star&=\arg\max_{\tau\in[0,\bar{\tau})}\rho_\tau,\\
     \rho^{i}_\tau&=g(\alpha_i\tau)|\re{\alpha_i}|,\quad i\in\{1,\cdots,n\}.
\end{align}
\end{subequations}
With the proper notations at hand, we now present our first result, which specifies what system~\eqref{eq::DDE-sys} can have a higher rate of convergence in the presence of the time delay.
\begin{thm}[Systems for which rate of convergence can increase by time delay]\label{thm::main_type_complex}
 Consider the linear time-delayed system~\eqref{eq::DDE-sys} when $\{\alpha_i\}_{i=1}^n\subset\complex_{-}^l$. Recall the admissible delay bound $\bar{\tau}$ given by~\eqref{eq::delay_bound}. Then there always exists a $\tau\in(0,\bar{\tau})$ for which $\rho_\tau>\rho_0=|\re{\alpha_1}|$ if and only if 
 {\rm$\mathcal{I}_1\subset\mathcal{I}_{\text{in}}$}. 
\end{thm}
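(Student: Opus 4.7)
The plan is to prove the two directions of the equivalence separately, leveraging Theorem~\ref{lem::g_complexAlpha} applied eigenvalue-by-eigenvalue to the expression $\rho_\tau=\min_i\rho_\tau^i=\min_i g(\alpha_i\tau)|\re{\alpha_i}|$ from~\eqref{eq::rho-tau-g}. The key observation is that $\rho_0=|\re{\alpha_1}|$ is determined solely by the rightmost eigenvalues (those indexed by $\mathcal{I}_1$), so whether $\rho_\tau$ can exceed $\rho_0$ hinges on whether the factor $g(\alpha_i\tau)$ can exceed $1$ precisely for the indices $i\in\mathcal{I}_1$.

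For the \emph{sufficiency} direction ($\mathcal{I}_1\subset\mathcal{I}_{\text{in}}\,\Rightarrow\,\exists\tau$ with $\rho_\tau>\rho_0$), I would fix $i\in\mathcal{I}_1$, apply Theorem~\ref{lem::g_complexAlpha}(b) to conclude that $g(\alpha_i\tau)>1$ on $(0,\tilde{\tau}_i)$, and therefore $\rho_\tau^i=g(\alpha_i\tau)|\re{\alpha_1}|>|\re{\alpha_1}|$ on that interval. For the remaining indices $i\in\{1,\dots,n\}\backslash\mathcal{I}_1$, we have $|\re{\alpha_i}|>|\re{\alpha_1}|$ strictly. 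Using the continuity of $g(\alpha_i\tau)$ in $\tau$ (Lemma~\ref{lem::g_continuous}) together with $\lim_{\tau\to 0}g(\alpha_i\tau)=1$ from~\eqref{eq::lim-g-0}, pick $\delta_i>0$ small enough that $g(\alpha_i\tau)|\re{\alpha_i}|>|\re{\alpha_1}|$ on $(0,\delta_i)$. Choosing $\tau^{\dagger}\in(0,\min\{\min_{i\in\mathcal{I}_1}\tilde{\tau}_i,\min_{i\notin\mathcal{I}_1}\delta_i,\bar{\tau}\})$, every $\rho_{\tau^{\dagger}}^i$ strictly exceeds $|\re{\alpha_1}|$, hence so does their minimum $\rho_{\tau^{\dagger}}$.

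For the \emph{necessity} direction ($\mathcal{I}_1\not\subset\mathcal{I}_{\text{in}}\,\Rightarrow\,\rho_\tau\le\rho_0$ for all $\tau\in(0,\bar{\tau})$), I would pick some $j\in\mathcal{I}_1\backslash\mathcal{I}_{\text{in}}$, so $\arg(\alpha_j)\notin(\tfrac{3\pi}{4},\tfrac{5\pi}{4})$ while $|\re{\alpha_j}|=|\re{\alpha_1}|$. By Theorem~\ref{lem::g_complexAlpha}(a), $g(\alpha_j\tau)$ is strictly decreasing from $1$ to $0$ on $[0,\bar{\tau}_j]$, so $g(\alpha_j\tau)<1$ for every $\tau\in(0,\bar{\tau}_j]$. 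Since $\bar{\tau}\le\bar{\tau}_j$ by the definition $\bar{\tau}=\min_i\bar{\tau}_i$, this bound holds in particular throughout $(0,\bar{\tau})$, and therefore
\begin{equation*}
\rho_\tau\le\rho_\tau^j=g(\alpha_j\tau)\,|\re{\alpha_j}|<|\re{\alpha_1}|=\rho_0,\qquad\tau\in(0,\bar{\tau}),
\end{equation*}
which contradicts $\rho_\tau>\rho_0$ for any admissible $\tau$.

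I do not anticipate a serious obstacle: both directions reduce cleanly to Theorem~\ref{lem::g_complexAlpha}, once one recognizes that the rightmost eigenvalues are the only ones that can \emph{lower} $\rho_\tau$ to $|\re{\alpha_1}|$ for small $\tau>0$, whereas the non-rightmost ones contribute strictly larger values of $\rho_\tau^i$ at $\tau=0$ and, by continuity, remain strictly larger on a small interval. The only mild care needed is in the sufficiency argument, where one must take the intersection of finitely many open neighborhoods of $\tau=0$ to obtain a single admissible $\tau^{\dagger}$—harmless because the index set is finite.
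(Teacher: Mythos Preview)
Your proposal is correct and follows essentially the same approach as the paper's proof: both directions reduce to Theorem~\ref{lem::g_complexAlpha} applied eigenvalue-by-eigenvalue through the representation $\rho_\tau=\min_i g(\alpha_i\tau)\,|\re{\alpha_i}|$, with continuity (Lemma~\ref{lem::g_continuous} and~\eqref{eq::lim-g-0}) handling the non-rightmost eigenvalues in the sufficiency direction. The only cosmetic difference is that for necessity you restrict attention to $(0,\bar{\tau}_j]$ and invoke $\bar{\tau}\le\bar{\tau}_j$, whereas the paper additionally cites Lemma~\ref{lem::g_0} to conclude $g(\alpha_j\tau)<1$ for all $\tau\in\real_{>0}$; either variant suffices.
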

\begin{proof}
If there exists a $j\in\mathcal{I}_1$ that is not in $\mathcal{I}_{\text{in}}$, i.e., $\arg(\alpha_j)\notin(\frac{3\pi}{4},\frac{5\pi}{4})$, then by virtue of Lemma~\eqref{lem::g_0} and the statement (a) of Theorem~\ref{lem::g_complexAlpha} we know that $g(\alpha_j\tau)<1$ for any $\tau\in\real_{>0}$. Subsequently, since $|\re{\alpha_1}|\leq \cdots\leq |\re{\alpha_n}|$, from the definition of the $\rho_\tau$ in~\eqref{eq::rho-tau-g} we obtain that $\rho_\tau< |\re{\alpha_1}|$ for all $\tau\in(0,\bar{\tau})$.
Now, assume that $\mathcal{I}_1\subset\mathcal{I}_{\text{in}}$. Then, by virtue of the statement (b) of Theorem~\ref{lem::g_complexAlpha}, we know that $g(\alpha_i\tau)>1$ for any $\tau\in(0,\tilde{\tau}_i)$ for $i\in\mathcal{I}_1$, (recall  $\tilde{\tau}_i\in(0,\bar{\tau}_i)$ due to~\eqref{eq::component_tau_i}). Subsequently, since $g(0)=1$ and $|\re{\alpha_k}|> |\re{\alpha_1}|$ for $k\in\{1,\cdots,n\}\backslash\mathcal{I}_1$, then by virtue of Lemma~\ref{lem::g_continuous}, there exists a $\check{\tau}\in((0,\min\{\tilde{\tau}_i\}_{i\in\mathcal{I}_{1}})\cap (0,\bar{\tau}))$ such that $g(\alpha_i\tau)>1$, $i\in\{1,\cdots,n\}$, for any $\tau\in(0,\check{\tau})$. Then, the proof of the sufficiency of the theorem statement follows from the definition of $\rho_\tau$ in~\eqref{eq::rho-tau-g} and its continuity with respect to $\tau\in\real_{>0}$.
%(see Lemma~\ref{lem::p_tau_continuous}).
\end{proof}
%From the proof of Theorem~\ref{thm::main_type_complex} we can deduce that for systems that do not satisfy the theorem's condition, we have $\tilde{\tau}=\tau^{\star}=0$; but, for those that satisfy the condition these values are non-zero. 
Our next result specifies for what values of time delay a system, which satisfies the necessary and sufficient condition of Theorem~\eqref{thm::main_type_complex}, experiences an increase in its rate of convergence in the presence of delay. This result also gives the value of $\tilde{\tau}$ and provides an estimate on the value of  $\tau^\star$.
\begin{comment}
Here, it is worth to note that 
The analysis method of~\cite{WQ-RS:13} relies on factoring the CE of system~\eqref{eq::DDE-sys} into $n$ scalar equations that each depends on one of the eigenvalues of $\vect{A}$, and studying how the roots of these $n$ scalar equations are affected by $\tau>0$. When the eigenvalue corresponding to the scalar equation $i\in\{1,\cdots,n\}$ has a larger real part than imaginary part,~\cite{WQ-RS:13} shows that there exists a $\tau_i^\star$ in the admissible delay range $[0,\bar{\tau}_i)$ such that the magnitude of the real part of the RMR of the scalar equation $i$ is ascending function of delay for any $\tau\in[0,\tau_i^\star)$, and descending function of delay for $\tau\in[\tau^\star_i,\bar{\tau}_i)$. 
However, since the relative size of rate of change of the real part of the RMR of each scalar equation with delay is not known, the variation of the real part of the RMR of the CE with delay can not be determined fully. What can be known is that in the special case, when all the eigenvalues of Hurwitz matrix $\vect{A}$ have a larger real part than imaginary part,  
the convergence rate increases with time delay over  $[0,\min\{\tau_i^\star\}_{i=1}^n)\subset[0,\bar{\tau})$, but its variation beyond this range cannot be determined. 
\end{comment}

\begin{thm}[Ranges of delay for which the rate of convergence of~\eqref{eq::DDE-sys} increases with delay]\label{thm::com_alpha_n_rate}
Consider the linear time-delayed system~\eqref{eq::DDE-sys} when $\{\alpha_i\}_{i=1}^n\subset\complex_{-}^l$. Recall the admissible delay bound $\bar{\tau}$ given by~\eqref{eq::delay_bound}. Suppose that {\rm$\mathcal{I}_1\subset\mathcal{I}_{\text{in}}$}. Then, the following assertions hold. 
\begin{itemize}
    \item[(a)] 
    $\tilde{\tau}=\min\{\eta_i\}_{i=1}^n$, where $\eta_i$ is the unique solution of  $g(\alpha_i\tau)=\frac{|\re{\alpha_1}|}{|\re{\alpha_i}|}$ for $\tau\in(0,\bar{\tau}_i)$. Moreover, $\min\{\tau_i^\star\}_{i=1}^n<\tilde{\tau}<\bar{\tau}$.
    \item[(b)] $\rho_{\tau}>\rho_0=|\re{\alpha_1}|$  for $\tau\in(0,\tilde{\tau})\subset(0,\bar{\tau})$, $\rho_{\tau}=\rho_0=|\re{\alpha_1}|$ at $\tau=\tilde{\tau}$ and $\rho_{\tau}< \rho_0=|\re{\alpha_1}|$ for $\tau\in(\tilde{\tau},\bar{\tau})$. Moreover, $\rho_\tau$ decreases strictly with $\tau\in(\tilde{\tau},\bar{\tau})$.
    \item[(c)] $\tau^\star\in\big([\min\{\tau^\star_i\}_{i=1}^n,\max\{\tau^\star_i\}_{i=1}^n]\cap(0,\tilde{\tau})\big)\subset(0,\bar{\tau}).$
     \end{itemize}
\end{thm}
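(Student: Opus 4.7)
My plan is to exploit the decomposition $\rho_\tau = \min_i f_i(\tau)$ with $f_i(\tau) := g(\alpha_i\tau)\,|\re{\alpha_i}|$ and reduce each claim to properties of the individual $f_i$ supplied by Theorem~\ref{lem::g_complexAlpha}. Under the assumption $\mathcal{I}_1 \subset \mathcal{I}_{\text{in}}$, the profile of each $f_i$ is fully determined: for $i \in \mathcal{I}_{\text{in}}$, $g(\alpha_i\cdot)$ climbs strictly from $1$ to $g(\alpha_i\tau_i^\star) > 1$ on $[0,\tau_i^\star]$ and then descends strictly to $0$ on $[\tau_i^\star,\bar{\tau}_i]$; for $i \in \mathcal{I}_{\text{out}}$ it descends strictly from $1$ to $0$ on $[0,\bar{\tau}_i]$. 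Since the target level $c_i := |\re{\alpha_1}|/|\re{\alpha_i}| \in (0,1]$ lies strictly below $g(\alpha_i\tau_i^\star)$ for every $i \in \mathcal{I}_{\text{in}}$, a unique crossing time $\eta_i$ with $g(\alpha_i\eta_i) = c_i$ exists in $(0,\bar{\tau}_i)$, and it satisfies $\eta_i > \tau_i^\star$ for $i \in \mathcal{I}_{\text{in}}$ (with $\eta_i = \tilde{\tau}_i$ when $i \in \mathcal{I}_1$, using $\tilde{\tau}_i$ from~\eqref{eq::tau_ii}).

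For part~(a), I would observe that $\rho_\tau \geq |\re{\alpha_1}|$ is equivalent to $g(\alpha_i\tau) \geq c_i$ for every $i$, which by the profile above translates to $\tau \in [0,\eta_i]$ for every $i$. Hence the largest $\tau$ satisfying $\rho_\tau = |\re{\alpha_1}|$ is exactly $\tilde{\tau} = \min_i \eta_i$. The bounds $\min_i\tau_i^\star < \tilde{\tau} < \bar{\tau}$ then follow by applying $\tau_j^\star < \eta_j < \bar{\tau}_j$ to the minimizing index $j = \argmin_i \eta_i$ (where $\tau_j^\star = 0$ is allowed if $j \in \mathcal{I}_{\text{out}}$). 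The first three claims of part~(b) come out as direct corollaries: on $(0,\tilde{\tau})$, the condition $\tau < \eta_i$ for every $i$ yields $f_i(\tau) > |\re{\alpha_1}|$ strictly, hence $\rho_\tau > |\re{\alpha_1}|$; at $\tilde{\tau}$, some $f_j$ attains the level while the rest are $\geq |\re{\alpha_1}|$; past $\tilde{\tau}$, that same $f_j$ drops strictly below the level, so $\rho_\tau < |\re{\alpha_1}|$.

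The main technical step is the strict decrease of $\rho_\tau$ on $(\tilde{\tau},\bar{\tau})$, for which I plan a ``min-of-strictly-decreasing'' argument: pick any $\tau_1 < \tau_2$ in $(\tilde{\tau},\bar{\tau})$ and let $k$ achieve the minimum at $\tau_1$. Then $f_k(\tau_1) = \rho_{\tau_1} < |\re{\alpha_1}|$ forces $g(\alpha_k\tau_1) < c_k$; the unique-crossing structure of $f_k$ places $\tau_1$ strictly past $\eta_k$, and hence past $\tau_k^\star$ (for $k \in \mathcal{I}_{\text{in}}$) or simply in the monotonic region (for $k \in \mathcal{I}_{\text{out}}$). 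In either case Theorem~\ref{lem::g_complexAlpha} implies that $f_k$ is strictly decreasing on $[\tau_1,\tau_2]$, so $\rho_{\tau_2} \leq f_k(\tau_2) < f_k(\tau_1) = \rho_{\tau_1}$. For part~(c), I would use a sandwich argument on regions of monotonicity: on $(0,\min_i\tau_i^\star)$ every $f_i$ is strictly increasing, hence $\rho_\tau$ is strictly increasing; on $(\max_i\tau_i^\star,\bar{\tau})$ every $f_i$ is strictly decreasing, hence $\rho_\tau$ is strictly decreasing. This pins the maximizer $\tau^\star$ into $[\min_i\tau_i^\star,\max_i\tau_i^\star]$, and combining this with $\tau^\star \in (0,\tilde{\tau})$ (immediate from part~(b)) yields the stated containment. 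I expect the unique-crossing argument driving the strict-decrease step to be the most delicate part, since it is where the ``up-then-down'' profile of $g(\alpha_i\cdot)$ from Theorem~\ref{lem::g_complexAlpha} is doing all the heavy lifting.
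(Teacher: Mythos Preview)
Your approach is essentially the paper's: decompose $\rho_\tau=\min_i f_i(\tau)$ with $f_i(\tau)=g(\alpha_i\tau)\,|\re{\alpha_i}|$, use Theorem~\ref{lem::g_complexAlpha} to read off the up--then--down profile of each $f_i$, define the unique crossing times $\eta_i$, and combine. Your treatment of the strict decrease on $(\tilde\tau,\bar\tau)$ via ``pick the minimizer at $\tau_1$ and observe it is already past $\eta_k\ge\tau_k^\star$'' is in fact cleaner than the paper's terse ``minimum of strictly decreasing functions is strictly decreasing,'' which glosses over the possibility that some $f_i$ with $i\in\mathcal{I}_{\text{in}}$ may still be increasing on part of $(\tilde\tau,\bar\tau)$. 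Likewise your sandwich argument for~(c) is the same idea as the paper's, only made explicit on both ends.

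One small slip: your justification of $\tilde\tau<\bar\tau$ via ``apply $\eta_j<\bar\tau_j$ at $j=\argmin_i\eta_i$'' only yields $\tilde\tau<\bar\tau_j$, not $\tilde\tau<\bar\tau=\min_i\bar\tau_i$. The correct one-liner is to take $k=\argmin_i\bar\tau_i$ and write $\tilde\tau=\min_i\eta_i\le\eta_k<\bar\tau_k=\bar\tau$. With that fix your argument is complete.
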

\begin{proof}
For $j\in\mathcal{I}_{\text{out}}$, the statement (a) of Theorem~\eqref{lem::g_complexAlpha} guarantees that $g(\alpha_j\tau)$ is strictly decreasing from $1$  to $0$  for $\tau\in[0,\bar{\tau}_j]$. Thus, for  $j\in\mathcal{I}_{\text{out}}$, given the continuity of $g(\alpha_j\tau)$ in $\tau\in[0,\bar{\tau}_j]$ and $\frac{|\re{\alpha_1}|}{|\re{\alpha_j}|}<1$ (recall that $\mathcal{I}_{1}\not\subset\mathcal{I}_{\text{out}}$), $\eta_j$ is a non-zero unique value in $(0,\bar{\tau}_j)$ at which we have $g(\alpha_j\eta_j)|\re{\alpha_j}|=|\re{\alpha_1}|$. Moreover, for $j\in\mathcal{I}_{\text{out}}$, we have $g(\alpha_j\eta_j)|\re{\alpha_j}| <|\re{\alpha_1}|$ for $\tau\in(\eta_j,\bar{\tau}_j)$ and $g(\alpha_j\eta_j)|\re{\alpha_j}| >|\re{\alpha_1}|$ for $\tau\in(0,\eta_j)$. Recall also that $\tau_j^\star=0$ for $j\in\mathcal{I}_{\text{out}}$. For $j\in\mathcal{I}_{\text{in}}$, the statement (b) of Theorem~\eqref{lem::g_complexAlpha} guarantees that $g(\alpha_j\tau)$ is strictly increasing from $1$ to its maximum value $g(\alpha_j\tau^\star_j)>1$ for $\tau\in(0,\tau^\star_j)$ and it is strictly decreasing from $g(\alpha_j\tau^\star_j)>1$ to zero for $\tau\in[\tau_j^{\star},\bar{\tau}_j]$. Thus, for  $j\in\mathcal{I}_{\text{in}}$, given the continuity of $g(\alpha_j\tau)$ in $\tau\in[0,\bar{\tau}_j]$ and $\frac{|\re{\alpha_1}|}{|\re{\alpha_j}|}\leq1$ (recall that $\mathcal{I}_1\subset\mathcal{I}_{\text{in}}$), $\eta_j$ is a non-zero unique value in $(\tau_j^{\star},\bar{\tau}_j)$ at which we have $g(\alpha_j\eta_j)|\re{\alpha_j}|=|\re{\alpha_1}|$. Moreover, for $j\in\mathcal{I}_{\text{in}}$, we have $g(\alpha_j\eta_j)|\re{\alpha_j}| <|\re{\alpha_1}|$ for $\tau\in(\eta_j,\bar{\tau}_j)$ and $g(\alpha_j\eta_j)|\re{\alpha_j}| >|\re{\alpha_1}|$ for $\tau\in(0,\eta_j)$. From the aforementioned observations, the validity of the statements (a) and (b) follows from the continuity of $\rho_\tau$ in $\tau\in[0,\bar{\tau}]$, its definition~\eqref{eq::rho-tau-g} and also noting that the minimum of a set of strictly decreasing functions is also strictly decreasing.

%To prove the statement~(c) we proceed as follows. 
Proof of statement (c): From the statement (b) we can conclude that $0<\tau^\star<\tilde{\tau}$. Given the definition of $\rho_\tau$ in~\eqref{eq::rho-tau-g}, we already know that $\rho_{\tau^\star}=\min\{g(\alpha_i\tau^\star)\,|\re{\alpha_i}|\}_{i=1}^n\leq g(\alpha_i\tau^\star)|\re{\alpha_i}|\leq g(\alpha_i\tau_i^\star)|\re{\alpha_i}|$, $i\in\{1,\cdots,n\}$. Therefore, $\tau^\star\leq \max\{\tau^\star_i\}_{i=1}^n$.  If $\mathcal{I}_{\text{out}}\neq\{\}$, because of $\tau^\star_j=0$, $j\in\mathcal{I}_\text{out}$, then $\tau^\star\geq \min\{\tau_i\}_{i=1}^n$ is trivial. Now assume $\mathcal{I}_{\text{out}}=\{\}$. In this case if $\tau^\star$ is not equal to any of the $\tau^\star_i$, $i\in\{1,\cdots,n\}$, then in order for $\tau^\star$ to be a maximizer point we should have non-empty $\bar{\mathcal{I}}\subsetneq \{1,\cdots,n\}$ and $\hat{\mathcal{I}}\subsetneq \{1,\cdots,n\}$ such that $\frac{\text{d}g(\alpha_i\tau^{\star})}{\text{d}\tau}>0$ for $i\in\bar{\mathcal{I}}$ and $\frac{\text{d}g(\alpha_i\tau^{\star})}{\text{d}\tau}<0$ for $i\in\hat{\mathcal{I}}$. Consequently, since for $\tau\in(0,\tau^\star_i)$ we have $\frac{\text{d}g(\alpha_i\tau)}{\text{d}\tau}>0$ and for $\tau\in(\tau^\star_i,\bar{\tau}_i)$ we have $\frac{\text{d}g(\alpha_i\tau)}{\text{d}\tau}<0$, for $i\in\mathcal{I}_{\text{in}}=\{1,\cdots,n\}$, then we can conclude that 
$\tau^\star\geq \min\{\tau_i^\star\}_{i=1}^n$.
\end{proof}

The statement (c) of Theorem~\ref{thm::com_alpha_n_rate} provides only an estimate on the location of $\tau^\star$. However, by relying on the proof argument of this statement we can narrow down the search for $\tau^\star$ to a set of discrete points as explained in the remark below. %of~\cite{HM-SSK-arxiv2}.% %In what follows we let $$\rho^{i}_\tau=g(\alpha_i\tau)|\re{\alpha_i}|,\quad i\in\{1,\cdots,n\}.$$
%\margin{Remark 4.1 can go to arXiv version}
\begin{rem}[Candidate points for $\tau^\star$, when $\mathcal{I}_1\subset\mathcal{I}_{\text{in}}$]
Consider system~\eqref{eq::DDE-sys} when $\{\alpha_i\}_{i=1}^n\subset\complex_{-}^l$ and $\mathcal{I}_1\subset\mathcal{I}_{\text{in}}$. 
From the proof argument of the statement (c) of Theorem~\ref{thm::com_alpha_n_rate}, it follows that $\tau^\star$ is either a point in ${\mathcal{T}}^\star=\{\tau_i^\star\}_{i\in\mathcal{J}}$ where $\mathcal{J}=\{i\in\{1,\cdots,n\}|\min\{\tau^\star_k\}_{k=1}^n\leq\tau_i^\star<\tilde{\tau}\}$ or an intersection point of a $\rho_\tau^i$ and a $\rho_\tau^j$ in $\tau\in\big([\min\{\tau^\star_k\}_{k=1}^n,\max\{\tau^\star_k\}_{k=1}^n]\cap(0,\tilde{\tau})\big)$ %\solmaz{THE SAME HERE}
where $
\frac{\text{d}g(\alpha_i\tau)}{\text{d}\tau}>0$ and $\frac{\text{d}g(\alpha_j\tau)}{\text{d}\tau}<0$. Based on this observation, we propose the following procedure to identify the candidate points for $\tau^\star$. Let $\mathcal{J}^{\text{r}}=\{i\in\mathcal{I}_{\text{in}}|\tau_i^\star\geq\tilde{\tau}\}$, and  $\mathcal{J}^{\text{l}}_j=\{i\in\mathcal{J}|\tau_i^\star\geq{\tau}_j^\star\}$ for any $j\in\mathcal{J}$. We note that for $i\in\mathcal{J}^{\text{r}}$ we have $\frac{\text{d}g(\alpha_i\tau)}{\text{d}\tau}>0$ for any $\tau\in(0,\tilde{\tau})$, and for any $j\in\mathcal{J}$ we have
$\frac{\text{d}g(\alpha_k\tau)}{\text{d}\tau}<0$ for any $\tau\in(\tau^\star_j,\tilde{\tau})$ and also $\frac{\text{d}g(\alpha_k\tau)}{\text{d}\tau}>0$ for any $\tau\in(\tau^\star_j,\tau^\star_k)\subset(\tau^\star_j,\tilde{\tau})$, $k\in\mathcal{J}^{\text{l}}_j$. Now for any $j\in\mathcal{J}$ let $\mathcal{T}_j$ be the set of intersection points of $\rho^j_\tau$ with $\rho^k_\tau$, $k\in (\mathcal{J}^{\text{l}}_j\cup \mathcal{J}^r)$ for $\tau\in(\tau^\star_j,\tilde{\tau})$ (here note that the possible intersection between $\rho_\tau^j$ and $\rho_\tau^k$ is in fact located at $(\tau^\star_j,\tau^\star_k)\subset(\tau^\star_j,\tilde{\tau})$ for $k\in\mathcal{I}^{\text{l}}_j$).
 Then, following the proof argument of the statement (c) of of Theorem~\ref{thm::com_alpha_n_rate}, we have $\tau^\star\in ((\underset{j\in\mathcal{J}}{\cup}\mathcal{T}_j)\cup {\mathcal{T}}^\star)$.
\end{rem}
Our next result shows that if $\mathcal{I}_1\not\subset\mathcal{I}_{\text{in}}$, not only $\rho_\tau<\rho_0$ but also $\rho_\tau$ is a strictly decreasing function of $\tau\in(0,\bar{\tau})$.

\begin{lem}[Rate of convergence when $\mathcal{I}_1\subset\mathcal{I}_{\text{out}}$]\label{lem::eigen_out}
Consider  system~\eqref{eq::DDE-sys} when $\{\alpha_i\}_{i=1}^n\subset\complex_{-}^l$. Recall the admissible delay bound $\bar{\tau}$ given by~\eqref{eq::delay_bound}. 
If $\mathcal{I}_1\subset\mathcal{I}_{\text{out}}$, then $\rho_\tau$ decreases strictly from $\rho_0$ to $0$ for $\tau\in[0,\bar{\tau}]$.
\end{lem}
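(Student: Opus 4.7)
The plan is to exploit the representation $\rho_\tau = \min_{i} \rho^i_\tau$ with $\rho^i_\tau = g(\alpha_i\tau)|\re{\alpha_i}|$ and to show that, for any $\tau_1<\tau_2$ in $[0,\bar\tau]$, if $i^*\in\arg\min_i \rho^{i}_{\tau_1}$ then $\rho^{i^*}_\tau$ is strictly decreasing on $[\tau_1,\tau_2]$. This immediately yields $\rho_{\tau_2}\leq \rho^{i^*}_{\tau_2} < \rho^{i^*}_{\tau_1} = \rho_{\tau_1}$, which is the desired strict monotonicity.

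First, I would classify the indices via Theorem~\ref{lem::g_complexAlpha}: for $i\in\mathcal{I}_{\text{out}}$, $\rho^i_\tau$ strictly decreases from $|\re{\alpha_i}|$ to $0$ on $[0,\bar\tau_i]$; for $i\in\mathcal{I}_{\text{in}}$, $\rho^i_\tau$ first strictly increases on $[0,\tau_i^\star]$ (reaching a value strictly greater than $|\re{\alpha_i}|$) and then strictly decreases to $0$ on $[\tau_i^\star,\bar\tau_i]$. Since $\bar\tau\leq \bar\tau_i$ for every $i$, both characterizations apply on the common interval $[0,\bar\tau]$.

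Now fix $\tau_1\in[0,\bar\tau)$ and pick $i^*\in\arg\min_i \rho^{i}_{\tau_1}$. If $i^*\in\mathcal{I}_{\text{out}}$, then $\rho^{i^*}_\tau$ is strictly decreasing throughout $[0,\bar\tau]$ and the inequality $\rho^{i^*}_{\tau_2}<\rho^{i^*}_{\tau_1}$ is immediate. In the complementary case $i^*\in\mathcal{I}_{\text{in}}$, the hypothesis $\mathcal{I}_1\subset\mathcal{I}_{\text{out}}$ forces $i^*\notin\mathcal{I}_1$ and hence $|\re{\alpha_{i^*}}|>|\re{\alpha_1}|$. Moreover, for any $j\in\mathcal{I}_1\subset\mathcal{I}_{\text{out}}$, Theorem~\ref{lem::g_complexAlpha}(a) gives $g(\alpha_j\tau)\leq 1$ on $[0,\bar\tau]$, so $\rho^j_\tau\leq|\re{\alpha_1}|$ and therefore $\rho_\tau\leq|\re{\alpha_1}|$ for every $\tau\in[0,\bar\tau]$. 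On the ascending phase $[0,\tau_{i^*}^\star]$ we have $\rho^{i^*}_\tau\geq|\re{\alpha_{i^*}}|>|\re{\alpha_1}|\geq\rho_\tau$, which precludes $\tau_1\in[0,\tau_{i^*}^\star]$. Thus $\tau_1>\tau_{i^*}^\star$, placing the entire interval $[\tau_1,\tau_2]$ inside the strictly decreasing portion $(\tau_{i^*}^\star,\bar\tau_{i^*}]$ of $\rho^{i^*}_\tau$, and again $\rho^{i^*}_{\tau_2}<\rho^{i^*}_{\tau_1}$.

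The endpoint values require little work: $\rho_0=\min_i|\re{\alpha_i}|=|\re{\alpha_1}|$ follows from $g(\alpha_i\cdot 0)=1$, while $\rho_{\bar\tau}=0$ follows from Lemma~\ref{lem::g0-g1} applied to any index $k$ achieving $\bar\tau_k=\bar\tau$. The main obstacle I anticipate is the second case, where one has to prevent an $\mathcal{I}_{\text{in}}$-index from realizing the minimum during its ascending phase; the uniform bound $\rho_\tau\leq|\re{\alpha_1}|$ supplied by the $\mathcal{I}_1$-curves is precisely the mechanism that separates $\rho_\tau$ from every $\rho^{i^*}_\tau$, $i^*\in\mathcal{I}_{\text{in}}$, until $\tau$ has crossed $\tau_{i^*}^\star$, after which strict decrease is automatic.
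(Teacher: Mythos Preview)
Your proposal is correct. Both you and the paper use the decomposition $\rho_\tau=\min_i\rho^i_\tau$ together with the monotonicity of each $\rho^i_\tau$ supplied by Theorem~\ref{lem::g_complexAlpha}, but the executions differ. The paper introduces the threshold times $\eta_i$ (where $\rho^i_\tau$ first drops to $|\re{\alpha_1}|$) and partitions $[0,\bar\tau)$ into finitely many sub-intervals; on each piece it identifies the set of indices that can realize the minimum and observes that all of the corresponding $\rho^j_\tau$ are strictly decreasing there, so the minimum is too. Your argument instead fixes $\tau_1<\tau_2$, picks a minimizer $i^*$ at $\tau_1$, and shows that this particular curve is already past its peak at $\tau_1$: either trivially (if $i^*\in\mathcal{I}_{\text{out}}$) or because the uniform bound $\rho_\tau\leq|\re{\alpha_1}|$ coming from any $j\in\mathcal{I}_1\subset\mathcal{I}_{\text{out}}$ rules out $\tau_1\leq\tau^\star_{i^*}$ when $i^*\in\mathcal{I}_{\text{in}}$. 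Your route is a bit more economical since it bypasses the $\eta_i$'s and the partition bookkeeping; the paper's route, in exchange, makes the evolution of the minimizing index set along $[0,\bar\tau)$ more explicit.
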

\begin{proof}
Recall that $\bar{\tau}=\min\{\bar{\tau}_i\}_{i=1}^n$. Also recall the defintion of $\eta_i$ from the statement (a) of Theorem~\ref{lem::g_complexAlpha}. Next, note that due to the statement (a) of Theorem~\ref{lem::g_complexAlpha} for every $i\in\mathcal{I}_{\text{out}}$ we know that $\rho_\tau^i=g(\alpha_i\tau)|\re{\alpha_i}|$ is strictly decreasing for $\tau\in[0,\bar{\tau})$. We note that  because $\mathcal{I}_1\subset\mathcal{I}_{\text{out}}$, the previous statement holds for $i\in\mathcal{I}_1$. It also means that $\rho_\tau^i=g(\alpha_i\tau)|\re{\alpha_i}|<|\re{\alpha_1}|$ for $\tau\in[0,\bar{\tau})$. For $i\in\mathcal{I}_{\text{in}}$, from the proof argument of the statement (a) and (b) of Theorem~\ref{thm::com_alpha_n_rate} we know that 
$\rho_\tau^i=g(\alpha_i\tau)|\re{\alpha_i}|$ is decreasing for 
$\tau\in({\eta}_i,\bar{\tau}_i]$ and also that 
$\rho_\tau^i=g(\alpha_i\tau)|\re{\alpha_i}|\geq|\re{\alpha_1}$ for $\tau\in(0,{\eta}_i]$, which means that $\rho_\tau^i=g(\alpha_i\tau)|\re{\alpha_i}|\geq\rho_\tau^j$, $j\in\mathcal{I}_1$ for $\tau\in(0,{\eta}_i]$. Let $\mathcal{P}=\{\eta\in\real_{>0}|\eta=\eta_i~\text{if}~\eta_i<\bar{\tau},i\in\mathcal{I}_{\text{in}}\}$. If $\mathcal{P}=\{\}$, then we have $\rho_\tau=\min\{\rho_\tau^i\}_{i=1}^n=\min\{\rho_\tau^j\}_{j\in\mathcal{I}_{\text{out}}}$ for any $\tau\in[0,\bar{\tau})$. Therefore, since at delay interval $[0,\bar{\tau})$, $\rho_\tau$ is the minimum of strictly decreasing functions, it is also strictly decreasing. When $\mathcal{P}\neq\{\}$, let $\mathcal{P}=\{p_1,\cdots,p_{|\mathcal{P}|}\}$, where $p_m<p_n$ if $m<n$. Also, let $p_0=0$, $p_{|\mathcal{P}|+1}=\bar{\tau}$. Then, in light of the earlier observations, at each delay interval $[p_i,p_{i+1})$, $i\in\{0,\cdots,|\mathcal{P}|\}$, we have  $\rho_\tau=\min\{\rho_\tau^i\}_{i=1}^n=\min\{\rho_\tau^j\}_{j\in\mathcal{K}_i}$, where $\mathcal{K}_i=\mathcal{I}_{\text{out}}\cup\{k\in\mathcal{I}_{\text{in}}|\eta_k\leq p_i\}$. Since at each interval $[p_i,p_{i+1})$, $i\in\{0,\cdots,|\mathcal{P}|\}$, each $\rho_\tau^j$, $j\in\mathcal{K}_i$ is strictly decreasing, therefore $\rho_\tau$ is also strictly decreasing in delay interval $[p_i,p_{i+1})$. The proof then follows from $\cup_{i=1}^{|\mathcal{P}|}[p_i,p_{i+1})=[0,\bar{\tau})$.
\end{proof}

%Next, we focus on the special case when all the eigenvalue of $\vect{A}$ are real negative numbers. In this case, since directional derivative of $g(x)$ with respect to $\tau$ along $x=\alpha_i\tau$, for all $i\in\{1,\cdots,n\}$, follows a same pattern, we are able to derive a simpler expressions to compute $\tilde{\tau}$ and $\tau^\star$, and also show that $\rho_\tau$ depends only on $\alpha_1$ and $\alpha_n$. 

When all the eigenvalue of $\vect{A}$ are  negative reals numbers,
%When $\{\alpha_i\}_{i=1}^n\subset\real_{<0}$, 
the directional derivative of $g(x)$ with respect to $\tau$ along $x=\alpha_i\tau$, for all $i\in\{1,\cdots,n\}$, follows a same pattern. This fact enables us to derive a simpler expression to compute $\tilde{\tau}$ and $\tau^\star$. We also show that $\rho_\tau$ depends only on $\alpha_1$ and $\alpha_n$.

\begin{lem}[When $\{\alpha_i\}_{i=1}^n\subset\real_{<0}$, $\rho_\tau$ of system~\eqref{eq::DDE-sys} depends only on $\alpha_1$ and $\alpha_n$]\label{lem::rho_matrix}
Consider system~\eqref{eq::DDE-sys} when $\{\alpha_i\}_{i=1}^n\subset\real_{<0}$. Recall the admissible delay bound $\bar{\tau}=\frac{\pi}{2|\alpha_n|}$ of this system from Lemma~\ref{thm::admis_tau}. Then, 
\begin{itemize}
            \item[(a)] $\rho_{\tau}=g(\alpha_1\tau)\,|\alpha_1|=-\frac{1}{\tau}\re{W_0(\alpha_1\tau)}$ for any  
          {\rm{$\tau\in(0,\tau^\star_n)\subset(0,\bar{\tau})$}},
     \item[(b)]  $\rho_{\tau}=\min\{g(\alpha_1\tau)\,|\alpha_1|,g(\alpha_n\tau)|\alpha_n|\}$ for any
     {\rm{$\tau\in([\tau^\star_n,\tau^\star_1]\cap(0,\bar{\tau}))$}},
    \item[(c)] If $\tau_1^\star<\bar{\tau}$, then $\rho_{\tau}=g(\alpha_n\tau)\,|\alpha_n|=-\frac{1}{\tau}\re{W_0(\alpha_n\tau)}$ for any
    {\rm{$\tau\in(\tau^\star_1,\bar{\tau})\subset(0,\bar{\tau})$}},
    \end{itemize}
    where {\rm{$\tau^\star_1=\frac{1}{|\alpha_1|\ee}$}} and {\rm{$\tau^\star_n=\frac{1}{|\alpha_n|\ee}$}}.
\end{lem}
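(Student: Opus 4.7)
The plan is to exploit the closed form $\rho_\tau^i = g(\alpha_i\tau)|\alpha_i| = -\re{W_0(-|\alpha_i|\tau)}/\tau$ for $\alpha_i\in\real_{<0}$ (which follows immediately from~\eqref{eq::delay_rate_gain} since $\re{\alpha_i\tau}=-|\alpha_i|\tau$), rewrite the minimum in~\eqref{eq::rho-tau-g} as the pointwise minimum of a rescaling of a single auxiliary function, and then use unimodality of that function to conclude that the minimum is always attained at one of the two extremal eigenvalues $\alpha_1$ or $\alpha_n$.

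First, I would introduce the auxiliary profile
\begin{align*}
h(s) \;=\; -\re{W_0(-s)}, \qquad s\in [0,\pi/2],
\end{align*}
so that $\rho_\tau^i = h(|\alpha_i|\tau)/\tau$ and $\rho_\tau = \min\{h(|\alpha_i|\tau)/\tau\}_{i=1}^n$. The key step is to establish that $h$ is continuous, strictly increases from $0$ to $1$ on $[0,1/\ee]$, and strictly decreases from $1$ to $0$ on $[1/\ee,\pi/2]$. For $s\in(0,1/\ee)$ the principal branch $W_0(-s)$ is real-valued in $(-1,0)$ by~\eqref{eq::W_0_facts_a}, and the derivative formula~\eqref{eq::lambert_derivative} gives $h'(s)>0$ directly. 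For $s\in(1/\ee,\pi/2]$ the branch $W_0(-s)$ is complex with $\im{W_0(-s)}\in(0,\pi)$ by~\eqref{eq::W_0_facts_b}, and the defining identity $W_0(-s)\,\ee^{W_0(-s)}=-s$ yields the parametric description $W_0(-s)=a(b)+b\,\ii$ with $a(b)=-b\cot b$ and $s(b)=b\,\ee^{a(b)}/\sin b$ for $b\in(0,\pi/2]$; elementary calculus on this parametrization shows $s(b)$ is strictly increasing while $h(s(b))=-a(b)=b\cot b$ is strictly decreasing, and the boundary values $h(0)=0$, $h(1/\ee)=1$, $h(\pi/2)=0$ follow from~\eqref{eq::W_0_facts}.

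Once unimodality of $h$ is in hand, for any fixed $\tau\in(0,\bar{\tau})$ the map $\lambda\mapsto h(\lambda\tau)/\tau$ is just a horizontal rescaling of $h$, so it is unimodal in $\lambda$ with peak at $\lambda=1/(\ee\tau)$; hence on the finite set $\{|\alpha_i|\}_{i=1}^n\subset[|\alpha_1|,|\alpha_n|]$ its minimum is attained at one of the two endpoints, giving $\rho_\tau=\min\{\rho_\tau^1,\rho_\tau^n\}$. It remains to identify which endpoint wins in each of the three delay windows. For (a), when $\tau\in(0,\tau_n^\star)$ we have $|\alpha_i|\tau\le|\alpha_n|\tau<1/\ee$ for every $i$, so $h$ is strictly increasing across $[|\alpha_1|\tau,|\alpha_n|\tau]$ and therefore $\rho_\tau=\rho_\tau^1$. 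For (c), when $\tau\in(\tau_1^\star,\bar{\tau})$ we have $|\alpha_i|\tau\ge|\alpha_1|\tau>1/\ee$ for every $i$, so $h$ is strictly decreasing across $[|\alpha_1|\tau,|\alpha_n|\tau]$ and therefore $\rho_\tau=\rho_\tau^n$. For (b), $|\alpha_1|\tau\le 1/\ee\le|\alpha_n|\tau$ lie on opposite sides of the peak, no definite ordering is available, and the conclusion $\rho_\tau=\min\{\rho_\tau^1,\rho_\tau^n\}$ is exactly what the reduction step above already provides.

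The main technical obstacle is the strict monotonicity of $h$ on $(1/\ee,\pi/2]$, since this is where the complex branch of $W_0$ enters and one cannot simply invoke~\eqref{eq::lambert_derivative} at the branch point $s=1/\ee$. This is the step that forces the use of the parametric description $a(b)=-b\cot b$, $s(b)=b\,\ee^{a(b)}/\sin b$ together with the inequalities $b<\tan b$ and $b^2/\sin^2 b>1$ on $(0,\pi/2)$ to sign $ds/db$ and $dh/db$. Everything else --- Step~1, the reduction to two candidate eigenvalues, and the case analysis using $\tau_1^\star=1/(|\alpha_1|\ee)$ and $\tau_n^\star=1/(|\alpha_n|\ee)$ --- is a direct consequence of the unimodality of $h$ and the continuity of $\rho_\tau$ noted below~\eqref{eq::rho}.
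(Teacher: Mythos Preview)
Your proposal is correct and follows essentially the same route as the paper. Your auxiliary function $h(s)=-\re{W_0(-s)}$ is precisely the paper's $\mu(\xi)=g(-\xi)\,\xi$, and your parametric description $a(b)=-b\cot b$, $s(b)=b\,\ee^{a(b)}/\sin b$ on the complex branch is exactly the representation the paper uses to sign $\tfrac{d\mu}{d\xi}$ on $[1/\ee,\pi/2)$. The only cosmetic difference is that the paper handles the region $[0,1/\ee]$ via monotonicity of $g(-\xi)$ itself (together with $|\alpha_1|\le\cdots\le|\alpha_n|$) and then splits the intermediate interval in part~(b) into index sets $\mathcal{I}_r,\mathcal{I}_l$, whereas you package everything through the single unimodal profile $h$ and invoke the elementary fact that a unimodal function attains its minimum over any finite subset of an interval at one of the two extreme points; this is a slightly cleaner wrapper around the same computation.
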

\begin{proof}%We start the proof with some observations. 
Let $\mathcal{R}_r=[0,\frac{1}{\ee}]$ and $\mathcal{R}_l=[\frac{1}{\ee},\frac{\pi}{2})$. Then, note that from the statement (a) of Lemma~\ref{lem::alpha-real-positiverate}, we have that $g(-\xi)$ increases strictly from $1$ to $\ee$ for $\xi\in\mathcal{R}_r$. Therefore, 
     \begin{align}\label{eq::normalized_Rigion}
     %\begin{cases}
        & g(-\xi_1)<g(-\xi_2), \quad\text{if}~~(\xi_1<\xi_2~\text{and}~\xi_1,\xi_2\in\mathcal{R}_r).
     \end{align}
Next, let $\mu=g(-\xi)\,\xi$. We note that  $\frac{\text{d}\mu}{\text{d}\xi}=\frac{\text{d}g(-\xi)}{\text{d}\xi}\xi+g(-\xi)$.  For $\xi\in(\frac{1}{\ee},\frac{\pi}{2})$, from~\eqref{eq::drev_g_at_tau_star} and the manipulations leading to it, we can write   $\frac{\text{d}\mu}{\text{d}\xi}=\frac{1}{\xi}\,\frac{\mathsf{u}^2(-\mathsf{u}\frac{\cos(\mathsf{u})}{\sin(\mathsf{u})}+\cos(2\,\mathsf{u}))}{(-\mathsf{u}\,\cos(\mathsf{u})+\sin(\mathsf{u}))^2+\mathsf{u}^2\,\sin^2(\mathsf{u})}+\frac{1}{\xi}\frac{\mathsf{u}\cos(\mathsf{u})}{\sin(\mathsf{u})}
   =\frac{1}{2\xi}\frac{-\mathsf{u}(2\mathsf{u}-\sin(2\mathsf{u}))}{(-\mathsf{u}\,\cos(\mathsf{u})+\sin(\mathsf{u}))^2+\mathsf{u}^2\,\sin^2(\mathsf{u})}$, 
where $\mathsf{u}=\im{W_0(-\xi)}$. Consequently, for  $\xi\in(\frac{1}{\ee},\frac{\pi}{2})$, since  $\mathsf{u}\in(0,\frac{\pi}{2})$ and therefore  $\sin(2\mathsf{u})\leq2\mathsf{u}$, we get
$\frac{\text{d}\mu}{\text{d}\xi}<0$. This conclusion along with $\mu$ being a continuous function in $\xi\in\mathcal{R}_l$, confirms that
    \begin{align}\label{eq::normalized_Rigion2}
        & g(-\xi_1)\xi_1>g(-\xi_2)\xi_2, \quad\text{if}~(\xi_1<\xi_2 ~\text{and}~\xi_1,\xi_2\in\mathcal{R}_l).
  \end{align}
%Now, to prove the statement~(a) we proceed as follows. 
Proof of statement (a):
Since according to the statement (a) of Lemma~\ref{lem::alpha-real-positiverate},  we have $|\alpha_i|\tau^\star_i=\frac{1}{\ee}$ for $i\in\{1,\cdots,n\}$, then $0<{\tau}_{n}^\star\leq{\tau}_{n-1}^\star\leq \cdots\leq{\tau}^{\star}_1$.
This fact along with $|\alpha_1|\tau\leq\cdots\leq |\alpha_n|\tau$ lead us to conclude from~\eqref{eq::normalized_Rigion} that
\begin{align}\label{eq:sta,c}
    g(\alpha_1\tau)\!\leq\! g(\alpha_2\tau)\!\leq \cdots\!\leq\! g(\alpha_n\tau),~~ \tau\!\in\!(0,\tau_n^\star].
\end{align}
Here we used the fact that for $\tau\in(0,\tau_n^\star]$, we have $|\alpha_i|\tau\in\mathcal{R}_r$ and for $\tau\in[\tau_1^\star,\bar{\tau}]$, we have $|\alpha_i|\tau\in\mathcal{R}_l$, $i\in\{1,\cdots,n\}$. Now, given~\eqref{eq:sta,c}, we have $\rho_\tau=\min\{g(\alpha_i\tau)|\alpha_i|\}_{i=1}^n=g(\alpha_i\tau)|\alpha_1|$, which completes the proof of  the statement (a).

Proof of statement~(b): For a $\tau\in[\tau^\star_n,\tau^\star_1]$, let $\mathcal{I}_r=\{i\in\{1,\cdots,n\}||\alpha_j|\tau\in\mathcal{R}_r\}$ and $\mathcal{I}_l=\{i\in\{1,\cdots,n\}||\alpha_j|\tau\in\mathcal{R}_l\}$. Surely, for any $\tau\in[\tau^\star_n,\tau^\star_1]$, $1\in\mathcal{R}_r$ and $n\in\mathcal{R}_l$, however for $j\in\{2,\cdots,n-1\}$, $|\alpha_j|\tau$, depending on its value, can be in either in $\mathcal{R}_l$ or $\mathcal{R}_r$. Following the same proof argument of the statement (a), then for any $\tau\in[\tau^\star_n,\tau^\star_1]$, we have $\min\{g(\alpha_i\tau)|\alpha_i|\}_{i\in\mathcal{I}_{l}}=g(\alpha_1\tau)|\alpha_1|$. To complete the proof of the statement (b), by taking into account the definition of $\rho_\tau$ in~\eqref{eq::rho-tau-g}, next we show that for any $\tau\in[\tau^\star_n,\tau^\star_1]$, we have $\min\{g(\alpha_i\tau)|\alpha_i|\}_{i\in\mathcal{I}_{r}}=g(\alpha_n\tau)|\alpha_n|$. For this, we note that since $0<{\tau}_{n}^\star\leq{\tau}_{n-1}^\star\leq \cdots\leq{\tau}^{\star}_1$ and $|\alpha_1|\tau\leq\cdots\leq |\alpha_n|\tau$, we can conclude from~\eqref{eq::normalized_Rigion2} that for any $\{j,k\}\subset \mathcal{I}_{r}$ such that $j>k$ we have
\begin{align}\label{eq:stb}
    g(\alpha_j\tau)|\alpha_j|\tau\!\leq\! g(\alpha_k\tau)|\alpha_k|\tau,~~ \tau\!\in\![\tau_1^\star,\tau_n^\star].
\end{align}
Therefore, we can write $\min\{g(\alpha_i\tau)|\alpha_i|\}_{i\in\mathcal{I}_{r}}=g(\alpha_n\tau)|\alpha_n|$.

 Proof of the statement (c) follows directly from the proof of the statement (b), by noting that for $\tau\in[\frac{1}{\ee|\alpha_1|},\frac{\pi}{2\,|\alpha_n|})\subset(0,\bar{\tau})$, we have $\alpha_i\tau\in\mathcal{R}_l$ for all $i\until{n}$.
\end{proof}

Next we show that when $\{\alpha_i\}_{i=1}^n\subset\real_{<0}$, $\tilde{\tau}$, which according to the statement~(a) of Theorem~\ref{thm::com_alpha_n_rate} is equal to $\min\{\eta_i\}_{i=1}^n$, is in fact given by $\min\{\eta_1,\eta_n\}$. The result below also gives a close form solution for $\tau^\star$ and its corresponding $\rho_\tau^\star$.

\begin{thm}[Rate of convergence of~\eqref{eq::DDE-sys} with and without delay when $\{\alpha_i\}_{i=1}^n\subset\real_{<0}$]\label{thm::main_thm}
Consider  system~\eqref{eq::DDE-sys} when $\{\alpha_i\}_{i=1}^n\subset\real_{<0}$. Recall the admissible delay bound $\bar{\tau}=\frac{\pi}{2|\alpha_n|}$ of this system from Lemma~\ref{thm::admis_tau}. Then, 
\begin{itemize}
    \item[(a)] 
    $\tilde{\tau}=\min\{\tilde{\tau}_1,\eta_n\}$,  where $\eta_n$ is defined in the statement (a) of Theorem~\ref{thm::com_alpha_n_rate}. Moreover, $\tau^\star_n<\tilde{\tau}<\bar{\tau}$.
 \item[(b)] 
 the maximum rate of convergence of 
 \begin{align}\label{eq::rate_real_max}\rho_\tau^\star=\ee^{\frac{\arccos(\frac{\alpha_1}{\alpha_n})}{\sqrt{(\frac{\alpha_n}{\alpha_1})^2-1}}}\,|\alpha_1|,\end{align}
  is attained at \begin{align}\label{eq::opt_time_delay}
\tau^\star=\frac{\arccos(\frac{\alpha_1}{\alpha_n})}{|\alpha_1|\sqrt{\frac{\alpha_n}{\alpha_1}^2-1}}\ee^{-\frac{\arccos(\frac{\alpha_1}{\alpha_n})}{\sqrt{(\frac{\alpha_n}{\alpha_1})^2-1}}},
     \end{align}
     \end{itemize}
        where {\rm{$\tau^\star_1=\frac{1}{\ee|\alpha_1|}$}} and {\rm{$\tau^\star_n=\frac{1}{\ee|\alpha_n|}$}}. Moreover, $\tau^\star\in([\tau^{\star}_n,\tau^{\star}_1]\cap[\tau_n^{\star},\bar{\tau}))$.
\end{thm}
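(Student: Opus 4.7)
The plan is to use Lemma~\ref{lem::rho_matrix} to reduce $\rho_\tau$ to the minimum of the two terms $\rho_\tau^1$ and $\rho_\tau^n$, and then to carry out an explicit Lambert $W$ calculation at the point where these two are equal. I would also use $\bar\tau=\bar\tau_n=\pi/(2|\alpha_n|)$ from Lemma~\ref{thm::admis_tau}.

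For part (a), I would first observe that Lemma~\ref{lem::rho_matrix} in fact implies $\rho_\tau=\min\{\rho_\tau^1,\rho_\tau^n\}$ on all of $[0,\bar\tau)$ (directly on $[\tau_n^\star,\tau_1^\star]$; on the flanking intervals the comparisons inside its proof show that the omitted $\rho_\tau^i$ are never below this minimum). Specializing Theorem~\ref{thm::com_alpha_n_rate}(a) then collapses $\tilde\tau=\min\{\eta_i\}_{i=1}^n$ to $\tilde\tau=\min\{\eta_1,\eta_n\}$. The equation $g(\alpha_1\tau)=|\re{\alpha_1}|/|\re{\alpha_1}|=1$ defining $\eta_1$ has unique nonzero solution $\tilde\tau_1$ in $(0,\bar\tau_1)$ by Lemma~\ref{lem::alpha-real-positiverate}(b), so $\eta_1=\tilde\tau_1$. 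The bounds $\tau_n^\star<\tilde\tau<\bar\tau$ then follow from $\tilde\tau_1>\tau_1^\star>\tau_n^\star$ together with $\eta_n\in(\tau_n^\star,\bar\tau_n)=(\tau_n^\star,\bar\tau)$.

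For part (b), Lemma~\ref{lem::alpha-real-positiverate}(a) shows that $\rho_\tau^1$ is strictly increasing on $[0,\tau_1^\star]$ then strictly decreasing, and likewise $\rho_\tau^n$ about $\tau_n^\star$. Since $\rho_\tau=\min\{\rho_\tau^1,\rho_\tau^n\}$ and $\tau_n^\star\le\tau_1^\star$, the maximizer $\tau^\star$ must be the unique intersection of $\rho_\tau^1$ and $\rho_\tau^n$ in $[\tau_n^\star,\min(\tau_1^\star,\bar\tau))=[\tau_n^\star,\tau_1^\star]\cap[\tau_n^\star,\bar\tau)$: existence by the intermediate value theorem applied to $\rho_\tau^1-\rho_\tau^n$ (which is negative at $\tau_n^\star$ and positive at the right endpoint), and uniqueness by strict monotonicity of this difference on the interval.

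To evaluate $\tau^\star$ and $\rho_\tau^\star$, I would use the Lambert $W$ representation. The key structural observation is that $\tau^\star\in[\tau_n^\star,\tau_1^\star]$ forces $\alpha_1\tau^\star\in[-1/\ee,0)$ and $\alpha_n\tau^\star\in(-\infty,-1/\ee]$, placing them on opposite sides of the branch point $-1/\ee$ of $W_0$. Hence by~\eqref{eq::W_0_facts_a}--\eqref{eq::W_0_facts_b}, $W_0(\alpha_1\tau^\star)=\mathsf{v}\in[-1,0]$ is real while $W_0(\alpha_n\tau^\star)=\mathsf{v}+\mathsf{u}\ii$ with $\mathsf{u}\ne 0$, the two real parts being equal because $\rho_\tau^1=\rho_\tau^n$ translates via $g(\alpha\tau)|\alpha|=-\re{W_0(\alpha\tau)}/\tau$ to $\re{W_0(\alpha_1\tau^\star)}=\re{W_0(\alpha_n\tau^\star)}$. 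Separating real and imaginary parts of $(\mathsf{v}+\mathsf{u}\ii)\ee^{\mathsf{v}+\mathsf{u}\ii}=\alpha_n\tau^\star\in\real$ gives $\mathsf{v}=-\mathsf{u}\cot\mathsf{u}$ and $\alpha_n\tau^\star=-\mathsf{u}\ee^{\mathsf{v}}/\sin\mathsf{u}$; dividing by $\alpha_1\tau^\star=\mathsf{v}\ee^{\mathsf{v}}$ collapses to $\cos\mathsf{u}=\alpha_1/\alpha_n$, hence $\mathsf{u}=\arccos(\alpha_1/\alpha_n)$. Back-substitution produces $\mathsf{v}=-\arccos(\alpha_1/\alpha_n)/\sqrt{(\alpha_n/\alpha_1)^2-1}$, and then $\tau^\star=-\mathsf{v}\ee^{\mathsf{v}}/|\alpha_1|$ together with $\rho_\tau^\star=-\mathsf{v}/\tau^\star=|\alpha_1|\ee^{-\mathsf{v}}$ simplify to~\eqref{eq::opt_time_delay} and~\eqref{eq::rate_real_max}. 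The delicate part is precisely this two-branch matching of $W_0$ across its branch point; it is tractable only because the monotonicity analysis locates $\tau^\star$ in the narrow window $[\tau_n^\star,\tau_1^\star]$ that straddles $-1/\ee$ under multiplication by $\alpha_1$ and $\alpha_n$.
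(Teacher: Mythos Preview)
Your proposal is correct and follows essentially the same route as the paper: reduce $\rho_\tau$ to $\min\{\rho_\tau^1,\rho_\tau^n\}$ via Lemma~\ref{lem::rho_matrix}, use the monotonicity of each branch from Lemma~\ref{lem::alpha-real-positiverate} to pin $\tau^\star$ to the interval $[\tau_n^\star,\tau_1^\star]\cap[\tau_n^\star,\bar\tau)$, equate the two branches there, and then solve $\re{W_0(\alpha_1\tau^\star)}=\re{W_0(\alpha_n\tau^\star)}$ by exploiting that $\alpha_1\tau^\star$ and $\alpha_n\tau^\star$ lie on opposite sides of the branch point $-1/\ee$. The paper's algebra is organized slightly differently (it eliminates $\tau^\star$ between the two Lambert relations and then uses $\cos^2\mathsf{u}+\sin^2\mathsf{u}=1$, whereas you go through $\mathsf{v}=-\mathsf{u}\cot\mathsf{u}$ and divide), but the content is the same.

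One small omission: your intersection argument for $\tau^\star$ tacitly assumes $|\alpha_1|<|\alpha_n|$, since when $\alpha_1=\alpha_n$ the difference $\rho_\tau^1-\rho_\tau^n$ is identically zero and there is no unique crossing. The paper treats this degenerate case separately (where $\tau^\star=\tau_1^\star=\tau_n^\star=1/(\ee|\alpha_1|)$ and $\rho_\tau^\star=\ee|\alpha_1|$, recovered from~\eqref{eq::rate_real_max}--\eqref{eq::opt_time_delay} via the limit $\arccos(\gamma)/\sqrt{1/\gamma^2-1}\to1$ as $\gamma\to1^-$). You should add a sentence to cover it.
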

\begin{proof}
Lemma~\ref{lem::rho_matrix} showed that
\begin{align}\label{eq::rate_alpha1N}
    \rho_\tau=\min\{g(\alpha_1\tau)\,|\alpha_1|,g(\alpha_n\tau)\,|\alpha_n|\},\quad \tau\in(0,\bar{\tau}). 
\end{align}
We will use this fact to prove our statements.

To prove the statement (a), first note that by definition we have $g(\alpha_1\tilde{\tau}_1)=1$ and $g(\alpha_1\eta_n)=\frac{\alpha_1}{\alpha_n}$. Therefore, we have $g(\alpha_1\tilde{\tau}_1)|\alpha_1|=g(\alpha_1\eta_n)|\alpha_n|=|\alpha_1|$. Next, note that in the proof of the statement (a) of Theorem~\ref{thm::com_alpha_n_rate} we have already shown that $\eta_n$ satisfies ${\tau}^{\star}_n<\eta_n<\bar{\tau}_n=\bar{\tau}$, which means $\tau\in(\eta_n,\bar{\tau})\subset[\tau_n^{\star},\bar{\tau})$.
Therefore, by virtue of the statement (a) of Lemma~\ref{lem::alpha-real-positiverate}, we have $g(\alpha_n\tau)|\alpha_n|<g(\alpha_n\eta_n)|\alpha_n|=|\alpha_1|$ for $\tau\in(\eta_n,\bar{\tau})\subset[\tau_n^{\star},\bar{\tau})$. Subsequently, from~\eqref{eq::rate_alpha1N}, we conclude that $\rho_{\tau}<|\alpha_1|$ for $\tau\in(\eta_n,\bar{\tau})$. Next, note that for any $\tau\in(\tilde{\tau}_1,\bar{\tau})$ by virtue of the statement (b) of Lemma~\ref{lem::alpha-real-positiverate},  we have $g(\alpha_1\tau)<1$ and consequently, $g(\alpha_1\tau)|\alpha_1|<|\alpha_1|$. Therefore,  because of~\eqref{eq::rate_alpha1N}, we have also the guarantee that $\rho_\tau<\rho_{0}=|\alpha_1|$ for any $\tau\in(\tilde{\tau}_1,\bar{\tau})$. As a result, we can conclude from~\eqref{eq::rate_alpha1N} that $\rho<\rho_0$ for $\tau\in(\min\{\tilde{\tau}_1,\eta_n\},\bar{\tau})$. Using Lemma~\ref{lem::alpha-real-positiverate}, we have the guarantees that $g(\alpha_n\tau)>1$ for $\tau\in(0,\tau^{\star}_n]$ and $g(\alpha_n\tau)$ is strictly decreasing for $\tau\in[\tau^{\star}_n,\bar{\tau}_n)$. Therefore,  $g(\alpha_n\tau)>\frac{\alpha_1}{\alpha_n}$ for $\tau\in(0,\eta_n)$ (note here that $\tau^\star_n<\tilde{\tau}_n
\leq \eta_n<\bar{\tau}$).
From Lemma~\ref{lem::alpha-real-positiverate}, we also know that $g(\alpha_1\tau)>1$
for $\tau\in(0,\tilde{\tau}_1)$. As a result, we can conclude from~\eqref{eq::rate_alpha1N} that $\rho_\tau>|\alpha_1|$ for $\tau\in(0,\min\{\tilde{\tau}_1,\eta_n\})$. This completes the proof of statement (a).

Using~\eqref{eq::rate_alpha1N}, we proceed to prove our statement (b) as follows.
First we consider the case that $\alpha_1=\alpha_n$ in which the rate of convergence at $\tau\in(0,\bar{\tau})$ is $\rho_\tau=g(\alpha_1\tau)\,|\alpha_1|$. Here, by virtue of Lemma~\ref{lem::alpha-real-positiverate} one can see that the maximum rate of $\rho_\tau^\star=\ee|\alpha_1|$ is attained at  $\tau^\star=\frac{1}{\ee|\alpha_1|}$ (here note that $\tau_1^\star=\tau_n^\star=\tau^\star$). Then, for the case of $\alpha_1=\alpha_n$ the proof of the statement (b) follows from $\lim_{\frac{\alpha_1}{\alpha_n}\to1}\big({\frac{\arccos(\frac{\alpha_1}{\alpha_n})}{\sqrt{(\frac{\alpha_n}{\alpha_1})^2-1}}}\big)=1$. %\margin{the limit is over $\frac{\alpha_n}{\alpha_1}$ or $\frac{\alpha_1}{\alpha_n}$\hossein{As long as it approaches 1, there is no difference.}}\margin{I would rather to switch so we stay consistent with Remark 4.1 and 4.2}

Next, we consider the case where $|\alpha_n|>|\alpha_1|$. 
From the proof of the statement (a), we know that $\tau^\star$ should satisfy
$\tau^\star\in(0,\tilde{\tau})$. Also, recall that $\tau_n^\star<\tilde{\tau}$. Since $\tau_n^\star<\tau_1^\star$, by virtue of the statement (a) of Lemma~\ref{lem::alpha-real-positiverate} we know that both $g(\alpha_1\tau)$ and $g(\alpha_n\tau)$ are strictly increasing for $\tau\in[0,\tau^\star_n)$. Therefore, from~\eqref{eq::rate_alpha1N} we can conclude that $\rho_\tau$ is also strictly increasing in $\tau\in[0,\tau^\star_n)$. Hence, $\tau^\star\geq\tau^\star_n$.
If $\tau^\star_1<\bar{\tau}$, then by virtue of the statement (a) of Lemma~\ref{lem::alpha-real-positiverate} we know that both $g(\alpha_1\tau)$ and $g(\alpha_n\tau)$ are strictly decreasing for $\tau\in[0,\tau^\star_n)$. Therefore, from~\eqref{eq::rate_alpha1N} we can conclude that $\rho_\tau$ is also strictly decreasing in $\tau\in(\tau^\star_1,\bar{\tau})$. Hence, $\tau^\star\leq\tau^\star_1$. If $\tau^\star_1>\bar{\tau}$, we know that $\rho_\tau<0$ and the system is unstable for $\tau>\tau_1^\star$. 

So far we have shown that $\tau^\star\in([\tau^{\star}_n,\tau^{\star}_1]\cap[\tau_n^{\star},\bar{\tau}))$. From the discussions for far we also know that $g(\alpha_1\tau)|\alpha_1|$ is strictly increasing for $\tau\in[\tau_n^\star,\tau_n^\star]$, and $g(\alpha_n\tau)|\alpha_n|$ is strictly decreasing in $[\tau_n^\star,\bar{\tau})$. Therefore, from~\eqref{eq::rate_alpha1N} we conclude that at $\tau^\star$ we have
    $g(\alpha_1\tau^\star)|\alpha_1|=g(\alpha_n\tau^\star)|\alpha_n|$,
    or equivalently
    (recall~\eqref{eq::delay_rate_gain}) when 
    \begin{align}\label{eq::real_max_con}
    \re{W_0(\alpha_1\,\tau^\star)}=\re{W_0(\alpha_n\,\tau^\star)}.
    \end{align}
    Because for $\tau\in([\tau^{\star}_n,\tau^{\star}_1]\cap[\tau_n^{\star},\bar{\tau}))$, we have  $\alpha_1\,\tau^\star\in[\alpha_1\min\{\frac{1}{|\alpha_1|\ee},\frac{\pi}{2|\alpha_n|}\},-\frac{\alpha_1}{\alpha_n\ee}]\subset[-\frac{1}{\ee},0)$ , then $W_0(\xi_1)$ is a negative real number, i.e., $\re{W_0(\alpha_1\,\tau^\star)}=W_0(\alpha_1\,\tau^\star)=\mathsf{w}_1\in\real_{<0}$. Subsequently, from~\eqref{eq::real_max_con} we have
    $W_0(\alpha_n\tau^\star)=\mathsf{w}_1+{\ii}\,\mathsf{u}$ for some $\mathsf{u}\in(-\pi,\pi)$. 
    Therefore, we can write
     \begin{align*}
     &W_0(\alpha_1\,\tau^\star)=\mathsf{w}_1\qquad~~~~\rightarrow~ \mathsf{w}_1\,\ee^{\mathsf{w}_1}=\alpha_1\,\tau^\star,\\
     &W_0(\alpha_n\tau^\star)=\mathsf{w}_1+{\ii}\,\mathsf{u}~\,\rightarrow~ (\mathsf{w}_1+{\ii}\,\mathsf{u})\,\ee^{\mathsf{w}_1+{\ii}\,\mathsf{u}}=\alpha_n\tau^\star,
     \end{align*}        
     which by eliminating $\tau^\star$ gives 
       \begin{align*}
    \mathsf{w}_1\cos(\mathsf{u})-\mathsf{u}\sin(\mathsf{u})=\frac{\alpha_n}{\alpha_1}\mathsf{w}_1,\text{~and~}\mathsf{u}\cos(\mathsf{u})+\mathsf{w}_1\sin(\mathsf{u})=0.
     \end{align*}
     Then, using $\cos(\mathsf{u})^2+\sin(\mathsf{u})^2=1$, we obtain $\cos(\mathsf{u})=\frac{\alpha_1}{\alpha_n}$ and $\sin(\mathsf{u})=\frac{(\alpha_1^2-\alpha_n^2)\,\mathsf{w}_1}{\alpha_n\alpha_1\,\mathsf{u}}$. Subsequently, because $\frac{\alpha_n}{\alpha_1}\in(1,\infty)$, we obtain $\mathsf{u}=\arccos(\frac{\alpha_1}{\alpha_n})\in(0,\pi)\subset(-\pi,\pi)$ and $\sin(\mathsf{u})=\frac{1}{|\alpha_n|}\sqrt{\alpha_n^2-\alpha_1^2}$ and $\mathsf{w}_1=-\frac{\arccos(\frac{\alpha_1}{\alpha_n})}{\sqrt{(\frac{\alpha_n}{\alpha_1})^2-1}}$. Then, by virtue of $\mathsf{w}_1\,\ee^{\mathsf{w}_1}=\alpha_1\,\tau^\star$,~\eqref{eq::opt_time_delay} is confirmed. Finally,~\eqref{eq::rate_real_max} is confirmed by  $\rho^\star_\tau=-g(\alpha_1\tau^\star)\,\alpha_1=-\frac{1}{\tau^\star}\re{W_0(\alpha_1\tau^\star)}=-\frac{\mathsf{w}_1}{\tau^\star}$.
\end{proof}

\begin{rem}[Ultimate bound on the maximum possible increase in the rate of convergence of system~\eqref{eq::DDE-sys} in the presence of time delay]
We note that the suprimum value of ${\arccos(\gamma)}/{\sqrt{\frac{1}{\gamma^2}-1}}$
for $\gamma\in(0,1)$ is $1$. Therefore, $\rho_\tau^\star$ in~\eqref{eq::rate_real_max} is always less than or equal to $\ee|\alpha_1|=\ee\rho_0$, regardless of the value of $\alpha_1\in\real_{<0}$ and $\alpha_n\in\real_{<0}$. The same result, when $\{\alpha_i\}_{i=1}^n\subset\real_{<0}$, was established in~\cite{WQ-RS:13} using a different approach. Inspecting the contour plots of $g(\mathsf{x}+\mathsf{y}{\ii})$
 in Fig.~\ref{Fig::g_countor} reveals that the maximum attainable value for $g(x)$ for any $x\in\complex_{-}^l$ is $\ee$. Therefore, given the alternative definition of $\rho_\tau$ in~\eqref{eq::rho-tau-g}, we conjecture that in fact the maximum rate of convergence due to delay when $\{\alpha_i\}_{i=1}^n\subset\complex_{-}^l$ is also $\ee \rho_0$. 
\end{rem}

\vspace{-0.1in}
\begin{rem}[System design for faster convergence]
Equation~\eqref{eq::rate_real_max} indicates that for $\frac{\alpha_1}{\alpha_n}\rightarrow1$ (compact eigenvalue spectrum), higher convergence rate can be achieved due to delay. %That is, systems with compact spectrum of eigenvalues attain higher convergence rate due to delay. 
This fact can be used in system design to make the best of accelerated convergence due to delay. 
For example, in case of the average consensus algorithm in connected networks (see Section~\ref{sec::num_ex}), $-\alpha_1$ and $-\alpha_n$ correspond to the smallest ($\lambda_2$) and the largest ($\lambda_N$) non-zero eigenvalues of the graph Laplacian. There exists known relations between the graph topology and the magnitude of these eigenvalues~\cite{ros-jaf-rmm:07,SSK-BVS-JC-RAF-KML-SM:19}. Graph topological design can then be used to make $\frac{\lambda_2}{\lambda_N}$ closer to $1$. Or in case of a state feedback control design for $\dvect{x}=\vect{A}\vect{x}(t-\tau)+\vect{B}\vect{u}$, a delayed feedback  $\vect{u}=\vect{K}\vect{x}(t-\tau)$ can be used to place the eigenvalues of the closed-loop system matrix in a compact and negative real spectrum. 
%benefit tare more desired in terms of increasing convergence rate by introducing time delay. This point should be considered in topology design in network systems, as an approach to manipulate the smallest and biggest eigenvalues~\cite{SSK-BVS-JC-RAF-KML-SM:19}.
\end{rem}

\vspace{-0.1in}
\section{Demonstrative example
}\label{sec::num_ex}\vspace{-0.05in}
%%%%%%%%%%%%%%%%%%%%
We demonstrate our results by studying the effect of delay on the static average consensus algorithm~\cite{ros-jaf-rmm:07} for a group of $N$ networked agents interacting over a strongly connected and weight-balanced directed graph (or simply digraph), similar to the one shown in Fig.~\ref{fig:network} (we follow~\cite{FB-JC-SM:09} for graph related terminologies and definitions). The set of all agents that can send information to agent $i$ are called its out-neighbors. %, denoted by $\mathcal{N}^i$. 
  A digraph is strongly connected if there is a directed path from every agent $i$ to every agent $j$ in the graph. Let $\vectsf{W}=[\mathsf{w}_{ij}]\in\real^{N\times N}$ be the adjacency matrix of a given digraph, defined according to $\mathsf{w}_{ii}=0$, $\mathsf{w}_{ij}>0$ if agent $j$ can send information to agent $i$, and $\mathsf{w}_{ij}=0$ otherwise. A digraph of $N$ agents is weight-balanced if and only if $\sum_{j=1}^N\mathsf{w}_{ij}=\sum_{j=1}^N\mathsf{w}_{ji}$ for any $i\in\{1,\cdots,N\}$. Let every agent in this network have a local reference value $\mathsf{r}^i\in\real$, $i\in\{1,\cdots,N\}$. The static average consensus problem consists of designing a distributed algorithm that enables each agent to obtain  $\frac{1}{N}\sum_{j=1}^N \mathsf{r}^j$ by using the information it only receives from its out-neighbors.
As shown in~\cite{ros-jaf-rmm:07}, for strongly connected and weight-balanced digraphs, the  Laplacian dynamics 
\begin{align*}
    \dot{x}^i(t)\!=\!\sum\nolimits_{j=1}^N\!\!\mathsf{w}_{ij}(x^j(t)-x^i(t)),~ x^i(0)=\mathsf{r}^i,~i\!\in\!\{1,\cdots,N\},
\end{align*}
is guaranteed to satisfy $x^i\to\frac{1}{N}\sum_{j=1}^N \mathsf{r}^j$, as $t\to\infty$.  Using $\vect{x}=[x^1,\cdots,x^N]^\top$, the compact form of the Laplacian dynamics in the presence of delay $\tau\in\real_{>0}$ is %(c.f.~\cite{ros-jaf-rmm:07})
\begin{align}
   & \dvect{x}(t)=-\vect{L}\,\vect{x}(t-\tau),\label{eq:laplace}\\
   &x^i(t)=\phi^i(t)\in\real,~~ t\in [-\tau,0],~\phi^i(0)=\mathsf{r}^i,~~i\in\{1,\cdots,N\},\nonumber
\end{align}
where $\vect{L}\!=\!\text{diag}(\vectsf{W}\vect{1}_N)-\vectsf{W}$. %, see Fig.~\ref{fig:network} for examples. 
For strongly connected and weight-balanced digraphs, we have $\rank(\vect{L})\!=\!N-1$, $\vect{L}\vect{1}_N\!=\!\vect{0}$, $\vect{1}^\top_N\vect{L}=\vect{0}$. Moreover, $\vect{L}$ has one simple zero eigenvalue $\lambda_1=0$ and the rest of its eigenvalues $\{\lambda_j\}_{j=2}^N$ has negative real parts~\cite{FB-JC-SM:09}.
Next, consider the change of variable $\vect{y}=\vect{T}^\top\vect{x}$ with $\vect{T}=\begin{bmatrix}\frac{1}{\sqrt{N}}\vect{1}_N&\vect{R}\end{bmatrix}$, where $\vect{R}$ is such that $\vect{T}^\top\vect{T}=\vect{T}\vect{T}^\top=\vect{I}_N$. Then the Laplacian dynamics can be represented in the following equivalent form
\begin{subequations}\label{eq::laclacian_equivalent}
\begin{align}
    \dot{y}_1(t)&=0,\quad\quad\quad {y}_1(0)=\frac{1}{\sqrt{N}}\sum\nolimits_{i=1}^N\mathsf{r}^i,\\
    \dvect{y}_{2:N}(t)&=\vect{A}\vect{y}_{2:N}(t-\tau).\label{eq::laclacian_equivalent_y2}
\end{align}
\end{subequations}
where $\vect{y}\!=\![y_1^\top~\vect{y}_{2:N}^\top]^\top$ and $\vect{A}\!=\!-(\vect{R}^\top\vect{L}\vect{R})$. The matrix $\vect{A}$ is Hurwitz with eigenvalues $\{\alpha_j\}_{j=1}^n\!=\!\{\lambda_i\}_{i=2}^N\subset\complex_{-}^l$, $n\!=\!N\!-\!1$. Since, $\lim_{t\to\infty}\vect{x}(t)\!=\!
%&\frac{1}{\sqrt{N}}\lim_{t\to\infty}y_1(t)\vect{1}_N+\lim_{t\to\infty}\vect{R}\vect{y}_{2:N}(t)\\=
(\frac{1}{N}\sum\nolimits_{i=1}^N\mathsf{r}^i)\vect{1}_N+\vect{R}\lim_{t\to\infty}\vect{y}_{2:N}(t)$,
%\end{align*}
the correctness and the convergence rate of the average consensus algorithm~\eqref{eq:laplace} are determined, respectively, by exponential stability and the convergence rate of~\eqref{eq::laclacian_equivalent_y2}. Since the time-delayed system~\eqref{eq::laclacian_equivalent_y2} is in the form of our system of interest~\eqref{eq::DDE-sys}, the effect of delay and how it can potentially be used to accelerate the rate of convergence of the algorithm~\eqref{eq:laplace} can be fully analyzed by the results described in Section~\ref{sec::main}. 
\begin{figure}[t]
  \unitlength=0.5in
  \centering
\includegraphics[trim=13pt 0 0 0,clip,scale=0.5]{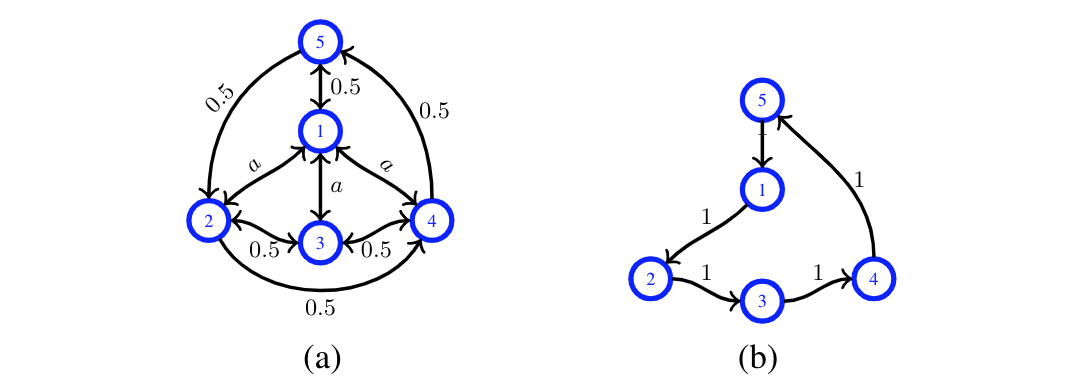}
\vspace{-0.1in}
       \caption{\small{Strongly connected and weight-balanced networks with their corresponding connection weights. An arrow from agent $i$ to agent $j$ means that agent $i$ can obtain information from agent $j$.}
       }
    \label{fig:network}\vspace{-0.15in}
\end{figure}

\begin{figure}[t]
    \centering
        \includegraphics[trim=15pt 0 0 0,clip,scale=0.55]{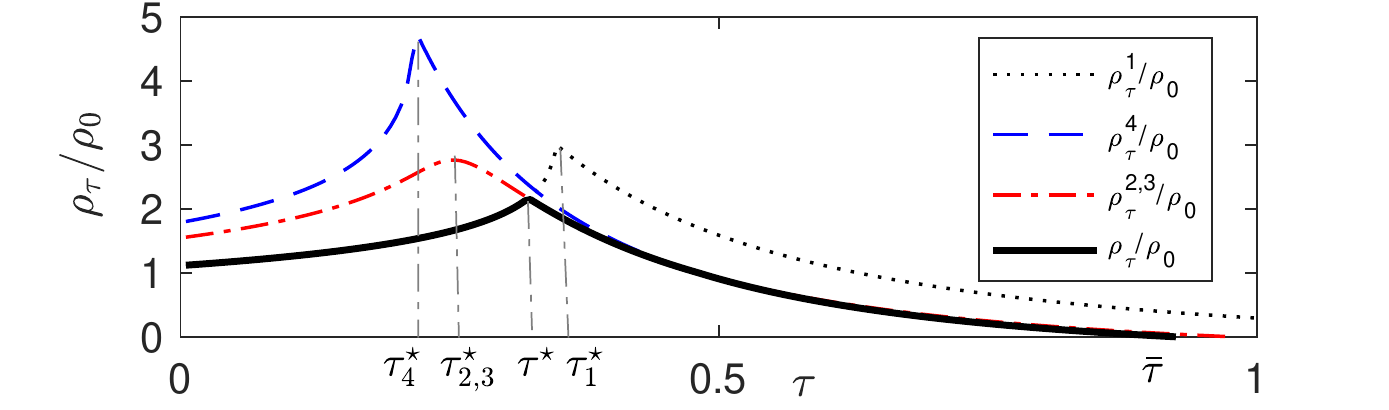}
   \caption{\small
   The Normalized rate of convergence versus time delay for different modes of  system~\eqref{eq::laclacian_equivalent} for case~(III). $\rho_{\tau}^1$, $\rho_{\tau}^2$, $\rho_{\tau}^3$ and $\rho_{\tau}^4$ are the rate of convergence corresponding to $\alpha_1=-1.05$,
    $\alpha_{2,3}=-1.47\pm0.18\ii$ and $\alpha_4=-1.70$, respectively.  Recall that according to~\eqref{eq::rho} we have $\rho_\tau/\rho_0=\min\{\rho_{\tau }^i/\rho_0\}_{i=1}^4$, which its normalized value is shown by the thick black curve.
   }\vspace{-0.1in}
    \label{fig:resp2}
\end{figure}
For numerical study, we use the digraphs in~Fig.~\ref{fig:network}.
%In what follows, we present three demonstrative examples of executing algorithm~\eqref{eq:laplace} over the networks shown in Fig.~\ref{fig:network}.
%We report our numerical values in two digit precision.
%As mentioned earlier, the convergence rate of algorithm~\eqref{eq:laplace}, is determined by the convergence rare of the time-delayed system~\eqref{eq::laclacian_equivalent_y2}.  For the networks shown in Fig.~\ref{fig:network}, the eigenvalues of $\vect{A}=-\vect{R}^\top\vect{L}\vect{R}$ are as follows 
We note that 
(I) $\text{eig}(\vect{A})\!=\!\{\alpha_j\}_{j=1}^4\!=\!\{-1.50,-1.50,-2.00,-2.50\}$ for the digraph in Fig.~\ref{fig:network}(a) when $a\!=\!0.50$
(II) $\text{eig}(\vect{A})\!=\!\{\alpha_j\}_{j=1}^4\!=\!\{-0.69+0.95\ii, -0.69\!-\! 0.95\ii, -1.80+ 0.58\ii,
   -1.80- 0.58\ii
   \}$ for digraph in Fig.~\ref{fig:network}(b),
   and (III) $\text{eig}(\vect{A})\!=\!\{\alpha_j\}_{j=1}^4\!=\!\{-1.05,-1.47\!+\!0.18\ii,-1.47\!-\!0.18\ii,-1.70\}$ for the digraph in Fig.~\ref{fig:network}(a) when $a\!=\!0.20$.

  Fig.~\ref{fig::cont_ex} shows $\rho_\tau/\rho_0$ %as defined in~\eqref{eq::rho} 
  versus time delay for the cases~(I) and~(II).
   For the case~(I) we have $\{\alpha_i\}_{i=1}^4\subset\real_{<0}$ and also $\mathcal{I}_\text{1}=\mathcal{I}_\text{in}=\{1,2,3,4\}$. Hence, as predicted by  Theorem~\ref{thm::main_type_complex},
   there exists a $\tilde{\tau}\in\real_{>0}$ such that $\rho_\tau>\rho_0$ for $\tau\in(0,\tilde{\tau})\subset(0,\bar{\tau})$. 
   In this case,  $\bar{\tau}=\frac{\pi}{2|\alpha_4|}=0.63$ (marked as $\bar{\tau}_{I}$ on y-axis of Fig.~\ref{fig::cont_ex}). Moreover, $\tilde{\tau}$, following the statement (a) of Theorem~\ref{thm::main_thm}, we get $\tilde{\tau}=\min\{\tilde{\tau}_1=0.71,\eta_4=0.32\}= 0.32$, which is exactly the same value that one reads on Fig.~\ref{fig::cont_ex}, marked as $\tilde{\tau}_{I}$ on y-axis. Also, the maximum rate of convergence is attained at $\tau^\star=0.23$ (marked as ${\tau}^\star_{I}$ on y-axis of Fig.~\ref{fig::cont_ex}), which can be obtained from~\eqref{eq::opt_time_delay} in the statement (b) of Theorem~\ref{thm::main_thm}. At $\tau^\star$, the maximum attainable rate of convergence can be obtained from~\eqref{eq::rate_real_max} in the statement (b) of Theorem~\ref{thm::main_thm} to be $\rho_\tau=1.98\rho_0$, which matches the value one reads on~Fig.~\ref{fig::cont_ex}. %Fig.~\ref{fig:resp} shows the norm of response of system~\eqref{eq::laclacian_equivalent_y2} in logarithmic scale for different values of time delay for case~(II). As seen in the figure, the rate of convergence for $\tau=\tau^\star=0.23$ is greater than delay free case.  For $\tau=0.60$ which is close to the critical value of time delay $\bar{\tau}=0.63$, the rate of convergence is near zero. 
   In the case~(II), we have $\{\alpha_i\}_{i=1}^4\subset\complex^l_-$%\margin{use consistent notation as we defined in the Notations}
   ,   $\mathcal{I}_\text{in}=\{3,4\}$ and $\mathcal{I}_1=\mathcal{I}_\text{out}=\{1,2\}$ where $\mathcal{I}_1$, $\mathcal{I}_{\text{in}}$ and $\mathcal{I}_{\text{out}}$ are defined by~\eqref{eq::def_I}. Therefore, since $\mathcal{I}_1\subset\mathcal{I}_\text{out}$, as predicted by~Lemma~\ref{lem::eigen_out}, $\rho_\tau$  decreases strictly with delay delay until it reaches $0$ at $\bar{\tau}=0.51$, as shown in~Fig.~\ref{fig::cont_ex}. 
  
   For case (III), we have $\{\alpha_i\}_{i=1}^4\in\complex_{-}^l$ and $\mathcal{I}_1=\{1\}\subset\mathcal{I}_{\text{in}}=\{1,2,3,4\}$. Therefore, according to Theorem~\ref{thm::main_type_complex}, we expect 
   existence of $\tilde{\tau}\in\real_{>0}$ such that $\rho_\tau>\rho_0$ for $\tau\in(0,\tilde{\tau})\subset(0,\bar{\tau})$, which is in accordance with the trend one observes for $\rho_\tau$ in Fig.~\ref{fig:resp2}. Also, as seen in Fig.~\ref{fig:resp2}, we have $\rho_\tau=\min\{\rho_{\tau}^i\}_{i=1}^4=\min\{\rho_{\tau}^{1},\rho_{\tau}^{2,3}\}$ for any $\tau\in[0,\bar{\tau}]$. Here, $\bar{\tau}=0.92$, and  $\tilde{\tau}=0.46$, which as expected from the statement (a) of Theorem~\ref{thm::com_alpha_n_rate}, is the minimum of $\eta_1=0.59$, $\eta_2=0.46$,  $\eta_3=0.46$ and $\eta_4=0.47$. Moreover, as expected from Theorem~\ref{thm::com_alpha_n_rate}, the value of $\tau^\star$ satisfies $\tau^\star\in[\tau^\star_4,\tau^\star_1]\cap[0,\tilde{\tau})$ as shown in Fig.~\ref{fig:resp2} where $\tau^\star_1=\frac{1}{|\alpha_1|\ee}=0.35=\min\{\tau^{\star}_i\}_{i=1}^4$ and $\tau^\star_4=\frac{1}{|\alpha_4|\ee}=0.21=\max\{\tau^{\star}_i\}_{i=1}^4$. In this case the maximum attainable rate is $\rho_{\tau^\star}=1.92\rho_0$.

%%%%%%%%%%%%%%%%%%%%%%%%%%%%%%%%%%  
\section{Conclusion and future work}\label{sec::conclu}\vspace{-0.05in}
We examined the effect of a fixed time delay on the rate of convergence of a class of time-delayed LTI systems to address the following fundamental questions (a) what systems can experience increase in their rate of convergence due to delay (b) for what values of delay the rate of convergence is increased due to delay (c) what is the maximum achievable rate due to delay and its corresponding maximizing delay value. 
Our analysis relied on use of the Lambert W function to specify the rate of convergence of our time-delayed LTI system of interest. 
We validated our result through a numerical example on accelerating the agreement algorithm for a network of multi-agent systems. 
Our future work is focused on expanding our results to a wider class of time-delayed LTI systems, and also exploring the application of our theoretical results in design of fast-converging distributed algorithms for networked systems.

%\solmaz{cehck ref 4}
\bibliographystyle{ieeetr}%
\bibliography{bib/alias,bib/Reference}

\appendix
\renewcommand{\theequation}{A.\arabic{equation}}
\renewcommand{\thethm}{A.\arabic{thm}}
\renewcommand{\thelem}{A.\arabic{lem}}
\renewcommand{\thedefn}{A.\arabic{defn}}

\setcounter{equation}{0}

\begin{lem}[$\rho_\tau$ is a continuous function of $\tau$]\label{lem::p_tau_continuous}
The rate of convergence $\rho_\tau$ of system~\eqref{eq::DDE-sys} given by~\eqref{eq::rho} is a continuous function of $\tau\in\real_{\geq0}$.
\end{lem}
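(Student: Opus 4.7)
The plan is to write $\rho_\tau$ as the pointwise minimum of $n$ scalar functions, prove continuity of each one separately on $\realnonnegative$, and then conclude via the standard fact that the minimum of finitely many continuous functions is continuous. Concretely, for each $i\in\{1,\dots,n\}$, I would define $f_i(\tau):=-\frac{1}{\tau}\re{W_0(\alpha_i\tau)}$ for $\tau>0$ and extend by $f_i(0):=|\re{\alpha_i}|$, so that $\rho_\tau=\min_{1\le i\le n}f_i(\tau)$ for all $\tau\in\realnonnegative$.

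For continuity of each $f_i$ at the origin, I would invoke \eqref{eq::limWz-z} to write
\begin{align*}
\lim_{\tau\to 0^+}f_i(\tau)=-\lim_{\tau\to 0^+}\re\!\left(\alpha_i\cdot\frac{W_0(\alpha_i\tau)}{\alpha_i\tau}\right)=-\re{\alpha_i}=|\re{\alpha_i}|=f_i(0),
\end{align*}
which as a bonus recovers $\rho_0=\min_i|\re{\alpha_i}|=|\re{\alpha_1}|$. Continuity of $f_i$ on $\realpositive$ then splits into two cases. If $\alpha_i$ has nonzero imaginary part, the ray $\{\alpha_i\tau:\tau>0\}$ stays strictly off the real axis and hence off the standard branch cut $(-\infty,-1/\ee]$ of the principal Lambert branch $W_0$; on the complement of the cut $W_0$ is analytic and $f_i$ is continuous. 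If instead $\alpha_i\in\real_{<0}$, then $\alpha_i\tau$ traverses the negative real axis: on $(0,1/(\ee|\alpha_i|))$ it lies in $(-1/\ee,0)$, where $W_0$ is real-analytic by \eqref{eq::W_0_facts_a}; for $\tau>1/(\ee|\alpha_i|)$ it lies on the cut itself, where taking $W_0$ to equal its principal boundary value (consistent with \eqref{eq::W_0_conjugate_prop}) produces a function continuous along the cut.

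The only nontrivial point is continuity of $W_0$ at the branch point $z=-1/\ee$, which is encountered precisely when $\alpha_i\in\real_{<0}$ and $\tau=1/(\ee|\alpha_i|)$. I would handle this with the standard local expansion $W_0(z)=-1+\sqrt{2(1+\ee z)}+O(1+\ee z)$ valid in a punctured neighborhood of $-1/\ee$: it shows that $W_0$ has a square-root-type branch-point singularity with infinite slope but is itself continuous there, with $\lim_{z\to -1/\ee}W_0(z)=-1=W_0(-1/\ee)$. This is consistent with the one-sided blow-up of $\frac{\text{d}g(\alpha\tau)}{\text{d}\tau}$ recorded in \eqref{eq::drev_g_at_tau_star}. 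Once this branch-point continuity is settled, $\re{W_0(\alpha_i\tau)}$, and hence $f_i$, is continuous in $\tau$ on $\realnonnegative$ for every $i$, so $\rho_\tau=\min_i f_i(\tau)$ is continuous as well; the delicate behavior of $W_0$ at $-1/\ee$ is the step I expect to be the main obstacle.
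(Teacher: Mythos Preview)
Your approach is essentially the same as the paper's: show that each component $f_i(\tau)=-\tfrac{1}{\tau}\re{W_0(\alpha_i\tau)}$ is continuous on $\realnonnegative$ (using \eqref{eq::limWz-z} for the value at $\tau=0$), and then invoke that the minimum of finitely many continuous functions is continuous. Your treatment is in fact more careful than the paper's, which simply asserts continuity of $\re{W_0(\alpha\tau)}$ without discussing the branch cut or the branch point at $-1/\ee$.
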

%For proof of Lemma~\ref{lem::p_tau_continuous} see~\NeedATT{XXX}.
%\margin{Hossein this take lemma entirely to the Appendix so the Lemma/theorem numbering in this section stay the same as the main paper}
\begin{proof}
For any given $\alpha\in\complex$, $\re{W_0(\alpha\tau)}$ is a continuous function of $\tau\in\real_{\geq 0}$. Moreover, for any $\alpha\in\complex_{-}^l$, by virtue of~\eqref{eq::limWz-z} we have  $\lim_{\tau\to0^-}\frac{\re{W_0(\alpha\tau)}}{\tau}=1$. Therefore, for every $\alpha_i$, $i\until{n}$, $\frac{\re{W_0(\alpha_i\tau)}}{\tau}$ is continuous over $\tau\in\real_{\geq0}$.
Then, the proof is follows from the fact that the maximum/minimum of continuous functions is a continuous function (c.f.~\cite[Problem~1.2.13]{WJK-MTN:01}).   
\end{proof}

The rest of this appendix contains the proof of the lemmas of Section~\ref{sec::Prob_formu}.

%%%%%%%%%%%%%%%%%%%%%%%%%%%%%%%%%%%%%%%%%%%%%

\begin{proof}[Proof of Lemma~\ref{lem::g0-g1}]
Given the definition of $g(x)$ in~\eqref{eq::gain_def}, the proof of~\eqref{eq::lim-g-0} follows directly from~\eqref{eq::limWz-z}.

To validate~\eqref{eq::tau_bar}, we proceed as follows.
Note that $g(\alpha\tau)=0$ requires  $\re{W_{0}(\alpha\tau)}=0$ which implies that $W_0(\alpha\tau)=\mathsf{u}\,{\ii}$ for some non-zero $\mathsf{u}\in(-\pi,\pi)$. Following the definition of the Lambert $W$ function, then, we can write
\begin{align*}
\mathsf{u}\,\ee^{\mathsf{u}{\ii}}=(\re{\alpha}\tau+{\ii}\im{\alpha}\tau)\Leftrightarrow
~~\begin{cases}
-\mathsf{u}\sin(\mathsf{u})=\re{\alpha}\tau,\\ 
\,\,\,\,\mathsf{u}\cos(\mathsf{u})=\im{\alpha}\tau,
\end{cases}
\end{align*}
\begin{align*}
\Leftrightarrow
~~\begin{cases}
\mathsf{u}^2=\tau^2|\alpha|^2,\\
\tan(\mathsf{u})=\frac{\re{\alpha}}{\im{\alpha}},
\end{cases}
%\leftrightarrow
%\tau=\frac{|\text{atan}(\frac{\re{\alpha}}{\im{\alpha}})|}{|\alpha|}
\end{align*}
which for $\mathsf{u}\in(-\pi,\pi)$, 
after eliminating $\mathsf{u}$ gives  $\tau={|\text{atan}(\frac{\re{\alpha}}{\im{\alpha}})|}\big\slash{|\alpha|}$ as the unique solution for $g(\alpha\tau)=0$. For $\alpha\in\real_{<0}$, we have $\text{atan}(\frac{\re{\alpha}}{\im{\alpha}})=\frac{\pi}{2}$, which means that $\bar{\tau}=\frac{\pi}{2|\alpha|}$. 
Finally, to validate~\eqref{eq::tau_tilde_real} we proceed as follows. 
$g(\alpha\tau)=1$ means that $\re{W_0(\alpha\tau)}=\re{\alpha\tau}$. Then, 
$W_0(\alpha\tau)=\re{\alpha\tau}+{\ii} \,\theta$ for some non-zero $\mathsf{\theta}\in(-\pi,\pi)$. Then, we obtain the value of $\theta$ as follows.  Invoking definition of Lambert function, we have
    \begin{align*}
   (\re{\alpha\tau}+\theta\ii)\ee^{\re{\alpha\tau}+\theta\ii}=\re{\alpha\tau}, 
    \end{align*}
    \begin{align*}
\Leftrightarrow
~~\begin{cases}
    \theta\cos{\theta}+\re{\alpha\tau}\sin{\theta}&=0,\\
    \ee^{\re{\alpha\tau}}\big(\re{\alpha\tau}\cos{\theta}-\theta\sin{\theta}\big)&=\re{\alpha\tau},
    \end{cases}
    \end{align*}
    which using some trigonometric manipulations can also be stated equivalently as 
    \begin{subequations}
    \begin{align}
      &  \re{\alpha\tau}=-\theta\cot(\theta)\label{eq::tilde_defining}\\
      &   \ee^{-\theta\cot(\theta)}=\cos(\theta). \label{eq::beta}
    \end{align}
    \end{subequations}
    For $\theta\in(-\pi,\pi)$, ~\eqref{eq::beta} has two distinct solutions $\theta\approx \pm1.01125$.
    Thus, the proof of~\eqref{eq::tau_tilde_real} follows from~\eqref{eq::tilde_defining}.
\end{proof}
%%%%%%%%%%%%%%%%%%%%%%
To prove the rest of the results in Section~\ref{sec::Prob_formu} we rely on studying the derivative of $g(x)$ along $x=\alpha\tau$ with respect to $\tau\in\real_{>0}$ for a given $\alpha\in\complex$. Using \eqref{eq::lambert_derivative} and \eqref{eq::gain_def}, the derivative of delay rate gain function  along $x=\alpha \tau \neq- \frac{1}{\ee}$ with respect to time delay $\tau\in\real_{>0}$ can be written as
\begin{align*}
    \frac{\text{d}\,g(\alpha\tau)}{\text{d}\tau}=&\frac{1}{\re{\alpha}}\big[-\frac{1}{\tau^2}\re{W_0(\alpha\tau)}\!+\!\frac{1}{\tau}\re{\frac{\alpha}{\alpha\tau\!+\!\ee^{W_0(\alpha\tau)}}}\big],
\end{align*}
which can also be represented as 
\begin{align*}
\frac{\text{d}\,g(\alpha\tau)}{\text{d}\tau}&=-\frac{1}{\re{\alpha}\tau^2}\re{\frac{\alpha\tau\,W_0(\alpha\tau)}{\alpha\tau+\ee^{W_0(\alpha\tau)}}}\\&=-\frac{1}{\re{\alpha}\tau^2}\re{\frac{W_0^2(\alpha\tau)}{1+W_0(\alpha\tau)}}.
\end{align*}
Let $W_0(\alpha\tau)=\mathsf{w}+{\ii}\,\mathsf{u}$, where $\mathsf{u}\in(-\pi,\pi)$. Then, for $x=\alpha \tau \neq- \frac{1}{\ee}$, we can write
\begin{align}\label{eq::dot-gx}
 \frac{\text{d}\,g(\alpha\tau)}{\text{d}\tau}=&-\frac{1}{\re{\alpha}\tau^2}{\frac{\mathsf{w}^3+\mathsf{w}^2-\mathsf{u}^2+\mathsf{w}\mathsf{u}^2}{(\mathsf{w}+1)^2+\mathsf{u}^2}}\nonumber\\
 =&-\frac{1}{\re{\alpha}\tau^2}\,\frac{(\mathsf{w}^2+\mathsf{u}^2)\,\mathsf{w}+(\mathsf{w}^2-\mathsf{u}^2)}{(\mathsf{w}+1)^2+\mathsf{u}^2}.
\end{align}
At $\alpha\tau\!=\!-\frac{1}{\ee}$, the right and the left derivative of $g(\alpha\tau)$~along $x=\alpha\tau$ are obtained as follows. 
For $x\!=\!\alpha\tau\in[-\frac{1}{\ee},0]$, $W_0(x)\!\in\!\real$. Thus, by setting $\mathsf{u}=0$,~\eqref{eq::dot-gx} gives
\begin{align}\label{eq::rate_scalar_real_pos}
\frac{\text{d}\,g(\alpha\tau)}{\text{d}\tau}=\frac{1}{|\alpha|\,\tau^2}\,\frac{\mathsf{w}^2}{(\mathsf{w}+1)},~\quad \tau\in(0,\frac{1}{\ee|\alpha|}),
\end{align}
thus
\begin{align}\label{eq::rate_g_scalar_right_star}
&\lim_{\tau\to\frac{1}{\ee|\alpha|}\,\!^-}\!\!\frac{\text{d}\,g(\alpha\,\tau)}{\text{d}\tau}=\lim_{\mathsf{w}\to-1^+}\ee^2|\alpha|\,\frac{\mathsf{w}^2}{(\mathsf{w}+1)}=+\infty.
\end{align}
For any $x=\alpha\tau\in(-\infty,-\frac{1}{\ee})$, $W_0(x)=\mathsf{w}+{\ii}\,\mathsf{u}$ is a complex number with $\mathsf{u}\in(0,\pi)$, and satisfies
\begin{align*}
        (\mathsf{w}+{\ii}\,\mathsf{u})\ee^{\mathsf{w}+\mathsf{u}\,{\ii}}=\alpha\tau\Leftrightarrow\begin{cases}\ee^{\mathsf{w}}\,(\mathsf{w}\cos(\mathsf{u})-\mathsf{u}\sin(\mathsf{u}))=\alpha\tau,\\
        \ee^{\mathsf{w}}\,(\mathsf{u}\cos(\mathsf{u})+\mathsf{w}\sin(\mathsf{u}))=0.
        \end{cases}
\end{align*}

Therefore, for $\alpha\tau\in(-\infty,-\frac{1}{\ee})$, for which we always have  $\mathsf{u}\neq0$, we have $\mathsf{w}=-\mathsf{u}\cos(\mathsf{u})/\sin(\mathsf{u})$ and
\begin{align}\label{eq::rate_g_scalar_left_star}
&\frac{\text{d}\,g(\alpha\,\tau)}{\text{d}\tau}=\frac{1}{|\alpha|\,\tau^2}\,\frac{\mathsf{u}^2(-\mathsf{u}\frac{\cos(\mathsf{u})}{\sin(\mathsf{u})}+\cos(2\,\mathsf{u}))}{(-\mathsf{u}\,\cos(\mathsf{u})\!+\!\sin(\mathsf{u}))^2+\mathsf{u}^2\,\sin^2(\mathsf{u})}.
\end{align}
Using the L'Hospital's rule \cite[Theorem 5.5.2]{DC:05}, we can then colcude that 
\begin{align}\label{eq::rate_g_scalar_left_star}
&\lim_{\tau\to\frac{1}{\ee|\alpha|}\,\!^+}\!\!\frac{\text{d}\,g(\alpha\,\tau)}{\text{d}\tau}=\\
&~\ee^2|\alpha|~\lim_{\mathsf{u}\to0}\,\frac{\frac{\mathsf{u}^2}{\sin^2(\mathsf{u})}(-\mathsf{u}\frac{\cos(\mathsf{u})}{\sin(\mathsf{u})}+\cos(2\mathsf{u}))}{(-\mathsf{u}\frac{\cos(\mathsf{u})}{\sin(\mathsf{u})}+1)^2+\mathsf{u}^2}=-\frac{5\,\ee^2\,|\alpha|}{3}.\nonumber
\end{align}

Next is an intermediate result that we use in the proof Lemma~\ref{lem::g_0} and Lemma~\ref{lem::desired_region_complex}. To establish this result, we rely on the Jordan Curve Theorem, which states that a simple and closed curve divides the plane into an ``interior" region bounded by the curve and an ``exterior" region containing all of the nearby and far away exterior points~\cite{GB-WJ-RM-FR:75}.

\begin{lem}[Some of the properties of  level set $\mathcal{C}_0$ and superlevel set $\mathcal{S}_0$]\label{lem::C0S0_g(x)}
Consider the level set $\mathcal{C}_0$~\eqref{eq::C_c} and the superlevel set $\mathcal{S}_0$~\eqref{eq::S_c}. Let $\bar{\mathcal{C}}_0=\mathcal{C}_0\cup\{(0,0)\}$ and $\bar{\mathcal{S}}_0=\mathcal{S}_0\cup\{(0,0)\}$. Then, the following assertions hold.
\begin{itemize}
 \item[(a)]  $\bar{\mathcal{C}}_0$ is a simple closed curve in $\real^2$ that is symmetric about the $\mathsf{x}$ axis and intersects the $\mathsf{x}$ axis at only two points $\mathsf{x}=0$ and $\mathsf{x}=-\frac{\pi}{2}$. Moreover, it passes through the origin tangent to the $\mathsf{y}$ axis.
 
 \item[(b)] $\bar{\mathcal{S}}_0=\bar{\mathcal{C}}_0\cup\intor(\bar{\mathcal{C}}_0)$, and is a compact convex subset of $\real^2$.
 
 \item[(c)] $\mathcal{C}_c\subset\intor(\bar{\mathcal{C}}_0)\subset\bar{\mathcal{S}}_0$ for $c>0$ and $\mathcal{C}_c\subset\extor(\bar{\mathcal{C}}_0)$ for $c<0$.
\end{itemize}
\end{lem}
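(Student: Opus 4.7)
The plan is to reduce the defining equation $g(x)=0$ to an explicit real parametrization using the principal branch of $W_0$. Since $\re{x}<0$ on $\mathcal{C}_0$, the equation $g(x)=0$ is equivalent to $\re{W_0(x)}=0$, i.e., $W_0(x)=\ii\mathsf{u}$ for some $\mathsf{u}\in(-\pi,\pi)\setminus\{0\}$ by \eqref{eq::W_0_facts_b}. Inverting through $x=W_0(x)\,\ee^{W_0(x)}$ yields the parametric curve $\mathsf{x}(\mathsf{u})=-\mathsf{u}\sin\mathsf{u}$, $\mathsf{y}(\mathsf{u})=\mathsf{u}\cos\mathsf{u}$. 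The conjugation property \eqref{eq::W_0_conjugate_prop}, together with $W_0(0)=0$ and analyticity, forces $\im{W_0(x)}$ and $\im{x}$ to share sign off the real axis; hence $W_0(x)=\ii\mathsf{u}$ is the \emph{principal} branch value exactly when $\mathsf{u}$ and $\mathsf{y}(\mathsf{u})=\mathsf{u}\cos\mathsf{u}$ have matching signs, i.e., when $\cos\mathsf{u}\geq 0$. This restricts the admissible parameter range to $\mathsf{u}\in[-\pi/2,\pi/2]$, with the endpoints $\mathsf{u}=\pm\pi/2$ both mapping to the single point $(-\pi/2,0)$ and $\mathsf{u}=0$ corresponding to the appended origin, so $\bar{\mathcal{C}}_0$ is the continuous image of $[-\pi/2,\pi/2]$ with endpoints identified.

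With the parametrization in hand, the four geometric claims in (a) follow directly. Symmetry about the $\mathsf{x}$-axis comes from $(\mathsf{x}(-\mathsf{u}),\mathsf{y}(-\mathsf{u}))=(\mathsf{x}(\mathsf{u}),-\mathsf{y}(\mathsf{u}))$; the intersections with the $\mathsf{x}$-axis solve $\mathsf{u}\cos\mathsf{u}=0$ on $[-\pi/2,\pi/2]$, producing exactly $\mathsf{u}=0$ (origin) and $\mathsf{u}=\pm\pi/2$ (the point $(-\pi/2,0)$); and the Taylor expansion $\mathsf{x}(\mathsf{u})=-\mathsf{u}^2+O(\mathsf{u}^4)$, $\mathsf{y}(\mathsf{u})=\mathsf{u}+O(\mathsf{u}^3)$ gives the local form $\mathsf{x}\approx -\mathsf{y}^2$, showing tangency to the $\mathsf{y}$-axis at the origin. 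The simple-closed property relies on the key identity $\mathsf{x}(\mathsf{u})^2+\mathsf{y}(\mathsf{u})^2=\mathsf{u}^2$, which shows $|\mathsf{u}|$ is uniquely determined by the point; any coincidence then forces $\mathsf{u}_2=\pm\mathsf{u}_1$, and the $\mathsf{u}_1=-\mathsf{u}_2$ case demands $\mathsf{u}_1\cos\mathsf{u}_1=0$, recovering only the already-identified pair $\pm\pi/2$.

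For (b), the Jordan curve theorem splits $\real^2$ into the bounded interior $\intor(\bar{\mathcal{C}}_0)$ and the unbounded exterior $\extor(\bar{\mathcal{C}}_0)$. Continuity of $g$ (Lemma~\ref{lem::g_continuous}) combined with $\mathcal{C}_0=\{g=0\}$ forces the sign of $g$ to be constant on each component. Since $\lim_{x\to 0}g(x)=1>0$ by \eqref{eq::lim-g-0}, $g>0$ throughout the interior; and Lemma~\ref{lem::g_0} applied with $\alpha=-1$ at $\tau=\pi>\bar{\tau}=\pi/2$ yields $g(-\pi)<0$ at a point of the exterior, so $g<0$ on the exterior. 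This establishes $\bar{\mathcal{S}}_0=\bar{\mathcal{C}}_0\cup\intor(\bar{\mathcal{C}}_0)$; compactness is immediate because $\bar{\mathcal{C}}_0$ is the continuous image of the compact interval $[-\pi/2,\pi/2]$ (closed and bounded) and the interior it encloses is bounded. For convexity, a direct computation from $\mathsf{x}'=-\sin\mathsf{u}-\mathsf{u}\cos\mathsf{u}$, $\mathsf{y}'=\cos\mathsf{u}-\mathsf{u}\sin\mathsf{u}$ and their second derivatives yields the clean identity $\mathsf{x}'(\mathsf{u})\mathsf{y}''(\mathsf{u})-\mathsf{y}'(\mathsf{u})\mathsf{x}''(\mathsf{u})=2+\mathsf{u}^2>0$, so the signed curvature has constant sign along $\bar{\mathcal{C}}_0$ traversed with increasing $\mathsf{u}$ (which is counterclockwise, as a check at $\mathsf{u}=0$ shows), which is equivalent to convexity of the enclosed region.

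Assertion (c) is a one-line consequence: for $c>0$, $\mathcal{C}_c\subset\{g>0\}\cap\complex_{-}^l=\intor(\bar{\mathcal{C}}_0)\subset\bar{\mathcal{S}}_0$, and for $c<0$, $\mathcal{C}_c\subset\{g<0\}\cap\complex_{-}^l=\extor(\bar{\mathcal{C}}_0)$. The hard part will be the branch-cut bookkeeping in the first paragraph: the naive choice $\mathsf{u}\in(-\pi,\pi)\setminus\{0\}$ produces spurious candidates with $|\mathsf{u}|>\pi/2$ which do satisfy $x=\ii\mathsf{u}\,\ee^{\ii\mathsf{u}}$ but correspond to \emph{non-principal} branches of $W_0$ and must be discarded; once this reduction to $\mathsf{u}\in[-\pi/2,\pi/2]$ is justified, everything else reduces to the Jordan curve theorem plus the single trigonometric identity $\mathsf{x}'\mathsf{y}''-\mathsf{y}'\mathsf{x}''=2+\mathsf{u}^2$.
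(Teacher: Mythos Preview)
Your argument is correct and structurally parallel to the paper's: both parametrize $\bar{\mathcal{C}}_0$ explicitly, invoke the Jordan curve theorem, and verify convexity by a curvature calculation. You parametrize in Cartesian form by $\mathsf{u}=\im{W_0(x)}\in[-\pi/2,\pi/2]$, whereas the paper passes to polar coordinates and obtains $r=\theta-\pi/2$, $\theta\in[\pi/2,\pi]$; the two are equivalent (your $|\mathsf{u}|$ is the paper's $r$) and produce the same curvature, with numerator $2+\mathsf{u}^2=r^2+2$. Your connectedness-plus-test-point argument for the sign of $g$ on the Jordan components is a bit slicker than the paper's, which instead uses the already-established convexity to trap each ray $\tau\mapsto\alpha\tau$ inside $\bar{\mathcal{S}}_0$ for $\tau\in(0,\bar\tau)$.

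There is one logical circularity to patch. You cite Lemma~\ref{lem::g_0} to exhibit an exterior point with $g(-\pi)<0$, but in the paper Lemma~\ref{lem::g_0} is itself \emph{proved using} part~(b) of the present lemma, so as written your argument is circular in the paper's dependency order. The repair is one line and is exactly what the paper does for part~(c): set $\mathsf{w}=0$ in~\eqref{eq::dot-gx} to obtain
\[
\frac{\mathrm{d}\,g(\alpha\tau)}{\mathrm{d}\tau}\Big|_{\tau=\bar\tau}=\frac{1}{\re{\alpha}\,\bar\tau^{2}}\cdot\frac{\bar{\mathsf{u}}^{2}}{1+\bar{\mathsf{u}}^{2}}<0,
\]
which shows $g<0$ immediately past $\bar\tau$ along any ray, giving the needed exterior test point without forward reference. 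Alternatively, the paper notes an independent proof of Lemma~\ref{lem::g_0} via the continuity-of-stability theorem~\cite[Proposition 3.1]{SN:01}, which would also break the cycle.
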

\begin{proof}
For a $c\in\real$, by definition, $g(x)=c$ for any $x=(\mathsf{x}+\mathsf{y}\,{\ii})\in\complex_{-}^l$ is equivalent to $\re{W_0(\mathsf{x}+\mathsf{y}{\ii})}=c\,\mathsf{x}$.
In other words, $W_0(\mathsf{x}+\mathsf{y}\,{\ii})=c\,\mathsf{x}+\mathsf{u}\,{\ii}$, where $\mathsf{u}\in(-\pi,\pi)$. Thereby, given property~\eqref{eq::W_0_conjugate_prop} of the Lambert $W_0$ function, each level set $\mathcal{C}_c$, $c\in\real$, is symmetric about the real axis.
Moreover, from the definition of the Lambert $W$ function we can~write 
\begin{align}\label{eq::C_c_rep}
(c\,\mathsf{x}+\mathsf{u}\,{\ii})\ee^{c\,\mathsf{x}+\mathsf{u}{\ii}}=\,\mathsf{x}+\mathsf{y}{\ii}\Leftrightarrow\!\begin{cases}\ee^{c\,\mathsf{x}}\,(c\,\mathsf{x}\cos(\mathsf{u})-\mathsf{u}\sin(\mathsf{u}))=\mathsf{x},\\
\ee^{c\,\mathsf{x}}\,(\mathsf{u}\cos(\mathsf{u})+c\,\mathsf{x}\sin(\mathsf{u}))=\mathsf{y}.
\end{cases}        
\end{align}
For $c=0$, by eliminating $\mathsf{u}$ in~\eqref{eq::C_c_rep} via trigonometric manipulations, we can characterize ${\mathcal{C}}_0$ by
\begin{align*}
      {\mathcal{C}}_0\!=\!\Big\{(\mathsf{x},\mathsf{y})\in\real_{<0}\!\times\!\real&\,\Big|\,
\frac{\mathsf{x}}{\mathsf{y}}=\pm\,\text{tan}\big(\sqrt{\mathsf{x}^2\!+\!\mathsf{y^2}}\,\big),\\
&\quad\quad\quad\quad0<(\mathsf{x}^2\!+\!\mathsf{y^2})<\pi^2 \Big\}.
\end{align*}
In polar coordinates, any $(\mathsf{x},\mathsf{y})\in{\mathcal{C}}_0$, reads as $\mathsf{x}=r\cos(\theta)$ and $\mathsf{y}=r\sin(\theta)$ with 
\begin{subequations}\label{eq::contour_g(x)=c_polar}
\begin{align*}
\begin{cases}
 \tan(r)=\pm\cot(\theta),&\theta\in(\frac{\pi}{2},\frac{3\pi}{2}),
 \\
 0< r<\pi,&\end{cases}
\end{align*} 
\end{subequations}
Therefore, ${\mathcal{C}}_0=\Big\{(\mathsf{x},\mathsf{y})\in\real_{<0}\!\times\!\real\,\Big|\,\mathsf{x}=r\cos(\theta), \mathsf{y}=\pm r\sin(\theta),~r=\,\theta- \frac{\pi}{2},~\theta\in(\pi/2,\pi]\Big\}.$ Evidently, 
\begin{align}\label{eq::C_0}
 \bar{\mathcal{C}}_0\!=\!\Big\{(\mathsf{x},\mathsf{y})\in\real_{\leq}0\!\times\!\real\,\Big|\,&\mathsf{x}=r\cos(\theta),~ \mathsf{y}=\pm r\sin(\theta),\nonumber \\
        &r=\,\theta- \frac{\pi}{2},~\theta\in%\solmaz{[}
        [\pi/2,\pi]\Big\}.
       \end{align}
      From~\eqref{eq::C_0}, for any point on the upper half (resp. positive $\mathsf{y}$) and lower half (resp. negative $\mathsf{y}$) of $\bar{\mathcal{C}}_0$, in polar coordinates, $r$ is a continuous and a bounded function of $\theta$ and $\frac{d\,r}{d\,\theta}=1$ exists on $\theta\in(\pi/2,\pi)$ (resp. $\theta\in(\pi,3\pi/2)$). Therefore, $\bar{\mathcal{C}_0}$ is a simple closed curve. Next, note that on $\bar{\mathcal{C}}_0$, as $\theta\rightarrow\frac{\pi^+}{2}$, and due to symmetry also as $\theta\rightarrow\frac{3\pi^-}{2}$, it follows that $r\rightarrow0$. Therefore, $\bar{\mathcal{C}}_0$ passes through the origin tangent to the $\mathsf{y}$ axis. Also, at $\theta=\frac{\pi}{2}$ and $\theta=\pi$ we have, respectively,  $(\mathsf{x},\mathsf{y})=(0,0)$ and $(\mathsf{x},\mathsf{y})=(-\frac{\pi}{2},0)$. 
       As a result, assertion (a) holds. 
      
      Since $\bar{\mathcal{C}}_0$ is a simple closed curve, it follows from the Jordan Curve Theorem that $\bar{\mathcal{C}}_0$ divides the plane into an interior region bounded by ${\mathcal{C}}_0$ and an exterior region containing all of the nearby and far away exterior points.  Moreover, note that  
       the curvature of the upper half of $\bar{\mathcal{C}}_0$ in a $(\mathsf{x},\mathsf{y})$ plane is $\kappa=\frac{r^2+2r_\theta^2-rr_{\theta\theta}}{(r^2+r_\theta^2)^{\frac{3}{2}}}=\frac{r^2+2}{(r^2+1)^\frac{3}{2}}>0$, where $(.)_\theta=\partial(.)/\partial \theta$~\cite{HSMC:89}. Therefore the upper half curve of $\mathcal{C}_0$ is a convex curve.
       Consequently, $\mathcal{Z}=\bar{\mathcal{C}_0}\cup\text{int}(\bar{\mathcal{C}}_0)$ is a compact convex set (recall that $\mathcal{C}_0$ is symmetric about $\mathsf{x}$ axis). To complete the proof of the statement (b), we show that $\bar{{\mathcal{S}_0}}=\mathcal{Z}$.

       Consider an $\alpha\in\complex^l_{-}$ and recall $\bar{\tau}$ in~\eqref{eq::tau_bar}. Because $(0,0)\in\text{bd}(\mathcal{Z})$ and $(\re{\alpha\bar{\tau}},\im{\alpha\bar{\tau}})\in\text{bd}(\mathcal{Z})$ (recall $g(\alpha\bar{\tau})=0$), it follows from the compact convexity of $\mathcal{Z}$ that  $(\re{\alpha\tau},\im{\alpha\tau})\in\intor(\mathcal{C}_0)$ for $\tau\in(0,\bar{\tau})$ and $(\re{\alpha\bar{\tau}},\im{\alpha\bar{\tau}})\in\extor(\mathcal{C}_0)$ for $\tau\in(\bar{\tau},\infty)$.
       Therefore, given that $g(\alpha\tau)$ is a continuous function of $\tau\in\real_{\geq0}$ (see Lemma~\ref{lem::g_continuous}) and $g(\alpha\tau)$ at $\tau=0$ and $\tau=\bar{\tau}$ is equal to, respectively, $1$ and $0$, we can conclude that $g(\alpha\tau)>0$ for any $\tau\in(0,\bar{\tau})$. 
       This means that at any $(\re{\alpha\tau},\im{\alpha\tau})\in\intor(\mathcal{C}_0)$, we have $g(\alpha\tau)>0$. Consequently, $\mathcal{Z}=\mathcal{S}_0$, which completes the proof of statement (b).
       
       From validity of the statement (b), we can readily deduce that $\mathcal{C}_c\subset\intor(\mathcal{C}_0)\subset\mathcal{S}_0$ for $c>0$.  To complete the proof of the statement (c), 
       we recall from the proof of the statement (b) that for any $\alpha\in\complex^{l}_{-}$, $(\re{\alpha{\tau}},\im{\alpha{\tau}})\in\extor(\mathcal{C}_0)$ for $\tau\in(\bar{\tau},\infty)$, i.e., $g(\alpha\tau)\neq0$ for $\tau\in(\bar{\tau},\infty)$. Then, combined with the fact that $g(\alpha\tau)$ is a continuous function of $\tau\in\real_{\geq0}$  and also that at $\bar{\tau}$ from~\eqref{eq::dot-gx} we have (recall that $\re{\alpha}<0$)
       \begin{align}\label{eq::dg_bar_tau}
           \frac{\text{d}g(\alpha\bar{\tau})}{\text{d}\tau}=\frac{1}{\re{\alpha}\bar{\tau}^2}\,\frac{\bar{\mathsf{u}}^2}{1+\bar{\mathsf{u}}^2}<0,
       \end{align}
       we can conclude that $g(\alpha\tau)<0$ for $\tau\in(\bar{\tau},\infty)$. This means that $\mathcal{C}_c\subset\extor(\bar{\mathcal{C}}_0)$ for $c<0$. To arrive at~\eqref{eq::dg_bar_tau}, we relied on the knowledge that $g(\alpha\bar{\tau})=0$ indicates that $\re{W_0(\alpha\bar{\tau})}=0$, therefore $W_0(\alpha\bar{\tau})=0+\bar{\mathsf{u}}\ii$ for some $\bar{\mathsf{u}}\in(-\pi,\pi)$. 
\end{proof}
The topological properties of the the $\mathcal{C}_0$ level set, which are established in Lemma~\ref{lem::C0S0_g(x)} is evident in Fig.~\eqref{Fig::g_countor}. Using the results of Lemma~\ref{lem::C0S0_g(x)}, we proceed next to present the proof of Lemma~\ref{lem::g_0}.
\begin{proof}[Proof of Lemma~\ref{lem::g_0}]
 Recall from the statement (b) of Lemma~\ref{lem::C0S0_g(x)} that $\mathcal{S}_0$ is a compact convex set. Therefore, since  $(\re{\alpha\bar{\tau}},\im{\alpha\bar{\tau}})\in\text{bd}(\bar{\mathcal{S}}_0)$ (recall $g(\alpha\bar{\tau})=0$) and $(0,0)\in\text{bd}(\bar{\mathcal{S}}_0)$, then $(\re{\alpha\tau},\im{\alpha\tau})\in\intor(\bar{\mathcal{C}}_0)$ for $\tau\in(0,\bar{\tau})$ and $(\re{\alpha\bar{\tau}},\im{\alpha\bar{\tau}})\in\extor(\bar{\mathcal{C}}_0)$ for $\tau\in(\bar{\tau},\infty)$. Then, the proof follows from the statement (c) of Lemma~\ref{lem::C0S0_g(x)}.
\end{proof}
The proof of Lemma~\ref{lem::g_0} can also be deduced from the \emph{continuity stability property} theorem~\cite[Proposition 3.1]{SN:01}
for linear delayed systems. In this regard, consider the dynamical system $\dot{x}=\left[\begin{smallmatrix}\re{\alpha}&\im{\alpha}\\
       -\im{\alpha}&\re{\alpha}\end{smallmatrix}\right]x(t-\tau)$, whose eigenvalues are $\alpha$ and $\conj(\alpha)$. The real part of the RMR of the CE of this system is given by $\re{S_\tau^r}=g(\alpha\tau)\re{\alpha}$ (recall~\eqref{eq::gain_def}).
      It follows from the \emph{continuity stability property} theorem~\cite[Proposition 3.1]{SN:01} and Lemma~\ref{thm::admis_tau} that $\re{s_\tau^r}\in\real_{<0}$ if and only $\tau\in[0,\bar{\tau})$.  
      Therefore, $g(\alpha\tau)> 0$ if and only if  $\tau\in[0,\bar{\tau})$, which along with the fact  $g(\alpha\bar{\tau})=0$ validates the statement of Lemma~\ref{lem::g_0}.
%%%%%%%%%%%%%%%%%%%%%%%%%%%%%%%%
%%%%%%%%%%%%%%%%%%%%%%%%%%%%%%%

Next, we prove Lemma~\ref{lem::alpha-real-positiverate}.
\begin{proof}[Proof of Lemma~\ref{lem::alpha-real-positiverate}]
At $\tau^\star=1/(\ee|\alpha|)$, because $W_0(-\ee^{-1})=-1$, we have   $g(\alpha\tau^\star)=\re{W_0(-\ee^{-1})}/(-\ee^{-1})=\ee$.
Next, note that from~\eqref{eq::lim-g-0},~\eqref{eq::tau_bar} and Lemma~\ref{lem::g_continuous}, we know, respectively, that $\lim_{\tau\to0}g(\alpha\,\tau)=1$, $\tau=\bar{\tau}$ is the unique solution of $g(\alpha\tau)=0$ for $\tau\in\real_{>0}$, and $g(\alpha\tau)$ is a continuous function of $\tau\in\real_{>0}$.
Therefore, to complete the proof of statement (a), we show next that $\frac{d\,g(x)}{d\tau}>0$ for $\tau\in(0,\tau^\star)$ and $\frac{d\,g(x)}{d\tau}<0$ for $\tau\in(\tau^\star,\bar{\tau})$.
For $x=\alpha\tau\in[-\frac{1}{\ee},0]$, we have $W_0(x)\in\real$, and as such by setting $\mathsf{u}=0$ from~\eqref{eq::dot-gx} 
we obtain 
\begin{align}\label{eq::rate_scalar_real_pos}
\frac{\text{d}\,g(\alpha\tau)}{\text{d}\tau}=\frac{1}{|\alpha|\,\tau^2}\,\frac{\mathsf{w}^2}{(\mathsf{w}+1)}>0,~\quad \tau\in(0,\tau^\star).
\end{align}
Moreover, from~\eqref{eq::rate_g_scalar_right_star} we have
\begin{align}\label{eq::rate_g_scalar_right_star2}
&\lim_{\tau\to\tau^\star\,\!^-}\!\!\frac{\text{d}\,g(\alpha\,\tau)}{\text{d}\tau}=+\infty.
\end{align}
For any $x=\alpha\tau\in(-\infty,-\frac{1}{\ee})$, recall that $\frac{\text{d}\,g(\alpha\,\tau)}{\text{d}\tau}$ is given by~\eqref{eq::rate_g_scalar_left_star}. Because for $\alpha\tau\in[\frac{-\pi}{2},-\frac{1}{\ee})$ we have $\mathsf{u}\in(0,\frac{\pi}{2}]$, we can confirm that 
$(-\mathsf{u}\frac{\cos(\mathsf{u})}{\sin(\mathsf{u})}+\cos(2\,\mathsf{u}))<0$ and therefore, we obtain
\begin{align}\label{eq::rate_scalar_real_neg}
&\frac{\text{d}\,g(\alpha\,\tau)}{\text{d}\tau}<0,~\quad \tau\in(\tau^\star,\bar{\tau}].
\end{align}
Next, note that from~\eqref{eq::rate_g_scalar_left_star} we have 
\begin{align}\label{eq::rate_g_scalar_left_star2}%\solmaz{shouldn't this be negative}
&\lim_{\tau\to\tau^\star\,\!^+}\!\!\frac{\text{d}\,g(\alpha\,\tau)}{\text{d}\tau}=-\frac{5\,\ee^2\,|\alpha|}{3}<0.
\end{align}
In light of the observations above, the proof of the statement (b) follows from~\eqref{eq::lim-g-0},~\eqref{eq::tau_tilde_real},~\eqref{eq::rate_scalar_real_pos},~\eqref{eq::rate_scalar_real_neg} and the continuity of $g(\alpha\tau)$ for $\tau\in\real_{>0}$. Finally, statement (c) is deduced from statements (a) and (b), along with~\eqref{eq::rate_g_scalar_right_star2} and~\eqref{eq::rate_g_scalar_left_star2}. Note that since $g(\alpha\tau^\star)=\ee>1$, we have $\tau^\star\in[0,\tilde{\tau})$.
\end{proof}

\medskip
%%%%%%%%%%%%%%%%%%%%%%%%%%%%%%

Next, we use the results of Lemma~\ref{lem::C0S0_g(x)} to establish the proof of Lemma~\ref{lem::desired_region_complex}, which characterizes the variation of $\frac{d\,g(x)}{d\tau}$ along $x=\alpha\tau$ for $\tau\in\real_{>0}$ such that $(\re{x},\im{x})\in\mathcal{S}_0$. 
%-----------
\begin{proof}[Proof of Lemma~\ref{lem::desired_region_complex}]
By virtue of Lemma~\eqref{lem::g_0}, we know that $g(\alpha\tau)>0$ for $\tau\in(0,\bar{\tau})$, and $g(\alpha\bar{\tau})=0$.  Then, it follows from the definition of delay rate gain and  $\re{\alpha}<0$ that  $\re{W_0(\alpha\tau)}<0$ for $\tau\in(0,\bar{\tau})$.
Consequently, for $\tau\in(0,\bar{\tau}]$, we have $(\re{\alpha\tau},\im{\alpha\tau})\in\mathcal{S}_0$ and $(\re{W_0(\alpha\tau)},\im{W_0(\alpha\tau)}\in\real_{\leq0}\times(-\pi,\pi)$.

Next, recall that the derivative of $g(x)$ with respect to $\tau$ along $x=\alpha\tau$, except at $x=-\frac{1}{\ee}$, is given by~\eqref{eq::dot-gx}, where we denoted $W_0(\alpha\tau)=\mathsf{w}+\mathsf{u}\,{\ii}$, which satisfies $\mathsf{u}\in(-\pi,\pi)$.  Since $\re{\alpha}<0$, it is perceived from~\eqref{eq::dot-gx} that the sign of $\frac{d\,g(x)}{d\tau}$ along $x=\alpha\tau$ is defined solely by the sign of $\psi=(\mathsf{w}^2+\mathsf{u}^2)\,\mathsf{w}+(\mathsf{w}^2-\mathsf{u}^2)$, the nominator of~\eqref{eq::dot-gx}. 

Using the polar coordinates and a set of simple trigonometric manipulations, we identity the points $(\mathsf{w},\mathsf{u})\in\real_{\leq0}\times(-\pi,\pi)$ for which $\psi$ retains a zero, a positive and a negative value, as, respectively, $\Gamma$, $\Gamma^{+}$ and $\Gamma^{-}$, where 
\begin{align*}%\label{eq::Gamma}
{\Gamma}=\,&\Big\{(\mathsf{w},\mathsf{u})\in\,\real_{\leq0}\times(-\pi,\pi)\,\big|\,\mathsf{w}=\mathsf{R}\cos(\theta),~\mathsf{u}=\mathsf{R}\sin(\theta),\nonumber\\
&\qquad\quad 3\pi/4\leq\theta\leq5\pi/4 \,,\, \mathsf{R}=-\cos(2\theta)/\cos(\theta)\Big\},\\
{\Gamma}^{+}=\,&\Big\{(\mathsf{w},\mathsf{u})\in\real_{\leq0}\times(-\pi,\pi)\,\big|\,\mathsf{w}=\mathsf{R}\cos(\theta),~\mathsf{u}=\mathsf{R}\sin(\theta),\nonumber\\
&\qquad\quad\quad 3\pi/4\leq\theta\leq5\pi/4 \,,\, \mathsf{R}<-\cos(2\theta)/\cos(\theta)\Big\},\\
{\Gamma}^{-}=\,&\left(\real_{\leq0}\times(-\pi,\pi)\right)\backslash(\Gamma\cup\Gamma^{+}).
\end{align*}
Therefore, for $x=\alpha\tau$ with $\tau\in(0,\bar{\tau}]$ we have 
\begin{align*}
    \frac{\text{d}\,g(x)}{\text{d}\tau}>0 &~~\text{if}~~ (\re{W_0(x)},\im{W_0(x)})\in\Gamma^{+},\\
    \frac{\text{d}\,g(x)}{\text{d}\tau}=0 &~~ \text{if}~~ (\re{W_0(x)},\im{W_0(x)})\in({\Gamma}\backslash\{(-1,0)\}),\\
    \frac{\text{d}\,g(x)}{\text{d}\tau}<0&~~ \text{if}~~ (\re{W_0(x)},\im{W_0(x)})\in\Gamma^{-}.
\end{align*}
Here we used $W_0(-\frac{1}{\ee})=-1$.

%, $\re{\alpha}<0$ and $\re{W_0(\alpha\tau)}<0$ for $\tau\in(0,\bar{\tau}]$ to conclude first that the sign of $\frac{d\,g(x)}{d\tau}$ is defined solely by the sign of $\psi=(\mathsf{w}^2+\mathsf{u}^2)\,\mathsf{w}+(\mathsf{w}^2-\mathsf{u}^2)$, the nominator of~\eqref{eq::dot-gx}. Then, we used the polar coordinates and simple algebraic manipulations to identity the points $(\mathsf{w},\mathsf{u})\in\real_{\leq0}\times(-\pi,\pi)$ for which $\psi$ retains a positive, zero and negative value, respectively, as $\Gamma^{+}$, $\Gamma$ and $\Gamma^{-}$.

Now, let $\alpha\tau=r\,{\ee}^{\phi\ii}$ and $W_0(\alpha\tau)=\mathsf{R}\ee^{\theta\ii}=\mathsf{R}\cos(\theta)+\mathsf{R}\sin(\theta)\ii$, in which $-\pi<\mathsf{R}\sin(\theta)<\pi$ by definition. Then, using the relation $W_0(\alpha\tau)\,\ee^{W_0(\alpha\tau)}=\alpha\tau$ we obtain that
%\begin{subequations}\label{eq::x-W}
\begin{align*}
    r&=\,\mathsf{R}\,\ee^{\mathsf{R}\cos(\theta)},\label{eq::R}\\
    \phi&=\theta+\mathsf{R}\,\sin(\theta).
\end{align*}
%\end{subequations}
Then, recalling the definition of $\Gamma$, $\Gamma^{+}$ and $\Gamma^{-}$, the proof of part (a) follows from the observation that for $x=\alpha\tau$, $\tau\in(0,\bar{\tau}]$  we have
\begin{alignat*}{3}
(\re{\alpha\tau},\im{\alpha\tau})&\in\intor(\Lambda)&\Rightarrow~~(\mathsf{w},\mathsf{u})\in\Gamma^{+},\\
(\re{\alpha\tau},\im{\alpha\tau})&\in\Lambda&\Rightarrow~~(\mathsf{w},\mathsf{u})\in\Gamma~~,\\
(\re{\alpha\tau},\im{\alpha\tau})&\in(\mathcal{S}_0\backslash(\intor(\Lambda)\cup\Lambda))~~&\Rightarrow~~(\mathsf{w},\mathsf{u})\in\Gamma^{-},
\end{alignat*}
where $(\mathsf{w},\mathsf{u})=(\re{W_0(\alpha\tau)},\im{W_0(\alpha\tau)})$. Recall here that $W_0$ function is injective.

To prove statement (b) and (c) we proceed as follows. We note that in~\eqref{thm::desired_region_complex} $r$ and $\phi$ both are a differentiable and a bounded function of $\theta\in[\frac{3\pi}{4},\frac{5\pi}{4}]$. Moreover, 
\begin{align}%\label{eq::phi_drev}
\frac{\text{d}\,\phi}{\text{d}\,\theta}=&1+2\tan(\theta)\,\ee^{-\cos{2\theta}}\sin(2\theta)\,(1-\cos(2\theta))-\nonumber\\&\cos(2\theta)\,\ee^{-\cos(2\theta)}\,(1+\tan^2(\theta))>0,
\end{align}

for any $\theta\in[\frac{3\pi}{4},\frac{5\pi}{4}]$.  Therefore, $\phi$ in~\eqref{thm::desired_region_complex} has a one-to-one correspondence with $\theta\in[\frac{3\pi}{4},\frac{5\pi}{4}]$, which in turn indicates that $r$ in~\eqref{thm::desired_region_complex} has a one-to-one correspondence with $\phi$. Moreover, since  $\phi(\frac{3\pi}{4})=\frac{3\pi}{4}$ and $\phi(\frac{5\pi}{4})=\frac{5\pi}{4}$, we have $\phi\in[\frac{3\pi}{4},\frac{5\pi}{4}]$. In addition, we can also conclude that $\frac{\text{d}r}{\text{d}\phi}=\frac{\text{d}r}{\text{d}\theta}/\frac{\text{d}\phi}{\text{d}\theta}$ exists and is finite at every $\theta\in[\frac{3\pi}{4},\frac{5\pi}{4}]$. Combined with $r$ satisfying $r(\frac{3\pi}{4})=r(\frac{5\pi}{4})=0$, we can then conclude that $\Lambda$ is a simple closed curve. Therefore, it follows from the Jordan Curve theorem that $\intor(\Lambda)\cup\Lambda$ is a connected compact subset of $\real^2$. In light of the preceding observations, we  make the following conclusions. %as $\theta\rightarrow\frac{3\pi}{4}$ (resp. $\theta\rightarrow\frac{5\pi}{4}$) we have $\phi\rightarrow\frac{3\pi}{4}$ (resp. $\phi\rightarrow\frac{5\pi}{4}$). Moreover, 
The ray $(\re{\alpha\tau},\im{\alpha\tau})$, $\tau\in(0,\bar{\tau}]$, intersects $\Lambda$ if and only if $\frac{3\pi}{4}<\text{arg}(\alpha)<\frac{5\pi}{4}$. 
Therefore, if $\text{arg}(\alpha)\not\in(\frac{3\pi}{4},\frac{5\pi}{4})$, we have $(\re{\alpha\tau},\im{\alpha\tau})\in(\mathcal{S}_0\backslash(\intor(\Lambda)\cup\Lambda))$  for $\tau\in(0,\bar{\tau}])$. Then, the proof of the statement (b) follows from the statement (a). 
On  the  other hand,  if  $\text{arg}(\alpha)\in(\frac{3\pi}{4},\frac{5\pi}{4})$,  then due to the one-to-one  correspondence between $r$ and $\phi$, ray  $(\re{\alpha\tau},\im{\alpha\tau})$, $\tau\in(0,\bar{\tau}]$ intersects $\Lambda$ at a unique point. Let this point correspond to $\tau^\star\in(0,\bar{\tau}]$, i.e., $(\re{\alpha\tau^\star},\im{\alpha\tau^\star})\in\Lambda$. Then,~\eqref{eq::tau-star-complex} is deduced from the definition of $\Lambda$ in~\eqref{thm::desired_region_complex}. Next, note that from compactness of $\intor(\Lambda)\cup\Lambda)$ and the fact that $(\re{\alpha\tau},\im{\alpha\tau})$, $\tau\in(0,\bar{\tau}]$, intersects $\Lambda$ at a unique point, it follows that $(\re{\alpha\tau},\im{\alpha\tau})\in\intor(\Lambda)$ for $\tau\in(0,\tau^\star)$ and $(\re{\alpha\tau},\im{\alpha\tau})\in(\mathcal{S}_0\backslash(\intor(\Lambda)\cup\Lambda))$. Thereby, by virtue of the statement (a) we conclude that $\frac{\text{d} g(\alpha\tau)}{\text{d}\tau}>0$ for $\tau\in(0,\tau^\star)$, and 
$\frac{\text{d} g(\alpha\tau)}{\text{d}\tau}<0$ $\tau\in(\tau^\star,\bar{\tau}]$. If $\alpha\in\real_{<0}$, then $\arg(\alpha)=\pi$. Therefore, $\tau^\star$ in~\eqref{eq::tau-star-complex} becomes equal to $\frac{1}{\ee|\alpha|}$, and consequently, ~\eqref{eq::drev_g_at_tau_star} follows from~\eqref{eq::rate_g_scalar_right_star} and~\eqref{eq::rate_g_scalar_left_star}. Lastly, when $\alpha\not\in\real_{<0}$, since  $(\re{\alpha\tau^\star},\im{\alpha\tau^\star})\in\Lambda\backslash\{(-\frac{1}{\ee},0)\}$, $\frac{\text{d} g(\alpha\tau)}{\text{d}\tau}=0$ at $\tau=\tau^\star$ is deduced from the statement (a). 
\end{proof}

Figure~\ref{Fig::stab_complex} depicts $\Lambda$ in~\eqref{thm::desired_region_complex} (red curve) in a $(\mathsf{x},\mathsf{y})$ plane along with the level sets $\mathcal{C}_1$ (green curve) and $\mathcal{C}_0$ (blue curve).  As one can expect (given~\eqref{eq::g0_bar_tau} and the statement (a) of Lemma~\ref{lem::desired_region_complex}), $\Lambda$ is located inside $\mathcal{C}_1$ and between the lines $\mathsf{y}=\pm\mathsf{x}$.
It is interesting to note that lines $\mathsf{y}=\pm\mathsf{x}$ are also tangent to $\mathcal{C}_1$ at the origin. This observation can be verified as follows. Consider $(\mathsf{x}_1,\mathsf{y}_1)\in\mathcal{C}_1$ and let $\mathsf{x}_1+\mathsf{y}_1\ii=\mathsf{r}\,\ee^{{\ii}\theta}$. Then, for $(\mathsf{x}_1,\mathsf{y}_1)$ in the close neighborhood of the origin from~\eqref{eq::lim-g-0}) we expect that $\lim_{r\to0}-\cos(2\,\theta)+\frac{3}{2}\mathsf{r}\cos(3\,\theta)-\frac{8}{3}\mathsf{r}^2\cos(4\,\theta)+\cdots=0  $. This limit is possible only if $\theta\to\frac{3\pi}{4}$ and $\theta\to\frac{5\pi}{4}$ (solution of $\cos(2\,\theta)=0$ for $\theta\in[\pi/2,3\pi/2]$). This verifies that as $(\mathsf{x}_1+\mathsf{y}_1\ii)\to0$ on $\mathcal{C}_1$, $\mathcal{C}_1$  become tangent to the lines $\mathsf{y}=\pm \mathsf{x}$.

 %\begin{align}\label{eq::theta_1}\lim_{r\to0}-\cos(2\,\theta)+\frac{3}{2}\mathsf{r}\cos(3\,\theta)-\frac{8}{3}\mathsf{r}^2\cos(4\,\theta)+\cdots=0.  
 %\end{align}

\end{document}